\newtheorem{definition}{Definition}
\newtheorem{theorem}{Theorem}%[section]
\newtheorem{proposition}{Proposition}
\newtheorem{corollary}{Corollary}
\def\BibTeX{{\rm B\kern-.05em{\sc i\kern-.025em b}\kern-.08em
    T\kern-.1667em\lower.7ex\hbox{E}\kern-.125emX}}
\begin{document}
\title{Frames and vertex-frequency representations in graph fractional Fourier domain}

\author{Linbo~Shang and Zhichao~Zhang,~\IEEEmembership{Member,~IEEE}
	\thanks{This work was supported in part by the Foundation of Key Laboratory of System Control and Information Processing, Ministry of Education under Grant Scip20240121; in part by the Foundation of Key Laboratory of Computational Science and Application of Hainan Province under Grant JSKX202401; and in part by the Foundation of Key Laboratory of Numerical Simulation of Sichuan Provincial Universities under Grant KLNS--2024SZFZ005. \emph{(Corresponding author: Zhichao~Zhang.)}}
	\thanks{Linbo~Shang is with the School of Mathematics and Statistics, Nanjing University of Information Science and Technology, Nanjing 210044, China (e-mail: ashanglinbo@163.com).}
	\thanks{Zhichao~Zhang is with the School of Mathematics and Statistics, Nanjing University of Information Science and Technology, Nanjing 210044, China, with the Key Laboratory of System Control and Information Processing, Ministry of Education, Shanghai Jiao Tong University, Shanghai 200240, China, with the Key Laboratory of Computational Science and Application of Hainan Province, Hainan Normal University, Haikou 571158, China, and also with the Key Laboratory of Numerical Simulation of Sichuan Provincial Universities, School of Mathematics and Information Sciences, Neijiang Normal University, Neijiang 641000, China (e-mail: zzc910731@163.com).}}

\markboth{Journal}  %
{How to Use the IEEEtran \LaTeX \ Templates}  % Templates
\maketitle

\begin{abstract}
Vertex-frequency analysis, particularly the windowed graph Fourier transform (WGFT), is a significant challenge in graph signal processing. 
Tight frame theories is known for its low computational complexity in signal reconstruction, 
while fractional order methods shine at unveil more detailed structural characteristics of graph signals.       
In the graph fractional Fourier domain, we introduce multi-windowed graph fractional Fourier frames (MWGFRFF) to facilitate the construction of tight frames. 
This leads to developing the multi-windowed graph fractional Fourier transform (MWGFRFT), enabling novel vertex-frequency analysis methods. 
A reconstruction formula is derived, along with results concerning dual and tight frames.
To enhance computational efficiency, a fast MWGFRFT (FMWGFRFT) algorithm is proposed.
Furthermore, we define shift multi-windowed graph fractional Fourier frames (SMWGFRFF) and their associated transform (SMWGFRFT), exploring their dual and tight frames.
Experimental results indicate that FMWGFRFT and SMWGFRFT excel in extracting vertex-frequency features in the graph fractional Fourier domain, with their combined use optimizing analytical performance. 
Applications in signal anomaly detection demonstrate the advantages of FMWGFRFT.
\end{abstract}

\begin{IEEEkeywords}
Frames, multi-windowed,  tight frames, vertex-frequency representation, graph fractional Fourier domain.
\end{IEEEkeywords}

\section{Introduction}

\IEEEPARstart{S}{ignal} processing (SP) focuses on analyzing and processing information defined in Euclidean domains. 
However, with the rapid advancement of communication technologies and data proliferation \cite{Sandryhaila2014SPM}, modern applications often encounter information residing in non-Euclidean domains.             
SP techniques struggle to address the irregular nature of these domains.            
Graph signal processing (GSP) \cite{a1,a2,Sandryhaila2014TSP,a4,a5,a6} generalizes SP to signals defined on non-Euclidean domains, effectively capturing topological structures through weighted graphs.        
A central challenge in GSP lies in designing dictionaries of atoms and transform methods for accurately identifying and utilizing graph topology.    
Existing approaches primarily rely on graph adjacency matrices rooted in algebraic signal processing and Laplacian matrices derived from spectral graph theory \cite{a7}.         
Currently, the GSP theory mainly includes the following: graph Fourier transform (GFT) \cite{Sandryhaila2014SPM,b2}, graph filters \cite{b3,b4}, graph sampling and recovery \cite{b5,b6}, frequency analysis \cite{Sandryhaila2014TSP}, and fast algorithms \cite{Jestrovic2017sp,b8,b9}.         
These results have been applied to social networks and sensor networks and extended to machine learning.
     
Vertex-frequency analysis is a significant challenge in GSP.
Due to the classical Fourier transform's limitation in capturing time-varying properties, the windowed Fourier transform, also known as short-time Fourier transform, has become a crucial time-frequency analysis tool in SP.
Similarly, because GFT is ineffective for vertex-frequency representation, the windowed graph Fourier transform (WGFT) \cite{Shuman2016acha,c2,c3,c4,c5,c6,c7} is proposed.
Various operators related to WGFT are discussed \cite{Shuman2016acha,c3}, including the convolution operator, modulation operator and translation operator.
\cite{Shuman2016acha} also construct windowed graph Fourier frames (WGFF), i.e., dictionaries of atoms adapted to the underlying graph structure, facilitating effective vertex-frequency analysis.  
If WGFF is tight, the spectrogram can be interpreted as an energy density function of the signal over the vertex-frequency plane \cite{Shuman2016acha}.

Leveraging the knowledge of frames \cite{Casazza2012Springer}, we understand that signal reconstruction requires an additional dual frame calculation.        
By applying tight frames, the need for dual frame calculations can be eliminated, thereby saving the extra computational cost associated with dual frames.
To facilitate the construction of tight frames, multi-windowed graph Fourier frames (MWGFF) \cite{Zheng2021cc} are introduced to develop novel vertex-frequency analysis methods.  
The MWGFF and shift multi-windowed graph Fourier frames are utilized to extract vertex-frequency features of signals in the graph Fourier domain.                           
However, MWGFF is typically ineffective in processing signals in the graph fractional Fourier domain.       
        
To obtain more detailed structural properties of graph signals, a fractional order \cite{d1,d2,d3,d4,d5,d6,Wu2020mpe} is adopted.                               
Fractional graph signal processing is a nascent field that extends GSP techniques by incorporating fractional orders.
The graph fractional Fourier domain— a combination of the fractional transform domain and graph spectral domain— has typically been explored as a potentially enriching research topic.
GFRFT \cite{AKoc2024TSP} and SGFRFT \cite{Wu2020mpe} are advantageous for revealing local features of graph signals.
The windowed graph fractional Fourier transform (WGFRFT) \cite{Yan2021dsp} is introduced to extract vertex-frequency information from signals on weighted graphs. However, WGFRFT cannot effectively extract vertex-frequency information for a fixed window function.
                                
\begin{figure}[!t]
	\centering
	\includegraphics[width=0.6\columnwidth]{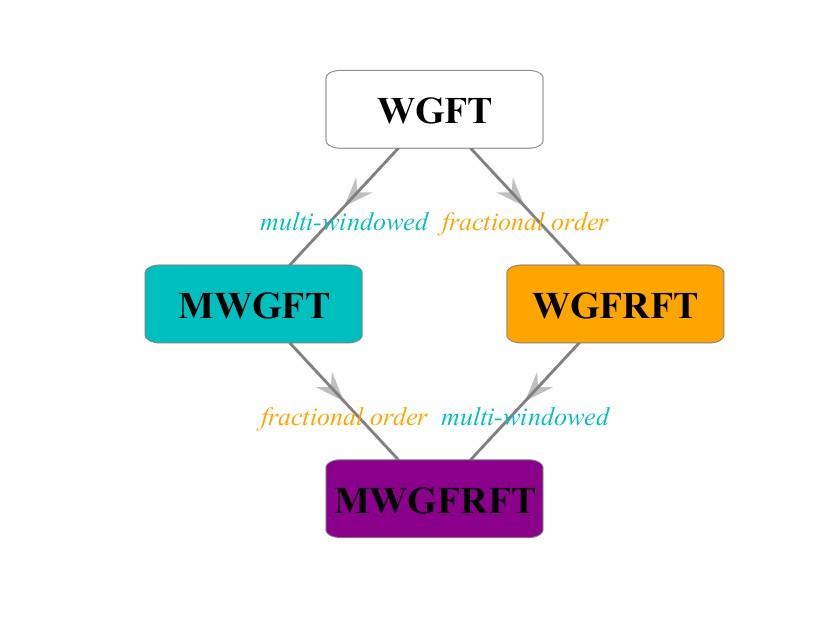}
	\caption{Relationships among WGFT \cite{Shuman2016acha}, MWGFT \cite{Zheng2021cc}, WGFRFT \cite{Yan2021dsp}, and MWGFRFT.}
	\label{figint001}
\end{figure}

In summary, traditional vertex-frequency analysis methods fail to extract vertex-frequency features in the graph fractional Fourier domain for given window functions.
To overcome these limitations, we propose the multi-windowed graph fractional Fourier transform (MWGFRFT), which combines the advantages of multi-window and fractional orders.
MWGFRFT is a generalized transform that includes WGFT \cite{Shuman2016acha}, MWGFT \cite{Zheng2021cc}, and WGFRFT \cite{Yan2021dsp}, as illustrated in Fig.~\ref{figint001}.
With its additional parameters $l$ and $\alpha$, MWGFRFT provides enhanced flexibility and adaptability.
The main contributions of our work can be summarized as follows:
\begin{itemize}
	\item 
	Based on the graph fractional translation operator and the graph fractional modulation operator, we obtain a multi-windowed graph fractional Fourier frame (MWGFRFF).
	Similarly, by applying the graph fractional shift operator and the graph fractional modulation operator, we present a shift multi-windowed graph fractional Fourier frame (SMWGFRFF).
	\item 
	The MWGFRFT and the shift multi-windowed graph fractional Fourier transform (SMWGFRFT) are defined to obtain a vertex-frequency representation in the graph fractional Fourier domain for given window functions.
	\item 
	We propose a fast MWGFRFT (FMWGFRFT) algorithm designed to significantly enhance computational efficiency.
	\item 
	We derive dual frames and tight frames corresponding to MWGFRFF and SMWGFRFF, and also provide a reconstruction formula.
	\item 
	In the graph fractional Fourier domain, we demonstrate an application of signal anomaly detection by applying WGFT \cite{Shuman2016acha}, MWGFT \cite{Zheng2021cc}, WGFRFT \cite{Yan2021dsp}, SMWGFT, SMWGFRFT, and FMWGFRFT. 
	The results highlight the effectiveness of FMWGFRFT in accurately identifying anomalous vertices.
\end{itemize}

The main connections and differences between the current study and previous ones are concluded as follows:
\begin{itemize}
	\item 
	These studies focus on vertex-frequency representations, aiming to extract detailed vertex and frequency features.
	\item
	WGFT \cite{Shuman2016acha}, MWGFT \cite{Zheng2021cc}, and WGFRFT \cite{Yan2021dsp}
	are special cases of the current study. 
	The current study addresses the limitation of WGFT, i.e., its inability to extract detailed vertex or frequency features;
	The current study overcomes the limitation of MWGFT, which cannot effectively perform frequency feature extraction;
	The current study resolves the limitation of WGFRFT, i.e., its inability to effectively extract vertex features.
\end{itemize}  

The remaining sections are organized as follows. 
Section \ref{section2} provides a brief review of the background relevant to this study. 
In Section \ref{section3}, we define MWGFRFF and present several results related to tight frames and introduce the concept of MWGFRFT. 
In Section \ref{section4}, we propose a fast MWGFRFT (FMWGFRFT) algorithm. 
Section \ref{section5} presents results concerning the dual of MWGFRFF and a reconstruction formula.
In Section \ref{section6}, we define SMWGFRFF and discuss related dual and tight frames. In addition, we introduce the SMWGFRFT concept. 
In Section \ref{section7}, the construction of tight MWGFRFF and SMWGFRFF is exemplified. 
In Section \ref{section8}, the experimental results are presented. 
Section \ref{section9} presents the applications related to anomaly detection. 
Finally, Section \ref{section10} concludes this study. 	 

\section{ Preliminaries}\label{section2}

\subsection{  Spectral graph theory}

An undirected weighted graph $\mathcal{G}=\{\mathcal{V},\mathcal{E},W\}$ consists of a finite set of vertices $\mathcal{V}$,
where $N$ is the number of vertices, $\mathcal{E}$ is a set of edges and $W$ is a weighted adjacency matrix.
The non-normalized graph Laplacian is a symmetric difference operator $\mathcal{L}=D-W$ \cite{Shuman2013SPM}, where $D$ is a diagonal degree matrix.  
Suppose that the corresponding Laplacian eigenvalues are arranged in ascending order as: $0=\lambda_{0}<\lambda_{1}\leq\lambda_{2}\leq\cdots\leq\lambda_{N-1}:=\lambda_{max}.$ 
Therefore,
$\mathcal{L}=\chi\Lambda\chi^H,$
where the superscript $H$ denotes the Hermitian transpose operation, $\chi=[\chi_0,\chi_1,\cdots,\chi_{N-1}]$ is a unitary column matrix
and $\Lambda=\operatorname{diag}([\lambda_0,\lambda_1,\cdots,\lambda_{N-1}])$ is a diagonal matrix.

For a signal $\mathbf{f}=[f(1),f(2),\cdots,f(N)]^T$ defined on the graph $\mathcal G$,
the GFT of $\mathbf{f}$ is
$$\widehat{f}(\lambda_k)=\langle \mathbf{f},\chi_{k}\rangle=\sum_{n=1}^{N}f(n)\chi_{k}^{*}(n),k=0,1,\cdots,N-1.$$
The inverse GFT is given by
$$f(n)=\sum_{k=0}^{N-1}\widehat{f}(\lambda_k)\chi_k(n),n=1,2,\cdots,N.$$

We observe that GFT is represented by the matrix $\chi$; 
as an extension, the spectral graph fractional Fourier transform (SGFRFT) is represented as $\chi^\alpha$.
The graph fractional Laplacian operator  $\mathcal{L}_\alpha=\gamma R\gamma^H$, where $0<\alpha\leq1$ \cite{Wu2020mpe}. 
Note that $\chi$ is unitary then
$\gamma=[\gamma_{0},\gamma_{1},\cdots,\gamma_{N-1}]
=\chi^{\alpha}$
is also unitary. 
And    
$R=\operatorname{diag}([r_0, r_1, \cdots, r_{N-1}])=\Lambda^\alpha,$
so that $r_k=\lambda_k^\alpha$ for $k=0,1,\cdots,N-1$.
The SGFRFT of a signal $\mathbf{f}$ defined on the graph $\mathcal{G}$ is expressed as \cite{Wu2020mpe}:
$$\widehat{f}_{\alpha}(r_k)=\sum_{n=1}^{N}f(n)\gamma_{k}^{*}(n),k=0,1,\cdots,N-1,$$
when $\alpha=1$, the SGFRFT degenerates into GFT.
The inverse SGFRFT is given by
$$f(n)=\sum_{k=0}^{N-1}\widehat{f}_\alpha(r_k)\gamma_k(n),n=1,2,\cdots,N.$$
The Parseval relation for the SGFRFT holds. For any signals $\mathbf{f}$ and $\mathbf{g}$ defined on the graph $\mathcal{G}$, we have
$\langle \mathbf{f},\mathbf{g}\rangle=\langle\widehat{\mathbf{f}}_{\alpha}, \widehat{\mathbf{g}}_{\alpha}\rangle.$

\subsection{ Windowed graph fractional Fourier transform\cite{Yan2021dsp}}

\begin{definition}(Graph fractional translation operator) 
	For any signal $\mathbf{g}$ defined on the graph $\mathcal{G}$ and $i\in\{1,2,\cdots,N\}$, the graph fractional translation operator $T_i^{\alpha}$ is defined as
	\begin{equation}{\label{eq1}}
		(T_i^\alpha \mathbf{g})(n)=(\sqrt N)^\alpha 
		\sum_{p=0}^{N-1}\hat{\mathbf{g}}_{\alpha}(r_p) \gamma_{p}^{*}(i)\gamma_{p}(n).
	\end{equation}
\end{definition}

\begin{definition}(Graph fractional modulation operator)
	For any signal $\mathbf{g}$ defined on the graph $\mathcal{G}$ and  $k\in\{0,1,\cdots,N-1\}$, define the graph fractional modulation operator by
	$$(M_k^\alpha \mathbf{g})(n)=(\sqrt{N})^\alpha \mathbf{g}(n)\gamma_k(n).$$
\end{definition}

The windowed graph fractional Fourier atoms generated by a window function $\mathbf{g}$ on $\mathcal{G}$ are defined by
$$\mathbf{g}_{i,k}^{(\alpha)}(n)
=(M_k^\alpha T_i^\alpha \mathbf{g})(n)
=N^\alpha\gamma_k(n)\sum_{p=0}^{N-1}\hat{\mathbf{g}}_\alpha(r_p)\gamma_p^*(i)\gamma_p(n),$$
where $i=1,2,\ldots,N$ and $k=0,1,\ldots,N-1$.
We denote the set of windowed graph fractional Fourier atoms by
\begin{equation}\label{eqwgfa1}
	\mathcal{G}_{\alpha}^{w}=\{\mathbf{g}_{i,k}^{(\alpha)}\}_{i=1,2,\ldots,N;k=0,1,\ldots,N-1;0<\alpha\leq1}.\end{equation}

\begin{definition}(Windowed graph fractional Fourier transform)
	Given a window function $\mathbf{g}\in\mathbb{C}^N$, the SGFRFT of $\mathbf{g}$ is $\hat{\mathbf{g}}_{\alpha}$.
	For a signal $\mathbf{f}$, 
	the WGFRFT of $\mathbf{f}$ is denoted by
	$$Wf(\mathbf{g}_{i,k}^{(\alpha)})=N^{\alpha}\sum_{n=1}^{N}f(n)(\sum_{p=0} ^{N-1}\hat{\mathbf{g}}_{\alpha}^{*}(r_p) \gamma_{p}(i)\gamma_{p}^{*}(n))\gamma_{k}^{*}(n).$$
\end{definition} 

\subsection{  Frames and operators\cite{Casazza2012Springer}}

A frame is a collection of vectors that span the signal space, providing a redundant representation of signals. This redundancy can be utilized to facilitate signal reconstruction.

\begin{definition}
	A family of vectors 
	$\{ \mathbf{f}_k\} _{k= 1}^M\subseteq \mathbb{C}^N \left( M\geq N\right)$ 
	is a finite frame for $\mathbb{C}^{N}$
	if there exist constants $A,B>0$ such that
	$$A\|\mathbf{f}\|^2\leq \sum_{k=1}^M|\langle\mathbf{f},\mathbf{f}_k\rangle|^2
	\leq B\|\mathbf{f}\|^2$$
	holds for every $\mathbf{f}\in\mathbb{C}^N.$
\end{definition}
The constants $A$ and $B$ are called frame bounds,
and if $A=B$, $\{\mathbf{f}_k\}_{k=1}^M$ is called a tight frame.
For a tight frame $\{\mathbf{f}_k\}_{k=1}^M$ with frame bound $C_0$, the following reconstruction formula holds:
$$\mathbf{f}=\frac{1}{C_0}\sum_{k=1}^M\langle\mathbf{f},\mathbf{f}_k\rangle \mathbf{f}_k, ~\mathbf{f}\in\mathbb{C}^N.$$

For a frame $\{\mathbf{f}_k\}_{k=1}^M$, define its synthesis operator by
$T(\{c_k\})=\sum\limits_{k=1}^Mc_k\mathbf{f}_k$
and the analysis operator by
$T^*\mathbf{f}=\{\langle\mathbf{f},\mathbf{f}_k\rangle\}.$
Integrating $T$ and $T^*$ together, which forms a pair of dual operators,
define the frame operator of $\{\mathbf{f}_k\}$ by $S=TT^*$, i.e.
$$S\mathbf{f}=\sum_{k=1}^M\langle\mathbf{f},\mathbf{f}_k\rangle\mathbf{f}_k,
~\mathbf{f}\in\mathbb{C}^N,$$
which is a positive, bounded, invertible and self-adjoint operator.

If there exists another sequence of vectors $\left\{\mathbf{g}_k\right\}_{k=1}^M$ such that
$$\mathbf{f}=\sum\limits_{k=1}^M\langle\mathbf{f},\mathbf{g}_k\rangle\mathbf{f}_k, \ \mathbf{f}\in\mathbb{C}^N,$$
then $\left\{\mathbf{g}_{k}\right\}_{k=1}^{M}$ is called a dual frame of $\left\{\mathbf{f}_{k}\right\}_{k=1}^{M}.$

\section{ Multi-windowed graph fractional Fourier frames}\label{section3}

In this section, we introduce multi-windowed graph fractional Fourier frames (MWGFRFF) and discuss related tight frames. Furthermore, we define multi-windowed graph fractional Fourier transforms (MWGFRFT).

Analog to \eqref{eqwgfa1}, for a finite sequence of window functions
$\mathbf{g}_1,\mathbf{g}_2,\ldots ,\mathbf{g}_L \in \mathbb{C}^N$, the SGFRFT of $\mathbf{g}_l$ is $\hat{\mathbf{g}}_{l\alpha}$,
we define the set of multi-windowed graph fractional Fourier atoms by
\begin{equation}\label{eqma1}
	\mathcal{G}_{l\alpha}^{w}=\{\mathbf{g}_{i,k}^{(l\alpha)}\}_{i=1,2,\ldots,N;k=0,1,\ldots,N-1;l=1,2,\ldots,L;0<\alpha\leq1},
\end{equation}
where
$$\mathbf{g}_{i,k}^{(l\alpha)}(n)
=(M_k^\alpha T_i^\alpha \mathbf{g}_l)(n)
= N^{\alpha}\gamma_{k}(n) \sum_{p=0}^{N-1}\hat{\mathbf{g}}_{l\alpha}(r_p) \gamma_{p}^{*}(i)\gamma_{p}(n).$$

\begin{theorem}\label{theorem1}
	Let $\mathcal{G}_{l\alpha}^{w}$ be the set of multi-windowed graph fractional Fourier atoms defined in \eqref{eqma1}. 
	If $\sum_{l=1}^{L}\left|\hat{\mathbf{g}}_{l\alpha}(0)\right|^{2}\neq0$, 
	then $\mathcal{G}_{l\alpha}^{w}$ is a frame (i.e., MWGFRFF), 
	thus for any signal $\mathbf{f}\in\mathbb{C}^N$,
	$$0<A\|\mathbf{f}\|_2^2\leq\sum_{l=1}^{L}\sum_{i=1}^{N}\sum_{k=0}^{N-1}|\langle\mathbf{f},\mathbf{g}_{i,k}^{(l\alpha)}\rangle|^{2}\leq B\|\mathbf{f}\|_2^2<\infty,$$
	where lower frame bound
	\begin{equation}\label{lfb1}
		0<A=\min_{n\in\{1,2,\cdots,N\}}\{N^\alpha\sum_{l=1}^{L}\|T_{n}^\alpha\mathbf{g}_{l}\|_2^2\},\end{equation}
	and upper frame bound
	\begin{equation}\label{lfb2}
		B=\max_{n\in\{1,2,\cdots,N\}}\{N^\alpha\sum_{l=1}^{L}\|T_{n}^\alpha\mathbf{g}_{l}\|_2^2\}<\infty.\end{equation}
\end{theorem}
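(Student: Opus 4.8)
The plan is to fix an arbitrary $\mathbf f\in\mathbb C^N$, evaluate the frame sum $\sum_{l,i,k}|\langle\mathbf f,\mathbf g_{i,k}^{(l\alpha)}\rangle|^2$ in closed form, and then read off the bounds. First I would unfold the atoms: since $\mathbf g_{i,k}^{(l\alpha)}(n)=(\sqrt N)^\alpha(T_i^\alpha\mathbf g_l)(n)\gamma_k(n)$, the inner product becomes
$$\langle\mathbf f,\mathbf g_{i,k}^{(l\alpha)}\rangle=(\sqrt N)^\alpha\sum_{n=1}^N f(n)\overline{(T_i^\alpha\mathbf g_l)(n)}\,\gamma_k^*(n),$$
which is exactly $(\sqrt N)^\alpha$ times the SGFRFT, evaluated at $r_k$, of the pointwise product $F_i^{(l)}(n):=f(n)\overline{(T_i^\alpha\mathbf g_l)(n)}$. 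This is the decisive reformulation, because the index $k$ now plays the role of a fractional frequency.

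Next I would carry out the sum over $k$ using the Parseval relation for the SGFRFT (valid since $\gamma$ is unitary): $\sum_{k=0}^{N-1}|\widehat{(F_i^{(l)})}_\alpha(r_k)|^2=\|F_i^{(l)}\|_2^2=\sum_{n}|f(n)|^2|(T_i^\alpha\mathbf g_l)(n)|^2$. Summing over $i$ and $l$ and interchanging the order of summation gives
$$\sum_{l,i,k}|\langle\mathbf f,\mathbf g_{i,k}^{(l\alpha)}\rangle|^2=N^\alpha\sum_{n=1}^N|f(n)|^2\Big(\sum_{l=1}^L\sum_{i=1}^N|(T_i^\alpha\mathbf g_l)(n)|^2\Big).$$
The remaining task is to identify the inner double sum with $\sum_l\|T_n^\alpha\mathbf g_l\|_2^2$. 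Expanding $(T_i^\alpha\mathbf g_l)(n)$ through \eqref{eq1} and summing $|\cdot|^2$ over $i$, the orthonormality of the columns $\gamma_p$ collapses the cross terms via $\sum_i\gamma_p^*(i)\gamma_q(i)=\delta_{pq}$, leaving $N^\alpha\sum_p|\widehat{\mathbf g}_{l\alpha}(r_p)|^2|\gamma_p(n)|^2$; the same orthonormality applied to $\|T_n^\alpha\mathbf g_l\|_2^2=\sum_m|(T_n^\alpha\mathbf g_l)(m)|^2$ yields the identical expression, so $\sum_i|(T_i^\alpha\mathbf g_l)(n)|^2=\|T_n^\alpha\mathbf g_l\|_2^2$. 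Hence the frame sum equals $\sum_n|f(n)|^2\,w(n)$ with $w(n):=N^\alpha\sum_l\|T_n^\alpha\mathbf g_l\|_2^2$. With this weighted form in hand, the bounds follow by sandwiching: $\big(\min_n w(n)\big)\|\mathbf f\|_2^2\le\sum_n|f(n)|^2w(n)\le\big(\max_n w(n)\big)\|\mathbf f\|_2^2$, which identifies $A$ and $B$ exactly as in \eqref{lfb1}--\eqref{lfb2}, and $B<\infty$ is automatic on a finite graph.

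I expect the one genuinely delicate point to be the strict positivity $A>0$, i.e. showing $w(n)>0$ for every vertex $n$ (this is also necessary, since $w(n_0)=0$ would be exposed by the test signal $\mathbf f=\mathbf e_{n_0}$, the indicator vector at $n_0$). Writing $w(n)=N^{2\alpha}\sum_{p}|\gamma_p(n)|^2\sum_l|\widehat{\mathbf g}_{l\alpha}(r_p)|^2$ and retaining only the $p=0$ term gives $w(n)\ge N^{2\alpha}|\gamma_0(n)|^2\sum_l|\widehat{\mathbf g}_{l\alpha}(0)|^2$. Here the hypothesis $\sum_l|\widehat{\mathbf g}_{l\alpha}(0)|^2\neq0$ supplies the spectral factor, and the argument is closed by invoking the nonvanishing of the zero-eigenvalue eigenvector $\gamma_0$ (for a connected graph it is the constant DC vector, so $\gamma_0(n)\neq0$ for all $n$). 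The main care is in justifying this last property of $\gamma_0$ within the SGFRFT construction; once it is granted, $w(n)>0$ for all $n$ and therefore $A>0$.
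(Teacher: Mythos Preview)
Your proposal is correct and follows essentially the same route as the paper: rewrite $\langle\mathbf f,\mathbf g_{i,k}^{(l\alpha)}\rangle$ as a scaled SGFRFT of $\mathbf f\circ(T_i^\alpha\mathbf g_l)^*$, use Parseval to collapse the $k$-sum, and then reduce to the weighted form $\sum_n|f(n)|^2 w(n)$. The only minor difference is that where the paper swaps $|(T_i^\alpha\mathbf g_l)(n)|^2$ for $|(T_n^\alpha\mathbf g_l)(i)|^2$ by invoking ``the symmetry of $\mathcal L_\alpha$'', you instead expand both $\sum_i|(T_i^\alpha\mathbf g_l)(n)|^2$ and $\|T_n^\alpha\mathbf g_l\|_2^2$ via orthonormality of $\{\gamma_p\}$ to the common value $N^\alpha\sum_p|\hat{\mathbf g}_{l\alpha}(r_p)|^2|\gamma_p(n)|^2$; this is cleaner and avoids any tacit reality assumption on $\hat{\mathbf g}_{l\alpha}$. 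Your explicit flag that $A>0$ ultimately rests on $\gamma_0(n)\neq 0$ for all $n$ is also well taken---the paper's lower bound $\sum_l\|T_n^\alpha\mathbf g_l\|_2^2\ge\sum_l|\hat{\mathbf g}_{l\alpha}(0)|^2$ tacitly uses exactly this (via the DC eigenvector of the Laplacian on a connected graph).
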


\begin{proof}
 See Appendix \ref{atheorem1}.
\end{proof}

Let $\mathbf{c}=(c_{1},\ldots,c_{N})^{T},$ 
where $c_{n}=N^\alpha\sum\limits_{l=1}^{L}\|T_{n}^\alpha\mathbf{g}_{l}\|_{2}^{2}$. 
By Appendix \ref{atheorem1}, Eq.~\eqref{eqthp1} shows that
$\langle S\mathbf{f},\mathbf{f}\rangle
=\langle\mathbf{c\circ f},\mathbf{f}\rangle,\mathbf{f}\in\mathbb{C}^{N},$
where $\circ$ denotes the entrywise product.
Equivalently, we have
\begin{equation}\label{eqsfc1}
	S\mathbf{f}=\mathbf{c\circ f}
	=\mathbf{D}_{c}\mathbf{f},~\mathbf{f}\in\mathbb{C}^{N},
\end{equation}
where $\mathbf{D}_{c}=diag(\mathbf{c})$ is a diagonal matrix, with its $n$th diagonal entry $D_{nn}=c_{n}$.

\begin{corollary}\label{cor32}
	$\mathcal{G}_{l\alpha}^{w}$ is a tight frame if and only if there exists a constant $C$ such that $N^\alpha\sum_{l=1}^{L}\|T_{n}^\alpha\mathbf{g}_{l}\|_{2}^{2}=C$ for $n= 1, 2,\ldots,N$. \end{corollary}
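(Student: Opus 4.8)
The plan is to leverage the diagonal structure of the frame operator already established in \eqref{eqsfc1}, namely $S = \mathbf{D}_{c}$ with diagonal entries $c_{n} = N^{\alpha}\sum_{l=1}^{L}\|T_{n}^{\alpha}\mathbf{g}_{l}\|_{2}^{2}$. The key observation is the standard frame-theoretic fact that a frame is tight with bound $C$ precisely when its frame operator equals $C$ times the identity, i.e. $S = C\,\mathbf{I}$. I would first recall why this holds: by the definition of the frame operator, $\langle S\mathbf{f},\mathbf{f}\rangle = \sum_{l=1}^{L}\sum_{i=1}^{N}\sum_{k=0}^{N-1}|\langle \mathbf{f},\mathbf{g}_{i,k}^{(l\alpha)}\rangle|^{2}$, so tightness with bound $C$ is equivalent to $\langle S\mathbf{f},\mathbf{f}\rangle = C\|\mathbf{f}\|_{2}^{2}$ for all $\mathbf{f}\in\mathbb{C}^{N}$. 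Since $S$ is self-adjoint, a polarization argument upgrades this quadratic-form identity to the operator identity $S = C\,\mathbf{I}$.

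For the forward direction, I would assume $\mathcal{G}_{l\alpha}^{w}$ is tight with bound $C$, so that $S = C\,\mathbf{I}$; comparing with $S = \mathbf{D}_{c}$ forces $\mathbf{D}_{c} = C\,\mathbf{I}$, and reading off the $n$th diagonal entry yields $c_{n} = N^{\alpha}\sum_{l=1}^{L}\|T_{n}^{\alpha}\mathbf{g}_{l}\|_{2}^{2} = C$ for every $n$. Conversely, if $c_{n} = C$ for all $n$, then $\mathbf{D}_{c} = C\,\mathbf{I}$, hence $S = C\,\mathbf{I}$, and therefore $\sum_{l=1}^{L}\sum_{i=1}^{N}\sum_{k=0}^{N-1}|\langle\mathbf{f},\mathbf{g}_{i,k}^{(l\alpha)}\rangle|^{2} = \langle S\mathbf{f},\mathbf{f}\rangle = C\|\mathbf{f}\|_{2}^{2}$, so the frame is tight.

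Alternatively, and even more directly, one can bypass the operator identity altogether and argue from the explicit bounds of Theorem~\ref{theorem1}: there $A = \min_{n} c_{n}$ and $B = \max_{n} c_{n}$, so tightness ($A = B$) is equivalent to $\min_{n} c_{n} = \max_{n} c_{n}$, which holds if and only if all $c_{n}$ share a common value $C$. Either route is routine; the only point requiring care is the justification that the quadratic-form condition $\langle S\mathbf{f},\mathbf{f}\rangle = C\|\mathbf{f}\|_{2}^{2}$ implies $S = C\,\mathbf{I}$, and even this is sidestepped entirely by the second approach, which relies only on the already-derived closed forms for $A$ and $B$. I therefore expect no substantive obstacle: the corollary is an immediate consequence of the diagonal representation of $S$ in \eqref{eqsfc1}.
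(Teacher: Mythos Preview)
Your proposal is correct and follows essentially the same route as the paper: both arguments rest on the diagonal representation $S=\mathbf{D}_{c}$ from \eqref{eqsfc1} and reduce tightness to the condition that all $c_{n}$ coincide. The paper phrases the final step via the standard fact that the optimal frame bounds are the extreme eigenvalues of $S$ (hence the extreme entries of $\mathbf{c}$), which is exactly your second approach; your first approach via $S=C\,\mathbf{I}$ is an equivalent reformulation.
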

\begin{proof}
	See Appendix \ref{acor32}.
\end{proof}

\begin{corollary}\label{cor33}
	Let $\mathcal{G}_{l\alpha}^{w}$ be the set of multi-windowed graph fractional Fourier atoms defined in \eqref{eqma1}. If there exists a constant $C$ such that 
	$\sum_{l=1}^{L}\left|\hat{\mathbf{g}}_{l\alpha}(r_{p})\right|^{2}=C$ 
	for $p= 0, 1, \ldots , N- 1$, 
	then $\mathcal{G}_{l\alpha}^{w}$ is a tight frame with frame bounds $A=B=N^{2\alpha}C.$ \end{corollary}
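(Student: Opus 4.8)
The plan is to reduce the entire statement to Corollary \ref{cor32}, which already characterizes tightness through the single requirement that the vertex-domain quantity $N^\alpha\sum_{l=1}^{L}\|T_n^\alpha\mathbf{g}_l\|_2^2$ be independent of $n$. Thus the whole task is to show that the spectral hypothesis $\sum_{l=1}^{L}|\hat{\mathbf{g}}_{l\alpha}(r_p)|^2=C$ forces this quantity to equal the constant $N^\alpha C$, so that $N^\alpha\cdot N^\alpha C=N^{2\alpha}C$ supplies the asserted common frame bound.

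First I would compute $\|T_n^\alpha\mathbf{g}_l\|_2^2$ directly from the definition \eqref{eq1} of the fractional translation operator. Writing $(T_n^\alpha\mathbf{g}_l)(m)=(\sqrt N)^\alpha\sum_{p=0}^{N-1}\hat{\mathbf{g}}_{l\alpha}(r_p)\gamma_p^*(n)\gamma_p(m)$ and expanding the squared norm produces a double sum over frequency indices $p$ and $q$. Carrying out the sum over the vertex index $m$ and invoking the orthonormality of the columns of $\gamma$ (which holds since $\gamma$ is unitary, $\gamma^H\gamma=I$, giving $\sum_m\gamma_p(m)\gamma_q^*(m)=\delta_{pq}$) collapses the double sum onto its diagonal, leaving
$$\|T_n^\alpha\mathbf{g}_l\|_2^2=N^\alpha\sum_{p=0}^{N-1}|\hat{\mathbf{g}}_{l\alpha}(r_p)|^2\,|\gamma_p(n)|^2.$$

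Next I would sum over the window index $l$, interchange the two finite sums, and apply the hypothesis to replace $\sum_{l=1}^{L}|\hat{\mathbf{g}}_{l\alpha}(r_p)|^2$ by the constant $C$, obtaining $\sum_{l=1}^{L}\|T_n^\alpha\mathbf{g}_l\|_2^2=N^\alpha C\sum_{p=0}^{N-1}|\gamma_p(n)|^2$. The final ingredient is to recognize $\sum_{p=0}^{N-1}|\gamma_p(n)|^2$ as the squared norm of the $n$th row of $\gamma$, which equals $1$ because the rows of a unitary matrix are likewise orthonormal ($\gamma\gamma^H=I$). This yields $\sum_{l=1}^{L}\|T_n^\alpha\mathbf{g}_l\|_2^2=N^\alpha C$ for every $n$, hence $N^\alpha\sum_{l=1}^{L}\|T_n^\alpha\mathbf{g}_l\|_2^2=N^{2\alpha}C$ uniformly, and Corollary \ref{cor32} then delivers the tight frame with $A=B=N^{2\alpha}C$.

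The calculation is entirely routine; the only point demanding care is using unitarity in its two distinct guises—column orthonormality to collapse the $(p,q)$ double sum, and row orthonormality to evaluate the residual sum over $p$—together with keeping the fractional factors straight so that the two occurrences of $N^\alpha$ combine correctly into $N^{2\alpha}$.
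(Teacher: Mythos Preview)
Your proposal is correct and mirrors the paper's argument almost exactly: both compute $\|T_n^\alpha\mathbf{g}_l\|_2^2=N^\alpha\sum_p|\hat{\mathbf{g}}_{l\alpha}(r_p)|^2|\gamma_p(n)|^2$, apply the spectral hypothesis, and then use row-orthonormality $\sum_p|\gamma_p(n)|^2=1$. The only cosmetic difference is that you package the conclusion by invoking Corollary~\ref{cor32}, whereas the paper substitutes directly into the frame identity \eqref{eqthp1} from the proof of Theorem~\ref{theorem1} to obtain $\sum_{l,i,k}|\langle\mathbf{f},\mathbf{g}_{i,k}^{(l\alpha)}\rangle|^2=N^{2\alpha}C\|\mathbf{f}\|_2^2$; the underlying computation is identical.
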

\begin{proof}
	See Appendix \ref{acor33}.
\end{proof}

\begin{definition}\label{mwgfft1}
	(Multi-windowed graph fractional Fourier transform)
	Given window functions $\mathbf{g}_1,\ldots,\mathbf{g}_L\in\mathbb{R}^N$, the SGFRFT of $\mathbf{g}_l$ is $\hat{\mathbf{g}}_{l\alpha}$.
	Let $\mathcal{G}_{l\alpha}^{w}$ is a multi-windowed graph fractional Fourier frame defined in Theorem \ref{theorem1}. 
	For a signal $\mathbf{f}\in\mathbb{C}^N$, 
	the MWGFRFT 
	$Wf(\mathbf{g}_{i,k}^{(l\alpha)})
	:=\langle\mathbf{f},\mathbf{g}_{i,k}^{(l\alpha)}(n)\rangle$ of $\mathbf{f}$ is
	\begin{equation}\label{eqMW1}Wf(\mathbf{g}_{i,k}^{(l\alpha)})=
		N^{\alpha}\sum_{l=1}^{L}\sum_{n=1}^{N}f(n)(\sum_{p=0} ^{N-1}\hat{\mathbf{g}}_{l\alpha}^{*}(r_p) \gamma_{p}(i)\gamma_{p}^{*}(n))\gamma_{k}^{*}(n),
	\end{equation}
\end{definition}
\noindent when $l=1$, Eq.~\eqref{eqMW1} degenerates into WGFRFT \cite{Yan2021dsp};
when $\alpha=1$, Eq.~\eqref{eqMW1} degenerates into MWGFT \cite{Zheng2021cc};
when $l=\alpha=1$, Eq.~\eqref{eqMW1} degenerates into WGFT \cite{Shuman2016acha}.

\begin{proposition}\label{ledsvf1} 
	(Diagonal clustering of the vertex-frequency representation with fractional order $\alpha\to 0$)
	When $\alpha$ tends to 0, the vertex-frequency representation of MWGFRFT is clustered on the main diagonal, i.e., all values outside the main diagonal equal to 0.
\end{proposition}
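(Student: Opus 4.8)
The plan is to compute the limit of the MWGFRFT coefficient $Wf(\mathbf{g}_{i,k}^{(l\alpha)})$ in \eqref{eqMW1} as $\alpha\to 0$ and to show that every surviving term forces the vertex index $i$, the summation index $n$, and the frequency index $k$ to coincide, so that the coefficient vanishes away from the line $i=k+1$ (the main diagonal of the $N\times N$ vertex-frequency array). The single structural fact driving the argument is that $\gamma=\chi^{\alpha}$ degenerates to the identity matrix as $\alpha\to 0$. Since $\chi$ is unitary, it admits a spectral decomposition $\chi=P\,\operatorname{diag}(\mu_0,\dots,\mu_{N-1})\,P^{H}$ with $|\mu_j|=1$; taking the matrix power gives $\chi^{\alpha}=P\,\operatorname{diag}(\mu_0^{\alpha},\dots,\mu_{N-1}^{\alpha})\,P^{H}$, and because $\mu_j^{\alpha}\to 1$ for every $j$, we obtain $\lim_{\alpha\to 0}\gamma=\lim_{\alpha\to 0}\chi^{\alpha}=I$.

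From this limit I would read off the behaviour of each building block of \eqref{eqMW1}. Writing $[\,\cdot\,]_{n,m}$ for matrix entries, the column limit is $\gamma_p(n)\to [I]_{n,p+1}=\delta_{n,p+1}$, so each fractional eigenvector collapses onto a single coordinate. Consequently the scalar factor $N^{\alpha}\to 1$, and the SGFRFT of the window reduces to pointwise sampling, $\hat{\mathbf{g}}_{l\alpha}(r_p)=\sum_{n=1}^{N}g_l(n)\gamma_p^{*}(n)\to g_l(p+1)$. Substituting these three limits into \eqref{eqMW1}, the inner bracket becomes $\sum_{p=0}^{N-1} g_l^{*}(p+1)\,\delta_{i,p+1}\,\delta_{n,p+1}$, which is nonzero only when $p+1=i=n$, in which case it equals $g_l^{*}(i)$.

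Propagating this through the remaining sums, the factor $\gamma_k^{*}(n)\to\delta_{n,k+1}$ imposes a second constraint, so that the product of Kronecker deltas enforces simultaneously $n=i$ and $n=k+1$, i.e. $i=k+1$. Hence in the limit
$$\lim_{\alpha\to 0} Wf(\mathbf{g}_{i,k}^{(l\alpha)})=\Big(\sum_{l=1}^{L} g_l^{*}(i)\Big)\, f(i)\,\delta_{i,\,k+1},$$
which is identically zero whenever $i\neq k+1$; that is, every entry off the main diagonal of the vertex-frequency representation vanishes, as claimed. The only point requiring care is the first step: one must justify via the spectral (holomorphic functional) calculus that the matrix power $\chi^{\alpha}$ of the unitary $\chi$ is continuous in $\alpha$ and tends to $I$ as $\alpha\to 0$. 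Writing $\mu_j=e^{\mathrm{i}\theta_j}$, each factor $\mu_j^{\alpha}=e^{\mathrm{i}\alpha\theta_j}$ tends to $1$ regardless of the branch chosen for $\theta_j$, since $\alpha\theta_j\to 0$; thus the branch ambiguity is harmless in the limit. Everything else---passing the limit through the finite sums of \eqref{eqMW1} and collapsing the Kronecker deltas---is elementary because the ambient space is finite-dimensional.
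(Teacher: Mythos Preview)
Your proof is correct and follows essentially the same line as the paper's: both arguments rest entirely on the observation that $\gamma=\chi^{\alpha}\to I$ as $\alpha\to 0$, after which the Kronecker-delta collapse forces the coefficient matrix to be diagonal. The paper carries this out for the FMWGFRFT matrix form (noting the other cases are analogous) and merely asserts $\lim_{\alpha\to 0}\gamma=I$ ``as we all know,'' whereas you work directly with the MWGFRFT sum \eqref{eqMW1} and supply the spectral-calculus justification for that limit; these are refinements, not a different approach.
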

\begin{proof}
	See Appendix \ref{aledsvf1}.
\end{proof}
Note that Proposition \ref{ledsvf1} is also true for FMWGFRFT, MWGFT \cite{Zheng2021cc}, WGFRFT \cite{Yan2021dsp} and WGFT \cite{Shuman2016acha} and so on.

\section{ Fast multi-windowed graph fractional Fourier transform}\label{section4}

To improve computational speed, we introduce a fast multi-windowed graph fractional Fourier transform (FMWGFRFT) algorithm. 
In the classic scenario of signals, the windowed Fourier transform can be calculated through an inverse Fourier transform from the `$\alpha$-domain'(the Fourier transform of the windowed Fourier domain)\cite{Brown2009TSP}.
Applying the idea to graphs \cite{Yan2021dsp}, we define the $G^\alpha$-domain as the  SGFRFT of MWGFRFT in graph fractional Fourier domain:
\begin{equation}\label{eqGa1}
	G^\alpha(k,{k'})=\sum_{l=1}^L \sum_{i=1}^N Wf(\mathbf{g}_{i,{k'}}^{(l\alpha)})\gamma_{k}^*(i).
\end{equation}
Applying \eqref{eqMW1} to \eqref{eqGa1}, similar to \cite{Yan2021dsp}, then we get
$$G^\alpha(k,{k'})=N^\alpha\tilde{f}({k},{k'})\hat{\mathbf{g}}_{l\alpha}^*(r_{k}),$$
where $\tilde{f}({k},{k'})=\sum\limits_{d=1}^Nf(d)\gamma_{k'}^*(d)\gamma_{k}^*(d)$.

\begin{definition}\label{demw6} 
	(Fast multi-windowed graph fractional Fourier transform). 
	The inverse SGFRFT of $G^\alpha$ on the variable $k$ is FMWGFRFT, i.e.,
	$$FWf(\mathbf{g}_{i,{k'}}^{(l\alpha)})
		=\sum_{l=1}^L\sum_{k=0}^{N-1}N^{\alpha}\tilde{f}({k},{k'})\hat{\mathbf{g}}_{l\alpha}^*(r_{k})\gamma_{k}(i). $$
\end{definition}

The calculation algorithm of FMWGFRFT is defined as follows:

(i) The loop about $\hat{\mathbf{g}}_{l\alpha}$ for $l=1,\ldots,L$.
This step has $O(L)$ computational complexity.

(ii) Calculate $\tilde{f}({k},{k'})$ by using matrices:
$$\tilde{F}=[\tilde{f}({k},{k'})]=(F\circ\gamma^H)*\gamma\:,$$
where $\circ$ represents Hadamard multiplication and $*$ denotes the standard matrix multiplication. 
Each row of $F$ is the signal $\mathbf{f}$, and $\gamma^H$ denotes the complex conjugate of the graph fractional basis $\gamma$.
This step has $O(N^3)$ computational complexity.

(iii) Form a matrix such that each row is the SGFRFT of a window function ${\mathbf{g}}_{l}$.
$$\Psi^{l\alpha}=\begin{pmatrix}
	\hat{\mathbf{g}}_{l\alpha}(r_0)&\hat{\mathbf{g}}_{l\alpha}(r_1)&\cdots&\hat{\mathbf{g}}_{l\alpha}(r_{N-1})\\
	\hat{\mathbf{g}}_{l\alpha}(r_0)&\hat{\mathbf{g}}_{l\alpha}(r_1)&\cdots&\hat{\mathbf{g}}_{l\alpha}(r_{N-1})\\
	\vdots&\vdots&\ddots&\vdots\\
	\hat{\mathbf{g}}_{l\alpha}(r_0)&\hat{\mathbf{g}}_{l\alpha}(r_1)&\cdots&\hat{\mathbf{g}}_{l\alpha}(r_{N-1})
\end{pmatrix}$$
This step does not influence the computational complexity.

(iv) Calculate the $G^\alpha$-domain representation.
$$G^\alpha=N^\alpha*(\Psi^{l\alpha})^*\circ\tilde{F}.$$
The computational complexity is $O\left(N^{2}\right)$.

(v) The vertex-frequency content is obtained by applying inverse SGFRFT to each row of $G^\alpha$,i.e.
$$FWf(l)=\gamma * (G^\alpha)^{.T},$$
where $(G^\alpha)^{.T}$ denotes the non-conjugate transpose of $G^\alpha$.
The complexity of this step is $O(N^3).$    

(vi) Perform matrix addition on these $L$ matrices.
The complexity of this step is $O(LN^2).$

To sum up, the computational complexity of FWGFRFT is $O(LN^3)$. 
As $N$ increases, this computational complexity is a significant improvement, 
compared with MWGFRFT algorithm and MWGFT \cite{Zheng2021cc} algorithm which have the computational complexity of $O(LN^4)$.

\section{ Dual of multi-windowed graph fractional Fourier frames}\label{section5}

In this section, we obtain a reconstruction formula by introducing the dual of multi-windowed graph fractional Fourier frames. 
In addition, the canonical dual is also introduced.

Let $\tilde{\mathcal{G}}_{l\alpha}^{w}$ denote the set of multi-windowed graph fractional Fourier atoms, generated by a finite sequence of window functions $\tilde{\mathbf{g}}_1,\tilde{\mathbf{g}}_2,\ldots,\tilde{\mathbf{g}}_N$ in the sense of \eqref{eqma1}. 
$\tilde{\mathcal{G}}_{l\alpha}^{w}$ is called a dual to ${\mathcal{G}}_{l\alpha}^{w}$ if for any $\mathbf{f}\in\mathbb{C}^N$, there exists a constant $C$ such that
\begin{align}\label{eq51}
	\mathbf{f}
	&=C\sum_{l=1}^{L}\sum_{k=0}^{N-1}\sum_{i=1}^{N}\langle\mathbf{f},\tilde{\mathbf{g}}_{i,k}^{(l\alpha)}\rangle\mathbf{g}_{i,k}^{(l\alpha)}
	\nonumber  \\
	&=C\sum_{l=1}^{L}\sum_{k=0}^{N-1}\sum_{i=1}^{N}\langle \mathbf{f},{\mathbf{g}}_{i,k}^{(l\alpha)}\rangle\tilde{\mathbf{g}}_{i,k}^{(l\alpha)}.\end{align}

\begin{theorem}\label{theorem2}
	Suppose that ${\mathcal{G}}_{l\alpha}^{w}$ is $a$ multi-windowed graph fractional Fourier frame as defined in \eqref{eqma1}. If there exists a finite sequence of window functions $\tilde{\mathbf{g}}_1,\tilde{\mathbf{g}}_2,\ldots,\tilde{\mathbf{g}}_L$ and a constant $\mu>0$ such that
	$$\sum_{l=1}^{L}\mathbf{\hat{g}}_{l\alpha}^{*}(r_{p})\mathbf{\hat{\tilde{g}}}_{l\alpha}(r_{p})=\mu,\:p=0,1,\ldots,N-1,$$
	then $\tilde{\mathcal{G}}_{l\alpha}^{w}$ is a dual of ${\mathcal{G}}_{l\alpha}^{w}$.
\end{theorem}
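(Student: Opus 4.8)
The plan is to verify the dual-frame condition \eqref{eq51} directly by substituting the defining expression for the multi-windowed atoms and collapsing the triple sum over $l$, $k$, and $i$ using the unitarity of the fractional basis $\gamma$. First I would fix an arbitrary $\mathbf{f}\in\mathbb{C}^N$ and form the candidate reconstruction $\sum_{l=1}^{L}\sum_{k=0}^{N-1}\sum_{i=1}^{N}\langle\mathbf{f},\tilde{\mathbf{g}}_{i,k}^{(l\alpha)}\rangle\mathbf{g}_{i,k}^{(l\alpha)}$, writing out each atom via $\mathbf{g}_{i,k}^{(l\alpha)}(n)=N^{\alpha}\gamma_k(n)\sum_{p}\hat{\mathbf{g}}_{l\alpha}(r_p)\gamma_p^*(i)\gamma_p(n)$ and the analogous formula for $\tilde{\mathbf{g}}_{i,k}^{(l\alpha)}$. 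The inner product $\langle\mathbf{f},\tilde{\mathbf{g}}_{i,k}^{(l\alpha)}\rangle$ is computed exactly as in Definition \ref{mwgfft1}, yielding a sum over a dummy vertex index weighted by $\hat{\tilde{\mathbf{g}}}_{l\alpha}^*(r_q)$ and the modulation factor $\gamma_k^*$.

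The key step is the order of summation. I would sum over $i$ first: the translation operators contribute factors $\gamma_p^*(i)$ and $\gamma_q(i)$ (from the two atoms), and $\sum_{i=1}^{N}\gamma_q(i)\gamma_p^*(i)=\delta_{p,q}$ by unitarity of $\gamma$. This collapses the two spectral indices to a single index $p$ and leaves a single product $\hat{\tilde{\mathbf{g}}}_{l\alpha}^*(r_p)\hat{\mathbf{g}}_{l\alpha}(r_p)$ inside the $l$-sum. Next I would sum over $k$: the two modulation factors $\gamma_k(n)$ and $\gamma_k^*(m)$ (for vertices $n$ and $m$) combine so that $\sum_{k=0}^{N-1}\gamma_k(n)\gamma_k^*(m)=\delta_{n,m}$, again by unitarity, forcing the two remaining vertex variables to coincide. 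After these two Kronecker collapses, the remaining expression reduces to $N^{2\alpha}\sum_{p}\left(\sum_{l=1}^{L}\hat{\mathbf{g}}_{l\alpha}(r_p)\hat{\tilde{\mathbf{g}}}_{l\alpha}^*(r_p)\right)$ acting on $\mathbf{f}$ through its spectral coefficients.

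At this point the hypothesis $\sum_{l=1}^{L}\hat{\mathbf{g}}_{l\alpha}^*(r_p)\hat{\tilde{\mathbf{g}}}_{l\alpha}(r_p)=\mu$ (equivalently its conjugate, since both sides will be real after summation, or by treating the mirror form) makes the bracketed factor constant in $p$, so it factors out of the spectral sum. What remains is $\mu N^{2\alpha}\sum_{p}\hat{f}_\alpha(r_p)\gamma_p$, which by the inverse SGFRFT is simply $\mu N^{2\alpha}\mathbf{f}$. Choosing the reconstruction constant $C=1/(\mu N^{2\alpha})$ then yields $\mathbf{f}=C\sum_{l,k,i}\langle\mathbf{f},\tilde{\mathbf{g}}_{i,k}^{(l\alpha)}\rangle\mathbf{g}_{i,k}^{(l\alpha)}$, which is exactly the first equality in \eqref{eq51}; the second equality follows by the same computation with the roles of $\mathbf{g}_{l}$ and $\tilde{\mathbf{g}}_{l}$ interchanged, using that the hypothesis is symmetric under conjugation.

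The main obstacle I anticipate is bookkeeping rather than conceptual: one must introduce enough distinct dummy indices (two spectral indices $p,q$, two vertex indices $n,m$, and the summation vertex from each inner product) to carry out the two unitarity reductions without prematurely identifying variables, and one must track the complex conjugates carefully so that the hypothesis appears in precisely the stated form. I would also make explicit the interchange of the finite sums, which is unconditionally valid here since everything is finite-dimensional, so no convergence issue arises.
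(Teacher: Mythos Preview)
Your approach is correct and essentially identical to the paper's: expand both atoms, collapse the $i$-sum and the $k$-sum via the unitarity of $\gamma$ to obtain $\delta_{p,q}$ and $\delta_{m,n}$, apply the hypothesis, and read off $C=1/(N^{2\alpha}\mu)$. One small correction on the last step: after both Kronecker collapses the surviving expression is $N^{2\alpha}f(n)\sum_{p}|\gamma_{p}(n)|^{2}\bigl(\sum_{l}\hat{\mathbf{g}}_{l\alpha}(r_p)\hat{\tilde{\mathbf{g}}}_{l\alpha}^{*}(r_p)\bigr)$, so the final reduction uses the row-unitarity identity $\sum_{p}|\gamma_{p}(n)|^{2}=1$ rather than the inverse SGFRFT on $\hat f_\alpha$.
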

\begin{proof}
	See Appendix \ref{atheorem2}.
\end{proof}
We could also prove that the dual $\tilde{\mathcal{G}}_{l\alpha}^{w}$ of a multi-windowed graph fractional Fourier frame ${\mathcal{G}}_{l\alpha}^{w}$ is also a frame.

\begin{corollary}\label{cor42}
	Suppose that ${\mathcal{G}}_{l\alpha}^{w}$ is a multi-windowed graph fractional Fourier frame as defined in \eqref{eqma1}. 
	If there exists a finite sequence of window functions $\tilde{\mathbf{g}}_{1},\tilde{\mathbf{g}}_{2},...,\tilde{\mathbf{g}}_{L}$ and a constant $\mu>0$
	such that $\sum_{l=1}^{L}\mathbf{\hat{g}}_{l\alpha}^{*}(r_{p})\mathbf{\hat{\tilde{g}}}_{l\alpha}(r_{p})=\mu$ for $p=0,1,\ldots,N-1$, 
	then ${\mathcal{\tilde{G}}}_{l\alpha}^{w}$ is also a multi-windowed graph fractional Fourier frame.
\end{corollary}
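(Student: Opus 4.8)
The plan is to recognize that $\tilde{\mathcal{G}}_{l\alpha}^{w}$ is nothing but another set of multi-windowed graph fractional Fourier atoms built in the sense of \eqref{eqma1}, the only change being that the generating windows are the duals $\tilde{\mathbf{g}}_1,\ldots,\tilde{\mathbf{g}}_L$ rather than $\mathbf{g}_1,\ldots,\mathbf{g}_L$. Consequently Theorem \ref{theorem1} applies to $\tilde{\mathcal{G}}_{l\alpha}^{w}$ verbatim, and the whole task reduces to checking that its single hypothesis, $\sum_{l=1}^{L}|\hat{\tilde{\mathbf{g}}}_{l\alpha}(0)|^{2}\neq 0$, holds for the dual windows.

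To verify this I would specialize the corollary's assumption to the index $p=0$. Because $\lambda_0=0$ and $r_p=\lambda_p^{\alpha}$, we have $r_0=\lambda_0^{\alpha}=0$, so the hypothesis $\sum_{l=1}^{L}\hat{\mathbf{g}}_{l\alpha}^{*}(r_p)\hat{\tilde{\mathbf{g}}}_{l\alpha}(r_p)=\mu$ reads, at $p=0$,
$$\sum_{l=1}^{L}\hat{\mathbf{g}}_{l\alpha}^{*}(0)\,\hat{\tilde{\mathbf{g}}}_{l\alpha}(0)=\mu>0.$$
Were every $\hat{\tilde{\mathbf{g}}}_{l\alpha}(0)$ equal to zero, the left-hand side would vanish, contradicting $\mu>0$; hence at least one $\hat{\tilde{\mathbf{g}}}_{l\alpha}(0)$ is nonzero, and therefore $\sum_{l=1}^{L}|\hat{\tilde{\mathbf{g}}}_{l\alpha}(0)|^{2}>0$. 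This is precisely the condition required by Theorem \ref{theorem1}.

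Invoking Theorem \ref{theorem1} with the dual windows then delivers the conclusion: $\tilde{\mathcal{G}}_{l\alpha}^{w}$ is a frame with strictly positive lower bound $\tilde{A}=\min_{n}\{N^{\alpha}\sum_{l=1}^{L}\|T_n^{\alpha}\tilde{\mathbf{g}}_l\|_2^2\}$ and finite upper bound $\tilde{B}=\max_{n}\{N^{\alpha}\sum_{l=1}^{L}\|T_n^{\alpha}\tilde{\mathbf{g}}_l\|_2^2\}$, so it is a multi-windowed graph fractional Fourier frame. I do not expect any genuine obstacle: the corollary shares its hypotheses with Theorem \ref{theorem2}, and the only point requiring attention is the bookkeeping observation that $r_0=0$, which lets the $p=0$ instance of the dual condition pin down the zero-frequency SGFRFT coefficients of the duals. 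As a less self-contained alternative, one could instead note that by Theorem \ref{theorem2} the system $\tilde{\mathcal{G}}_{l\alpha}^{w}$ already reconstructs every $\mathbf{f}$ through the dual pairing, and then appeal to the standard finite-dimensional fact that any reconstructing system for a frame is itself a frame; but the direct application of Theorem \ref{theorem1} is cleaner.
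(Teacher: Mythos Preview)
Your proposal is correct and follows the same overall strategy as the paper: verify that the dual windows satisfy the hypothesis of Theorem~\ref{theorem1} and then invoke that theorem. The only difference is in how the hypothesis is checked. You specialize the assumption $\sum_{l=1}^{L}\hat{\mathbf{g}}_{l\alpha}^{*}(r_p)\hat{\tilde{\mathbf{g}}}_{l\alpha}(r_p)=\mu$ to $p=0$ and argue by contradiction that not all $\hat{\tilde{\mathbf{g}}}_{l\alpha}(0)$ can vanish, which is exactly the minimal condition Theorem~\ref{theorem1} needs. The paper instead applies the Cauchy--Schwarz inequality at every $p$ to get $\mu^{2}\leq\bigl(\sum_{l}|\hat{\mathbf{g}}_{l\alpha}(r_p)|^{2}\bigr)\bigl(\sum_{l}|\hat{\tilde{\mathbf{g}}}_{l\alpha}(r_p)|^{2}\bigr)$, so both factors are nonzero for all $p$, and in particular at $p=0$. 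Your argument is more elementary and checks only what is required; the paper's Cauchy--Schwarz step yields the slightly stronger intermediate fact that $\sum_{l}|\hat{\tilde{\mathbf{g}}}_{l\alpha}(r_p)|^{2}>0$ for every $p$, though this extra information is not used afterward.
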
 
\begin{proof}
	See Appendix \ref{acor42}.
\end{proof}

\begin{corollary}\label{cor43}
	Suppose that ${\mathcal{G}}_{l\alpha}^{w}$ is a multi-windowed graph fractional Fourier frame as defined in \eqref{eqma1}. 
	Let $\mathbf{c}\in\mathbb{C}^N$ be a vector with $c_n=N^\alpha\sum_{l=1}^{L}\|T_{n}^\alpha\mathbf{g}_{l}\|_{2}^{2}$
	and
	$\mathbf{d}\in\mathbb{C}^{N}$ be a vector with $d_n=\frac1{c_n}$ for $n=1,2,\ldots,N$. 
	Then the canonical dual frame of ${\mathcal{\tilde{G}}}_{l\alpha}^{w}$ is given by
	$${\mathcal{\tilde{G}}}_{l\alpha}^{w}:=
	\{\mathbf{\tilde{g}}_{i,k}^{(l\alpha)}=
	\mathbf{d}\circ \mathbf{g}_{i,k}^{(l\alpha)}\},$$
	where $i= 1, 2, \ldots , N$; $k= 0, 1, \ldots , N- 1$; $l= 1, 2, \ldots , L$; $0<\alpha\leq1$.\end{corollary}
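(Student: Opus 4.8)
The plan is to invoke the standard characterization of the canonical dual frame and then exploit the particularly simple (diagonal) form of the frame operator that was already extracted in \eqref{eqsfc1}. For any frame $\{\mathbf{f}_k\}$ with frame operator $S=TT^*$, the canonical dual frame consists of the vectors $\{S^{-1}\mathbf{f}_k\}$, i.e.\ the unique dual obtained by applying the inverse frame operator to each atom. Accordingly, the whole argument reduces to computing $S^{-1}\mathbf{g}_{i,k}^{(l\alpha)}$ for the frame operator $S$ of $\mathcal{G}_{l\alpha}^{w}$.

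First I would record that, by Theorem \ref{theorem1}, the lower frame bound satisfies $A=\min_n c_n>0$ with $c_n=N^\alpha\sum_{l=1}^{L}\|T_n^\alpha\mathbf{g}_l\|_2^2$, so every entry $c_n$ is strictly positive. Next, \eqref{eqsfc1} shows that the frame operator acts as $S\mathbf{f}=\mathbf{c}\circ\mathbf{f}=\mathbf{D}_c\mathbf{f}$, i.e.\ $S=\mathbf{D}_c=\operatorname{diag}(\mathbf{c})$ is diagonal. Since all $c_n>0$, $\mathbf{D}_c$ is invertible and its inverse is again diagonal, namely $S^{-1}=\mathbf{D}_c^{-1}=\operatorname{diag}(\mathbf{d})=\mathbf{D}_d$ with $d_n=1/c_n$. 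Finally, applying $S^{-1}$ to each atom yields $\tilde{\mathbf{g}}_{i,k}^{(l\alpha)}=S^{-1}\mathbf{g}_{i,k}^{(l\alpha)}=\mathbf{D}_d\mathbf{g}_{i,k}^{(l\alpha)}=\mathbf{d}\circ\mathbf{g}_{i,k}^{(l\alpha)}$, which is exactly the asserted formula.

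There is no serious analytic obstacle here: once the frame operator has been identified as the diagonal matrix $\mathbf{D}_c$ in \eqref{eqsfc1}, the corollary is essentially a one-line consequence of the general canonical-dual formula $S^{-1}\mathbf{f}_k$. The only points warranting care are, first, confirming strict positivity of the $c_n$ so that $\mathbf{D}_c$ is genuinely invertible—this is guaranteed by the frame hypothesis $\sum_{l=1}^{L}|\hat{\mathbf{g}}_{l\alpha}(0)|^2\neq 0$ of Theorem \ref{theorem1}—and, second, verifying that $\mathbf{d}\circ\mathbf{g}_{i,k}^{(l\alpha)}$ indeed satisfies the reconstruction identity \eqref{eq51}, which is automatic because the canonical dual is always a legitimate dual frame. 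If desired, one could check the latter directly by substituting $S^{-1}\mathbf{g}_{i,k}^{(l\alpha)}$ and using self-adjointness of $S^{-1}$ to obtain $\sum_{l,i,k}\langle\mathbf{f},S^{-1}\mathbf{g}_{i,k}^{(l\alpha)}\rangle\mathbf{g}_{i,k}^{(l\alpha)}=S^{-1}S\mathbf{f}=\mathbf{f}$.
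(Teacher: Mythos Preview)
Your proposal is correct and follows essentially the same route as the paper: identify the frame operator as the diagonal matrix $S=\mathbf{D}_c$ via \eqref{eqsfc1}, invert it entrywise to obtain $S^{-1}=\mathbf{D}_d$, and apply $S^{-1}$ to each atom to get the canonical dual $\mathbf{d}\circ\mathbf{g}_{i,k}^{(l\alpha)}$. Your write-up is in fact more careful than the paper's, explicitly justifying the invertibility of $\mathbf{D}_c$ through the positivity of the $c_n$.
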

\begin{proof}
	See Appendix \ref{acor43}.
\end{proof}

\section{ Shift multi-windowed graph fractional Fourier frames} \label{section6}

In GSP, the adjacency matrix acts as a shift operator \cite{Sandryhaila2014SPM}. In graph fractional Fourier domain,
different fractional shift operators are proposed \cite{Yan2022ICSIPC,Yan2023ICSIPC,Ribeiro2022access}. 
Assume that the eigen-decomposition of the graph adjacency matrix is $\mathbf{A}=VJV^{-1}$,
in this paper, define the graph fractional shift operator as
$$S^\alpha=V J^\alpha V^{-1},$$
to facilitate the construction of tight frames in Section \ref{section7} later.
Then we could design a new type of frames and introduce the concept of shift multi-windowed graph fractional Fourier transform (SMWGFRFT).

\begin{theorem}\label{theorem3}
	Given $L$ window vectors $\mathbf{g}_{l}\in\mathbb{C}^{N}$ with $S^\alpha\mathbf{g}_{l}\neq0$ for $l=1,\ldots,L$.
	Define 
	$\mathbf{g}_{i,l}:=\gamma_i\circ(S^\alpha\mathbf{g}_{l}),$
	where $\circ$ denotes the Hadamard product. 
	Also define
	\begin{equation}\label{eqth31}
		\mathbf{c}=(c_1,c_2,\ldots,c_N)^T,\mathrm{with}\:c_k=\tilde{\mathbf{s}}_k\mathbf{G}\tilde{\mathbf{s}}_k^*,
	\end{equation}
	where $\tilde{\mathbf{s}}_k$ is the $k$th row of matrix $S^\alpha$ and 
	$\mathbf{G}= \sum_{i=1}^{L}\mathbf{g}_{i}\mathbf{g}_{i}^{*}$.
	The set of shift multi-window graph fractional Fourier atoms
	\begin{equation}\label{eqth34}
		\mathcal{G}_{l\alpha}^{s}:=\{\mathbf{g}_{i,l}\}_{i=0,1,\ldots,N-1;l=1,2,\ldots,L}\end{equation}
	forms a frame (i.e., SMWGFRFF) for signals defined on $\mathcal{G}$ if and only if $c_k> 0$ for all of elements of $\mathbf{c}$. 
	The optimal lower and upper frame bounds are
	\begin{equation}\label{eqth32}
		A=\min_{k\in\{1,2,...,N\}}c_k ~\text{and}\:B=\max_{k\in\{1,2,...,N\}}c_k,
	\end{equation}
	respectively.
\end{theorem}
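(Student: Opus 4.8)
The plan is to evaluate the frame energy $\sum_{i=0}^{N-1}\sum_{l=1}^{L}\left|\langle\mathbf{f},\mathbf{g}_{i,l}\rangle\right|^2$ in closed form and then read off the bounds. First I would abbreviate $\mathbf{h}_l:=S^\alpha\mathbf{g}_l$, so that $g_{i,l}(n)=\gamma_i(n)\,h_l(n)$, and expand the inner product as $\langle\mathbf{f},\mathbf{g}_{i,l}\rangle=\sum_{n=1}^{N}f(n)\overline{h_l(n)}\,\gamma_i^*(n)$. The key observation is that the right-hand side is exactly the SGFRFT of the pointwise product $\mathbf{f}\circ\overline{\mathbf{h}_l}$ evaluated at the frequency $r_i$; that is, $\langle\mathbf{f},\mathbf{g}_{i,l}\rangle=\widehat{(\mathbf{f}\circ\overline{\mathbf{h}_l})}_\alpha(r_i)$.

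Next I would sum over the frequency index $i$. Since $i$ ranges over all of $\{0,1,\ldots,N-1\}$ and $\gamma$ is unitary, the Parseval relation for the SGFRFT gives $\sum_{i=0}^{N-1}\left|\langle\mathbf{f},\mathbf{g}_{i,l}\rangle\right|^2=\|\mathbf{f}\circ\overline{\mathbf{h}_l}\|_2^2=\sum_{n=1}^{N}|f(n)|^2|h_l(n)|^2$. Summing over $l=1,\ldots,L$ and interchanging the order of summation yields $\sum_{l=1}^{L}\sum_{i=0}^{N-1}\left|\langle\mathbf{f},\mathbf{g}_{i,l}\rangle\right|^2=\sum_{n=1}^{N}|f(n)|^2\sum_{l=1}^{L}|h_l(n)|^2$.

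It then remains to identify the vertex weight $\sum_{l=1}^{L}|h_l(n)|^2$ with $c_n$. Writing $(S^\alpha\mathbf{g}_l)(n)=\tilde{\mathbf{s}}_n\mathbf{g}_l$ as the action of the $n$th row of $S^\alpha$, I would use the identity $|\tilde{\mathbf{s}}_n\mathbf{g}_l|^2=\tilde{\mathbf{s}}_n\mathbf{g}_l\mathbf{g}_l^*\tilde{\mathbf{s}}_n^*$ and sum over $l$ to obtain $\sum_{l=1}^{L}|h_l(n)|^2=\tilde{\mathbf{s}}_n\big(\sum_{l=1}^{L}\mathbf{g}_l\mathbf{g}_l^*\big)\tilde{\mathbf{s}}_n^*=\tilde{\mathbf{s}}_n\mathbf{G}\tilde{\mathbf{s}}_n^*=c_n$. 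Hence the entire frame energy equals the diagonal weighted norm $\sum_{n=1}^{N}c_n|f(n)|^2$.

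Finally I would sandwich this quadratic form between $\min_k c_k$ and $\max_k c_k$ times $\|\mathbf{f}\|_2^2$, which immediately delivers both frame inequalities with the claimed bounds. For the ``if and only if'' and optimality statements I would note that $\mathbf{G}=\sum_{l=1}^{L}\mathbf{g}_l\mathbf{g}_l^*$ is positive semidefinite, so every $c_k\ge 0$; testing on the standard basis vector $\mathbf{e}_{k}$ gives frame energy exactly $c_k$, which shows the bounds $A=\min_k c_k$ and $B=\max_k c_k$ are attained (hence optimal) and that a strictly positive lower bound exists precisely when all $c_k>0$. The only genuinely delicate step is the first one --- correctly recognizing the inner product as a full SGFRFT spectrum so that Parseval applies across the entire range $i=0,\ldots,N-1$; everything after that is bookkeeping with the rank-one decomposition of $\mathbf{G}$.
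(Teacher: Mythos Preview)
Your proposal is correct and follows essentially the same route as the paper: both show that the frame operator is the diagonal matrix $\mathrm{diag}(c_1,\ldots,c_N)$ by invoking the unitarity of $\gamma$ --- you phrase this as Parseval applied to $\mathbf{f}\circ\overline{S^\alpha\mathbf{g}_l}$, while the paper uses the equivalent matrix identity $\sum_i\gamma_i\gamma_i^*=\mathbf{I}_N$ inside a Hadamard-product expression for $S=TT^*$. Your explicit test on standard basis vectors to certify optimality and the ``only if'' direction is a welcome addition; the paper instead appeals to the general fact that the optimal frame bounds are the extreme eigenvalues of the frame operator.
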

\begin{proof}
	See Appendix \ref{atheorem3}.
\end{proof}

\begin{corollary}\label{cor52}
	Suppose that $\mathcal{G}_{l\alpha}^{s}$ is a shift multi-windowed graph fractional Fourier frame as defined in \eqref{eqth34}. Let $\mathbf{c}\in\mathbb{C}^N$ be a vector with $c_k=\tilde{\mathbf{s}}_k\mathbf{G}\tilde{\mathbf{s}}_k^*$ and $\mathbf{d}\in\mathbb{C}^{N}$ be a vector with $d_n= \frac 1{c_n}$ for $n= 1, 2, \ldots , N.$ Then the canonical dual frame of $\mathcal{G}_{l\alpha}^{s}$ is defined as
	$\tilde{\mathcal{G}}_{l\alpha}^{s}=\{\tilde{\mathbf{g}}_{i,l}=\mathbf{d}\circ\mathbf{g}_{i,l}\},$
	where $n= 1, 2, \ldots , N; i= 0, 1, \ldots , N- 1; l= 1, 2, \ldots , L$.
\end{corollary}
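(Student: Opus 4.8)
The plan is to mirror the argument behind Corollary \ref{cor43}: identify the frame operator $S$ of $\mathcal{G}_{l\alpha}^{s}$ as a diagonal matrix, invert it, and apply the standard definition of the canonical dual frame $\{S^{-1}\mathbf{g}_{i,l}\}$ recalled in Section \ref{section2}. First I would show that $S$ acts entrywise, namely $S\mathbf{f}=\mathbf{c}\circ\mathbf{f}=\mathbf{D}_c\mathbf{f}$ for every $\mathbf{f}\in\mathbb{C}^N$, with $\mathbf{D}_c=\operatorname{diag}(\mathbf{c})$. Writing $S\mathbf{f}=\sum_{i=0}^{N-1}\sum_{l=1}^{L}\langle\mathbf{f},\mathbf{g}_{i,l}\rangle\mathbf{g}_{i,l}$ and using $\mathbf{g}_{i,l}(n)=\gamma_i(n)(S^\alpha\mathbf{g}_l)(n)$, the $n$th entry becomes $\sum_{m=1}^{N}f(m)\sum_{l=1}^{L}(S^\alpha\mathbf{g}_l)^*(m)(S^\alpha\mathbf{g}_l)(n)\sum_{i=0}^{N-1}\gamma_i^*(m)\gamma_i(n)$. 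The inner $i$-sum is the $(n,m)$ entry of $\gamma\gamma^H$, which equals $\delta_{nm}$ since $\gamma$ is unitary; this collapses the double sum and gives $(S\mathbf{f})(n)=f(n)\sum_{l=1}^{L}|(S^\alpha\mathbf{g}_l)(n)|^2$.

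Next I would check that $\sum_{l=1}^{L}|(S^\alpha\mathbf{g}_l)(n)|^2$ coincides with $c_n=\tilde{\mathbf{s}}_n\mathbf{G}\tilde{\mathbf{s}}_n^*$: since $\tilde{\mathbf{s}}_n\mathbf{g}_l=(S^\alpha\mathbf{g}_l)(n)$ and $\mathbf{G}=\sum_{l=1}^{L}\mathbf{g}_l\mathbf{g}_l^*$, expanding the quadratic form yields $c_n=\sum_{l=1}^{L}|\tilde{\mathbf{s}}_n\mathbf{g}_l|^2$, as required, so $S=\mathbf{D}_c$. With $S$ diagonal the rest is immediate. Because $\mathcal{G}_{l\alpha}^{s}$ is assumed to be a frame, Theorem \ref{theorem3} guarantees $c_k>0$ for all $k$, so $\mathbf{D}_c$ is invertible with $S^{-1}=\mathbf{D}_c^{-1}=\mathbf{D}_d$, where $d_n=1/c_n$. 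The canonical dual frame is $\{S^{-1}\mathbf{g}_{i,l}\}$, and $S^{-1}\mathbf{g}_{i,l}=\mathbf{D}_d\mathbf{g}_{i,l}=\mathbf{d}\circ\mathbf{g}_{i,l}=\tilde{\mathbf{g}}_{i,l}$, which is exactly the claimed family $\tilde{\mathcal{G}}_{l\alpha}^{s}$.

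The only genuine content is the diagonalization of $S$ in the first step, and even that is routine once the unitarity of $\gamma$ is used to kill the $i$-sum; indeed this computation essentially already underlies the proof of Theorem \ref{theorem3} (it is what produces the frame bounds $\min_k c_k$ and $\max_k c_k$), so I would expect to cite it rather than redo it. I anticipate no real obstacle here; the main points to be careful about are the bookkeeping that reconciles the row-vector expression $\tilde{\mathbf{s}}_n\mathbf{G}\tilde{\mathbf{s}}_n^*$ with the entrywise form $\sum_{l=1}^{L}|(S^\alpha\mathbf{g}_l)(n)|^2$, and verifying $c_n>0$ so that $\mathbf{d}$ (equivalently $\mathbf{D}_d$) is well defined.
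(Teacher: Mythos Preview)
Your proposal is correct and is exactly the approach the paper takes: the paper's proof of Corollary~\ref{cor52} simply says ``similar to Corollary~\ref{cor43},'' and the proof of Corollary~\ref{cor43} in turn cites the diagonal form $S=\mathbf{D}_c$ of the frame operator (established in the proof of Theorem~\ref{theorem3}) and then applies $S^{-1}\mathbf{g}_{i,l}=\mathbf{D}_d\mathbf{g}_{i,l}=\mathbf{d}\circ\mathbf{g}_{i,l}$. You have correctly anticipated that the diagonalization step can be cited from Theorem~\ref{theorem3} rather than reproved.
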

\begin{proof}
	The proof is similar to Corollary \ref{cor43}.
\end{proof}

\begin{corollary}\label{cor53}
	Let $\mathcal{G}_{l\alpha}^{s}$ be a shift multi-windowed graph fractional Fourier frame
	and
	$\tilde{\mathbf{s}}_k$ is the $k$th row of matrix $S^\alpha$. $\mathcal{G}_{l\alpha}^{s}$ is a tight frame if and only if
	$\tilde{\mathbf{s}}_k\mathbf{G}\tilde{\mathbf{s}}_k^*=C$ for $k=1,\ldots,N,$
	where C is a constant number and $\mathbf{G}=\sum_{i=1}^L\mathbf{g}_i\mathbf{g}_{i}^*$.
\end{corollary}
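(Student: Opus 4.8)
The plan is to reduce tightness to the equality of the scalars $c_k$, exploiting the fact, already recorded in Theorem \ref{theorem3}, that the frame operator of $\mathcal{G}_{l\alpha}^{s}$ acts diagonally in the standard basis. Concretely, the identity underlying Theorem \ref{theorem3} is that for every $\mathbf{f}\in\mathbb{C}^N$,
$$\sum_{l=1}^{L}\sum_{i=0}^{N-1}|\langle\mathbf{f},\mathbf{g}_{i,l}\rangle|^{2}=\sum_{k=1}^{N}c_k|f(k)|^{2},\qquad c_k=\tilde{\mathbf{s}}_k\mathbf{G}\tilde{\mathbf{s}}_k^{*}.$$
I would first recall (or re-derive) this: since $\mathbf{g}_{i,l}=\gamma_i\circ(S^\alpha\mathbf{g}_{l})$, the coefficient $\langle\mathbf{f},\mathbf{g}_{i,l}\rangle$ equals the $i$th SGFRFT coefficient of $\mathbf{f}\circ\overline{S^\alpha\mathbf{g}_{l}}$; summing $|\cdot|^{2}$ over $i$ and invoking the Parseval relation for the SGFRFT collapses the $i$-sum to $\sum_n|f(n)|^{2}|(S^\alpha\mathbf{g}_{l})(n)|^{2}$, and then summing over $l$ with $(S^\alpha\mathbf{g}_{l})(n)=\tilde{\mathbf{s}}_n\mathbf{g}_{l}$ turns $\sum_l|(S^\alpha\mathbf{g}_{l})(n)|^{2}$ into $\tilde{\mathbf{s}}_n\mathbf{G}\tilde{\mathbf{s}}_n^{*}=c_n$. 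Equivalently, I may simply invoke that Theorem \ref{theorem3} already supplies the \emph{optimal} frame bounds $A=\min_k c_k$ and $B=\max_k c_k$.

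With this diagonal identity in hand, both implications are immediate. For the ``if'' direction, assume $c_k=C$ for all $k$; then the right-hand side is $C\sum_k|f(k)|^{2}=C\|\mathbf{f}\|^{2}$, so the frame inequalities hold with $A=B=C$ and $\mathcal{G}_{l\alpha}^{s}$ is tight. For the ``only if'' direction, assume $\mathcal{G}_{l\alpha}^{s}$ is tight with common bound $C$; then $\sum_k c_k|f(k)|^{2}=C\|\mathbf{f}\|^{2}$ for every $\mathbf{f}$, and testing against the standard basis vectors $\mathbf{f}=\mathbf{e}_k$ forces $c_k=C$ for each $k$. Alternatively, because $A=\min_k c_k$ and $B=\max_k c_k$ are the optimal bounds, tightness ($A=B$) is exactly $\min_k c_k=\max_k c_k$, which holds precisely when all $c_k$ share a common value $C$.

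I expect no genuine obstacle; the argument parallels the proof of Corollary \ref{cor32} for the translation-based frame, with the diagonal entries $N^\alpha\sum_l\|T_n^\alpha\mathbf{g}_l\|_2^2$ there replaced by $c_k=\tilde{\mathbf{s}}_k\mathbf{G}\tilde{\mathbf{s}}_k^{*}$. The one step requiring care is the passage $\sum_l|(S^\alpha\mathbf{g}_{l})(n)|^{2}=\tilde{\mathbf{s}}_n\mathbf{G}\tilde{\mathbf{s}}_n^{*}$, i.e.\ recognizing that summing the rank-one window contributions $\mathbf{g}_l\mathbf{g}_l^{*}$ reassembles the matrix $\mathbf{G}$; once this diagonalization is made explicit, the result is the elementary observation that a positive diagonal operator is a scalar multiple of the identity exactly when its diagonal is constant.
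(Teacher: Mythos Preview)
Your proposal is correct and follows essentially the same approach as the paper: the paper's proof simply invokes Theorem~\ref{theorem3} to identify the optimal bounds as $A=\min_k c_k$ and $B=\max_k c_k$ with $c_k=\tilde{\mathbf{s}}_k\mathbf{G}\tilde{\mathbf{s}}_k^*$, and then notes that tightness $A=B$ is equivalent to all $c_k$ being equal. Your additional re-derivation of the diagonal identity and the direct test against standard basis vectors are fine but more than the paper supplies.
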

\begin{proof}
	See Appendix \ref{acor53}.
\end{proof}

It is known that 
$S^\alpha
=\begin{pmatrix}
	\tilde{\mathbf{s}}_1\\
	\vdots\\
	\tilde{\mathbf{s}}_N
\end{pmatrix}$, 
where
$\tilde{\mathbf{s}}_i$ is the $i$th row of matrix $S^\alpha$. 
Then we define
$S_i^\alpha=(\tilde{\mathbf{s}}_i)^{.T}$, 
where $(\tilde{\mathbf{s}}_i)^{.T}$ denotes the non-conjugate transpose of $\tilde{\mathbf{s}}_i$.
Given window functions $\mathbf{g}_1,\ldots,\mathbf{g}_L\in\mathbb{C}^N$,
we can define a new set of shift multi-windowed graph fractional Fourier atoms by
\begin{equation}\label{eqsma6}
	\mathcal{G}_{l\alpha}^{sw}=\{\mathbf{g}_{i,k}^{(l\alpha)}\}_{i=1,2,\ldots,N;k=0,1,\ldots,N-1;l=1,2,\ldots,L;0<\alpha\leq1},
\end{equation}
where
$$\mathbf{g}_{i,k}^{(l\alpha)}(n)
=(M_k^\alpha S_i^\alpha \mathbf{g}_l)(n).$$

\begin{definition}\label{s6smwgfft2}
	(Shift multi-windowed graph fractional Fourier transform)
	Given window functions $\mathbf{g}_1,\ldots,\mathbf{g}_L\in\mathbb{C}^N$,
	let $\mathcal{G}_{l\alpha}^{sw}$ is a shift multi-windowed graph fractional Fourier frame defined in \eqref{eqsma6}. 
	For a graph signal $\mathbf{f}\in\mathbb{C}^N$, 
	the SMWGFRFT of $\mathbf{f}$ is denoted by
	$$SWf(\mathbf{g}_{i,k}^{(l\alpha)})
	:=\langle\mathbf{f},\mathbf{g}_{i,k}^{(l\alpha)}(n)\rangle.$$
\end{definition}
It is apparently that the computational complexity of SMWGFRFT is $O(LN^3)$.

\section{ Two types of tight multi-windowed graph fractional Fourier frames}\label{section7}

In this section, we design tight multi-windowed graph fractional Fourier frames(TMWGFRFF) and tight shift multi-windowed graph fractional Fourier frames (TSMWGFRFF).

\subsection{  Construction of TMWGFRFF} \label{section7A}

Constructing tight MWGFRFF is equivalent to find $L$ window functions $\{\mathbf{g}_1,\mathbf{g}_2,\ldots,\mathbf{g}_{L}\}$ such that 
$\sum_{l=1}^{L}\|T_{n}^\alpha\mathbf{g}_{l}\|_2^2=C$
for $n=1,2,\ldots,N.$
By Corollary \ref{cor33}, 
the goal of the construction is to
find a sequence of window functions such that 
$\sum_{l=1}^{L}\left|\hat{\mathbf{g}}_{l\alpha}(r_{p})\right|^{2}=1$ for $r_p \in R$, where $R$ denotes graph fractional Laplacian spectrum.
Among the functions with such a property, we think of the cardinal B-spline as a candidate.

The $k$th order cardinal B-spline $N_k$ is defined as follows:
$$\mathbf{N}_k(x)=\int_0^1\mathbf{N}_{k-1}(x-t)dt.$$
The integer-translates of the $k$th order cardinal B-spline form a partition of unity, i.e.,
\begin{equation}\label{eq711}
	\sum_{p\in\mathbb{Z}}\mathbf{N}_{k}(x-p)=1,\forall \ k\in\mathbb{N}^{+}.\end{equation}
Suppose that the normalized graph fractional Laplacian matrix is applied, that is, the corresponding graph fractional Laplacian spectrum is contained in [0, 2]. 
We then take B-spline $\mathbf{N}_{2}$ to construct the tight frame window functions:
$$\left.
\mathbf{N}_{2}(x)=\left\{\begin{array}{ll}x,&\quad x\in[0,1),\\
	2-x,&\quad x\in[1,2],\\
	0,&\quad\text{else.}\end{array}\right.
\right.$$
Letting $| \hat{g_1}_{\alpha}|^{2}=\mathbf{N} _{2}( x- 1)$, 
$| \hat{g_2} _{\alpha}|^{2}=\mathbf{N} _{2}( x)$, 
and $| \hat{g_3} _{\alpha}| ^{2}$ = $\mathbf{N} _{2}( x+ 1)$, according to Eq.~\eqref{eq711}, we have $\sum\limits_{l= 1}^{3}| \hat{g_{l}} _{\alpha}( \lambda _{p}) | ^{2}= 1, \lambda _{p}\in [ 0, 2] .$That is, $\{g_1,g_2,g_3\}$ is a set of tight frame window functions.

\subsection{  Construction of TSMWGFRFF}  \label{section7B}

When the rows of the graph fractional shift operator $S^\alpha$ have identical 2-norm, we could construct orthonormal vectors as the tight SMWGFRFF:

(i) the graph fractional Laplacian eigenvectors, i.e. $\mathbf{g}_l= \gamma_{l-1}$ for $l= 1, 2, \ldots , N$, where $\gamma_l$ is the $l$th eigenvector of the graph fractional Laplacian matrix.

(ii) Householder vectors with localized generator, i.e. $\mathbf{g}_l=\mathbf{h}_l$, where $\mathbf{h}_{l}$ is the $l$th vector of the Householder matrix $\mathbf{H}=\mathbf{I}_{N}-2\mathbf{v}\mathbf{v}^{T}$ and $\mathbf{v}$ is a vector localized on a small set of indices.

%(iii) translated vectors, i.e. $\mathbf{g}_i=T_i^\alpha(\mathbf{g}),i=1,\ldots,N,\mathbf{g}$ is a localized vector, and $T_i^\alpha$ is the graph fractional translation operator.

\section{ Experiments}\label{section8}

We conduct several simulations by using frame, dual frame, and tight frame window functions to evaluate the effectiveness of MWGFRFT, FMWGFRFT, and SMWGFRFT in extracting the vertex-frequency features of graph signals. 
Our methods are also compared with WGFT\cite{Shuman2016acha}, MWGFT\cite{Zheng2021cc} and WGFRFT\cite{Yan2021dsp}.

\subsection{ Comparative Analysis of Vertex-Frequency Feature Extraction among WGFT\cite{Shuman2016acha}, MWGFT\cite{Zheng2021cc}, WGFRFT\cite{Yan2021dsp} and MWGFRFT}

First, we consider a path graph with 256 vertices, all of which have equal weights of 1. 
The eigenvectors of the graph Laplacian are generated from the basis vectors of the DCT-II transform.
We consider the following three signals in Fig.~\ref{figf123}: $f_1$ is a band limited signal in graph fractional Fourier domain, $f_2$ and $f_3$ are bandpass signals in graph fractional Fourier domain.
We choose the heat diffusion kernel as the window function by setting $\hat{g}_{\alpha}(\lambda_\ell) = Ce^{-\tau\lambda_\ell}$ with $\tau = 300$ and choosing $C$ such that $\|\hat{g}_{\alpha}\|_2 = 1$. 
Applying Eq.\eqref{eq1} to $\hat{g}_{\alpha}$, then obtain $L=10$ windows.
The ``spectrograms" in Fig.~\ref{figvf1} show $|Wf|^2$ for all $i \in \{1,2,\ldots,256\}$ and $k \in \{0,1,\ldots,255\}$.

From Fig.~\ref{figvf1}, the vertex-frequency features of the three signals are clearly observed using MWGFRFT, enabling precise identification of the brightest points by combining their vertex and frequency information.
However, none of WGFT, MWGFT and WGFRFT can accurately extract the vertex-frequency features of the three signals.
The vertex domain features of the three signals by WGFT are unclear and cannot be accurately determined.
For MWGFT, the vertex domain features of the 3 signals are clearly determined, however, the corresponding frequency features at each vertex cannot be accurately localized, nor can the location of the brightest point in the spectrograms be determined.
For WGFRFT, no accurate vertex features are identified, and all frequency features are located at the smallest vertices $i$.

\begin{figure*}[!t]
	\centering
	\subfloat[$f_1$
	]{\includegraphics[width=0.32\textwidth]{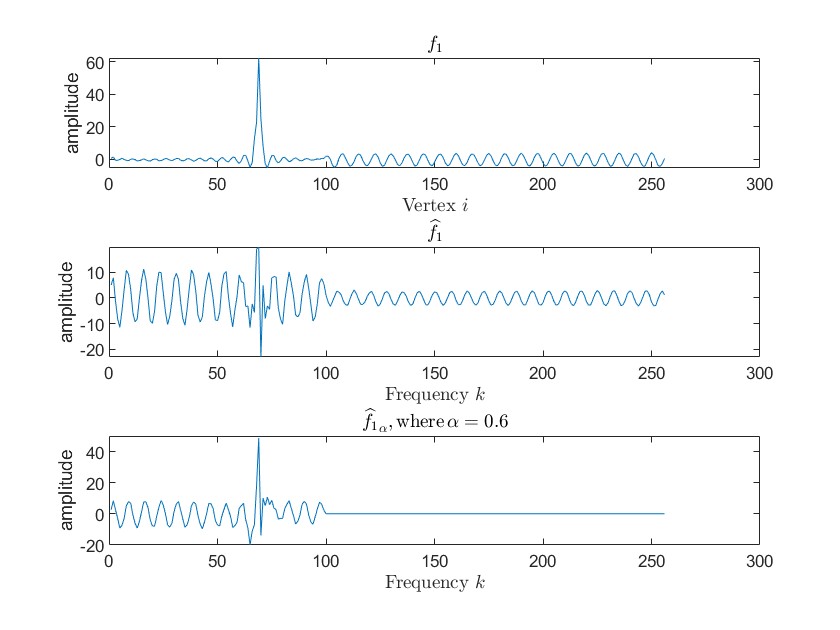}} \hfill
	\subfloat[$f_2$
	]{\includegraphics[width=0.32\textwidth]{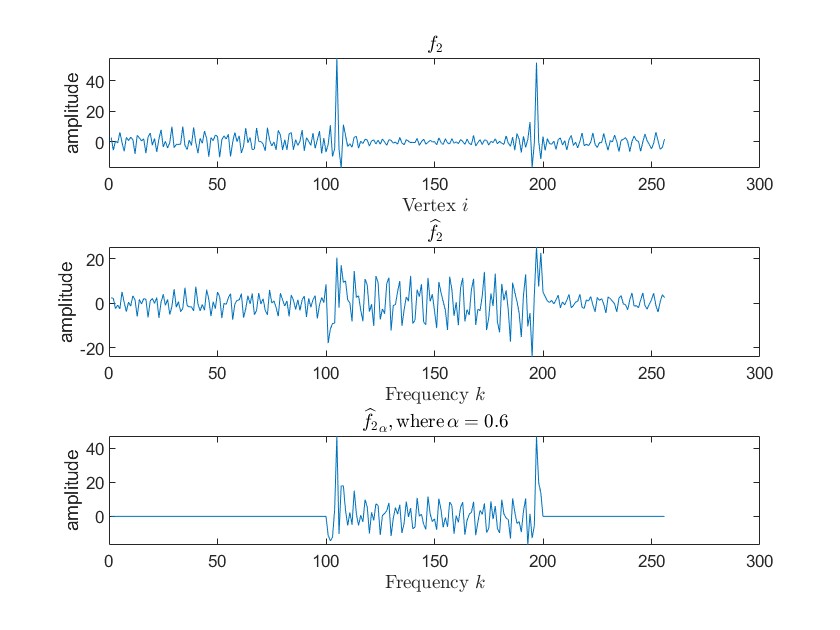}} \hfill
	\subfloat[$f_3$
	]{\includegraphics[width=0.32\textwidth]{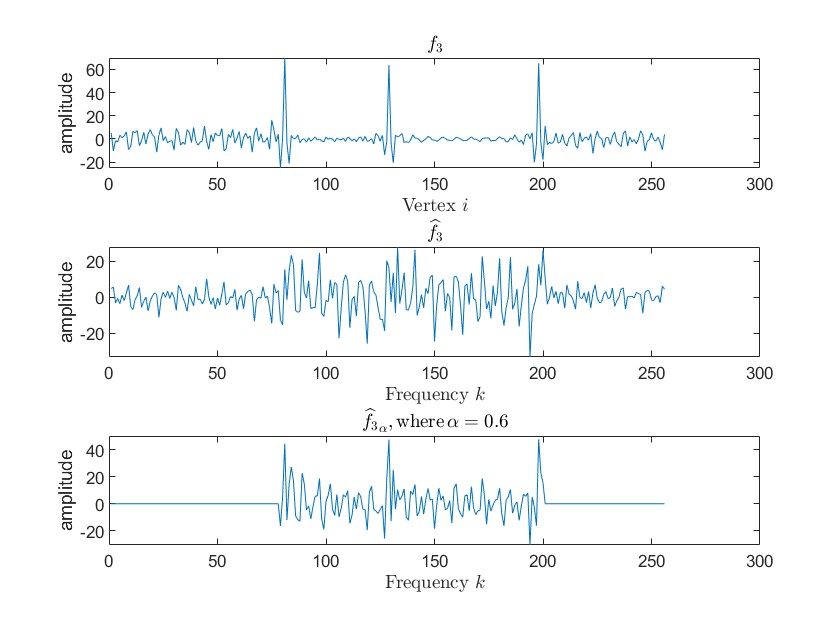}}
	\caption{The signal $f_1$, $f_2$, and $f_3$.}
	\label{figf123}
\end{figure*}

\begin{figure*}[!t]
	\centering
	\subfloat[WGFT of $f_1$ ]{\includegraphics[width=0.23\linewidth]{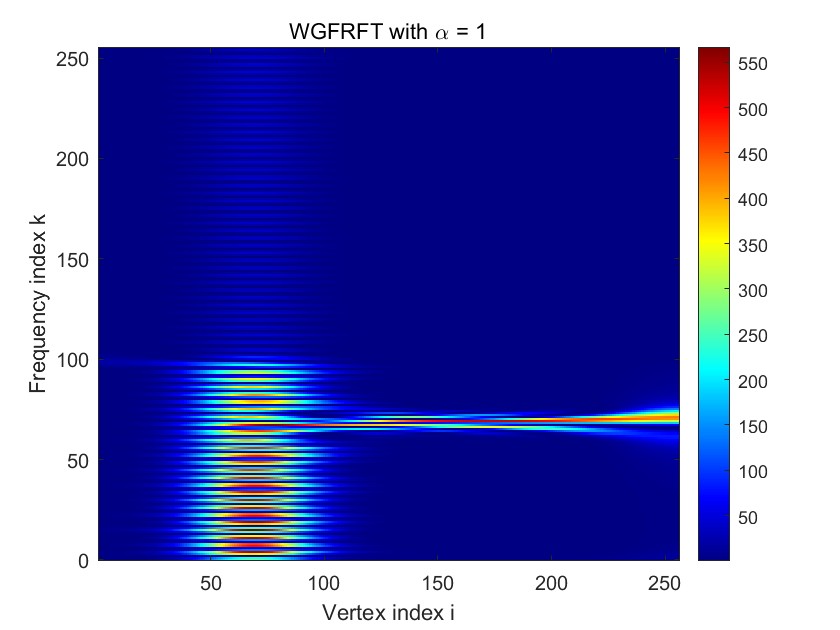}} \hfill
	\subfloat[MWGFT of $f_1$ ]{\includegraphics[width=0.23\linewidth]{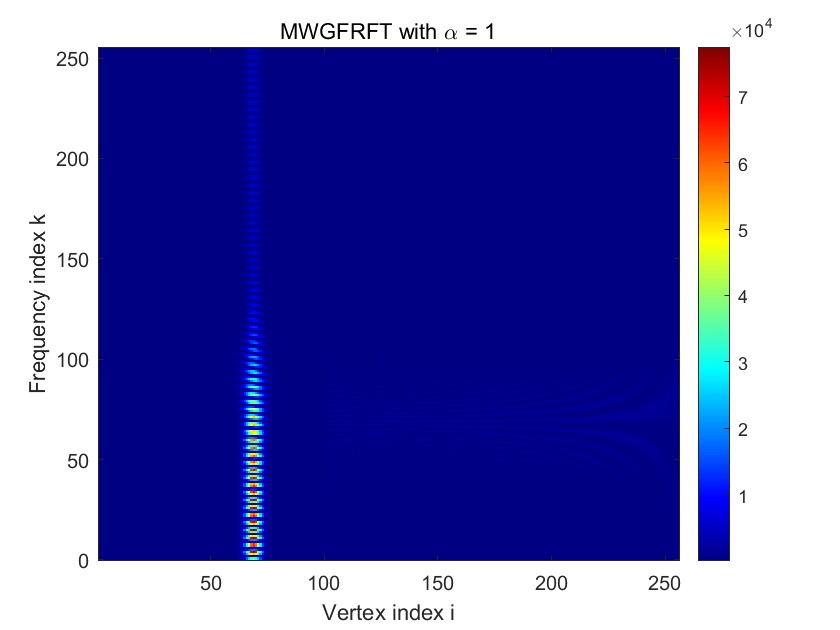}} \hfill
	\subfloat[WGFRFT of $f_1$
	]{\includegraphics[width=0.23\linewidth]{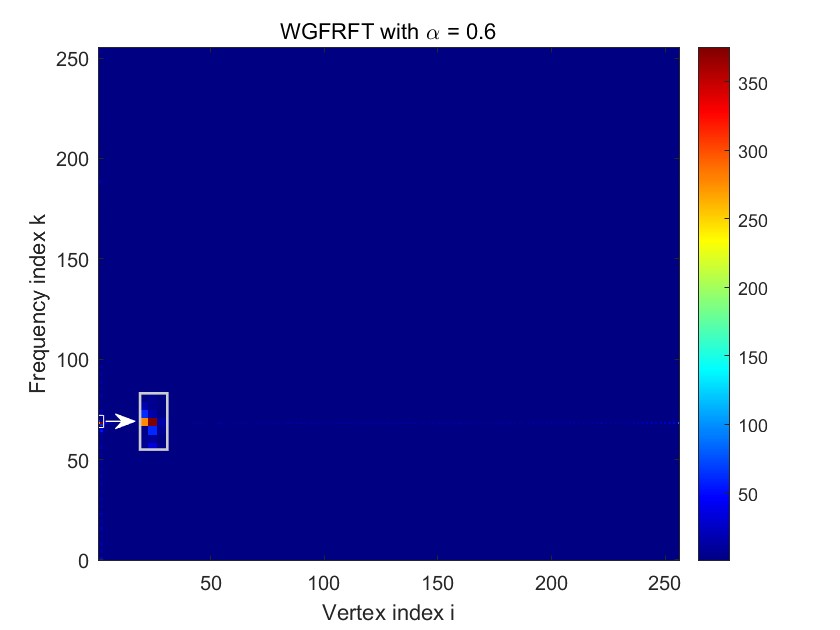}} \hfill
	\subfloat[MWGFRFT of $f_1$
	]{\includegraphics[width=0.23\linewidth]{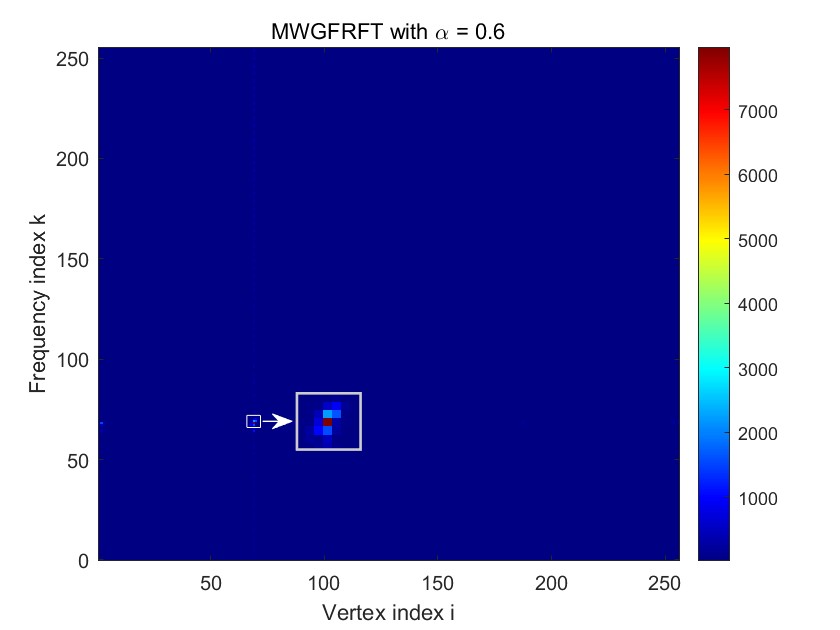}} \\
	\vspace{0.2cm}
	\subfloat[WGFT of $f_2$
	]{\includegraphics[width=0.23\linewidth]{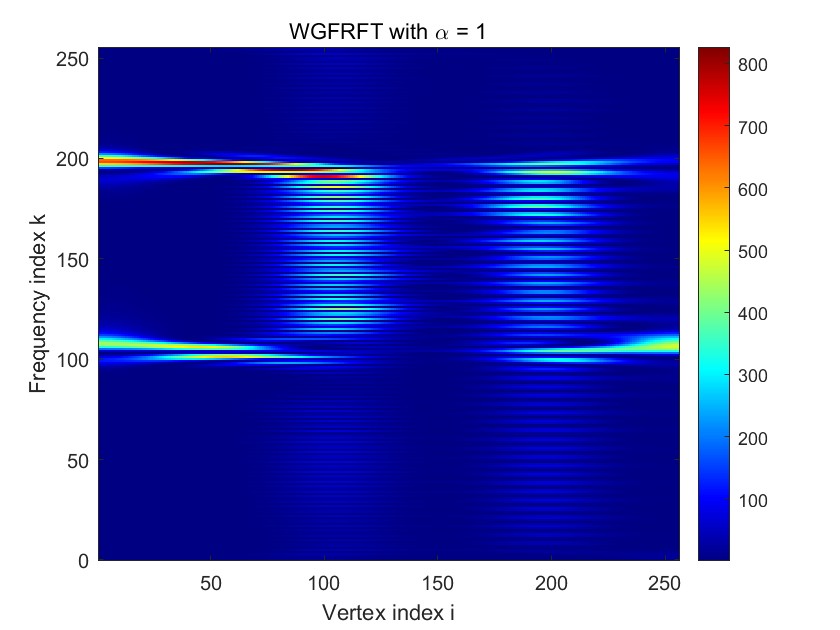}} \hfill
	\subfloat[MWGFT of $f_2$
	]{\includegraphics[width=0.23\linewidth]{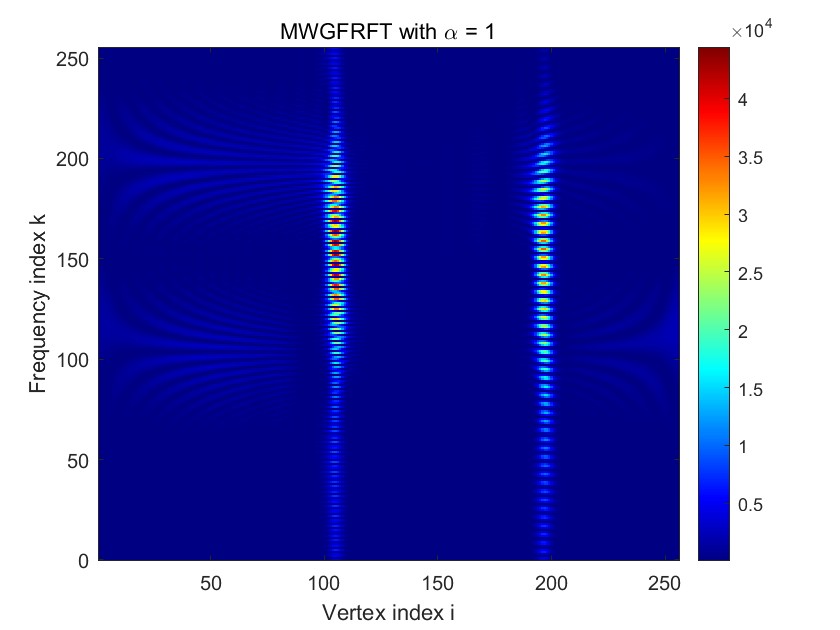}} \hfill
	\subfloat[WGFRFT of $f_2$
	]{\includegraphics[width=0.23\linewidth]{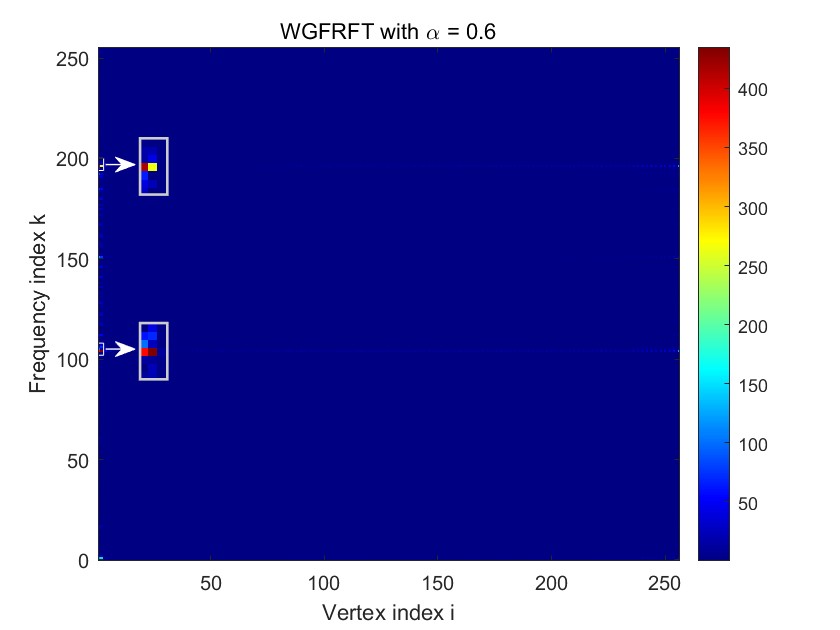}} \hfill
	\subfloat[MWGFRFT of $f_2$
	]{\includegraphics[width=0.23\linewidth]{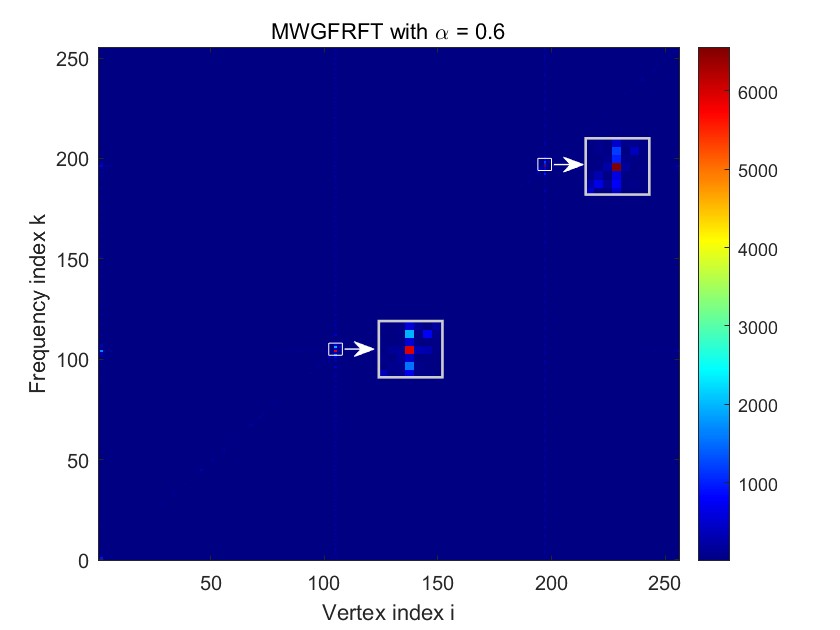}} \\
	\vspace{0.2cm}
	\subfloat[WGFT of $f_3$
	]{\includegraphics[width=0.23\linewidth]{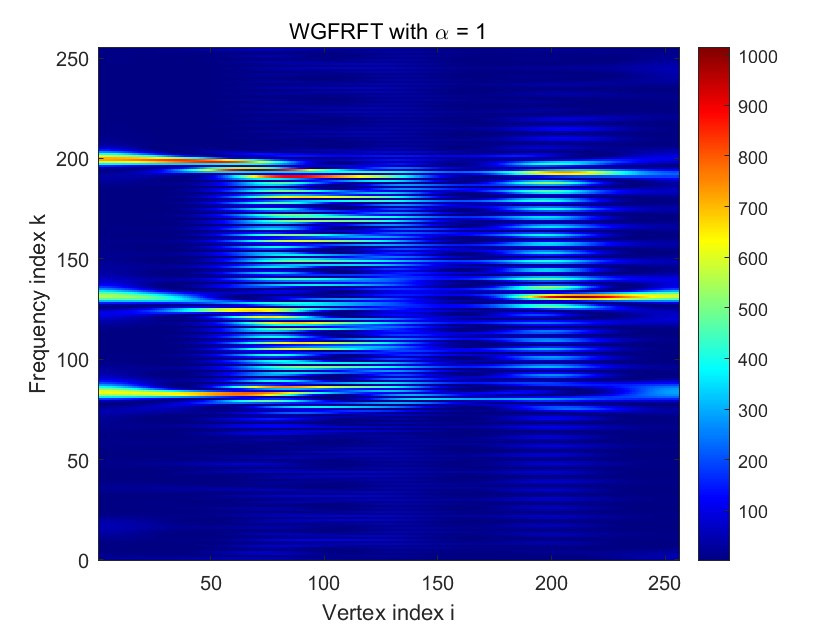}} \hfill
	\subfloat[MWGFT of $f_3$
	]{\includegraphics[width=0.23\linewidth]{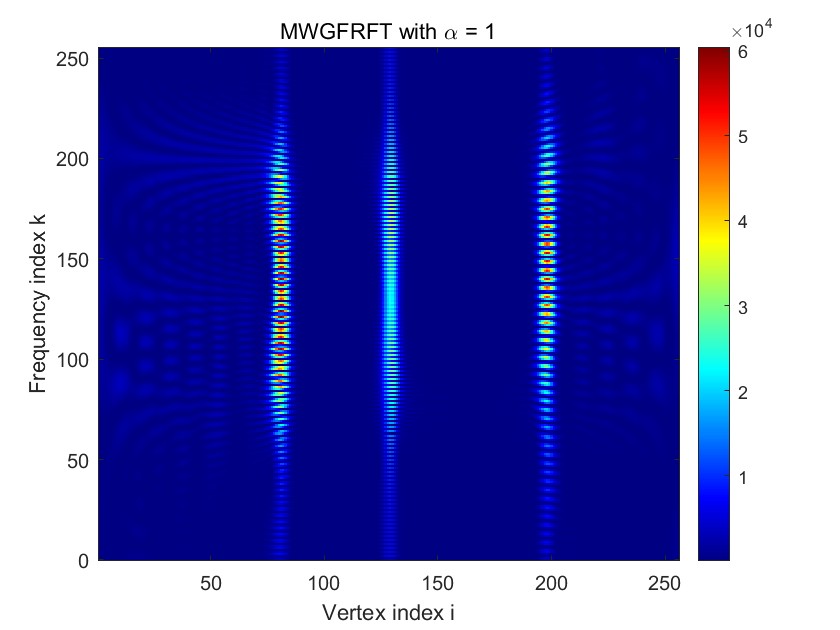}} \hfill
	\subfloat[WGFRFT of $f_3$
	]{\includegraphics[width=0.23\linewidth]{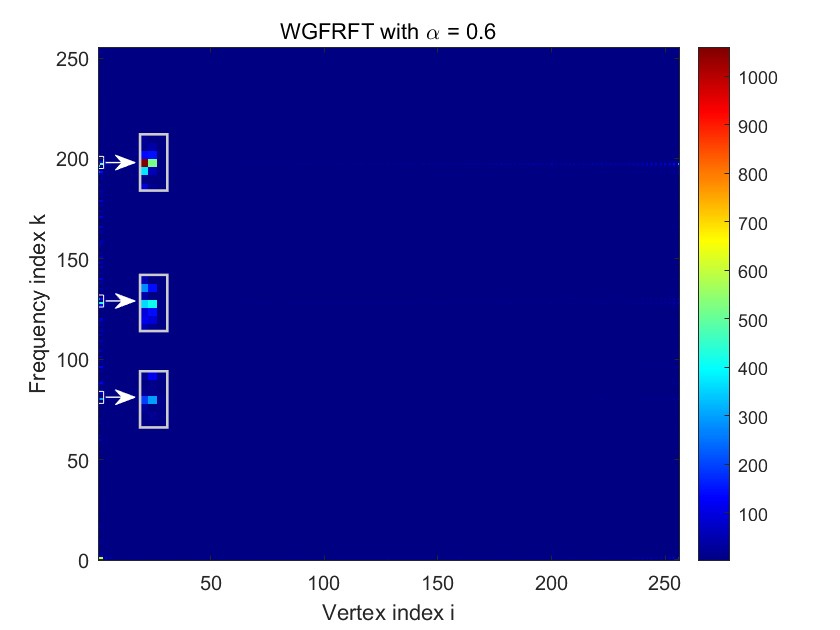}} \hfill
	\subfloat[MWGFRFT of $f_3$
	]{\includegraphics[width=0.23\linewidth]{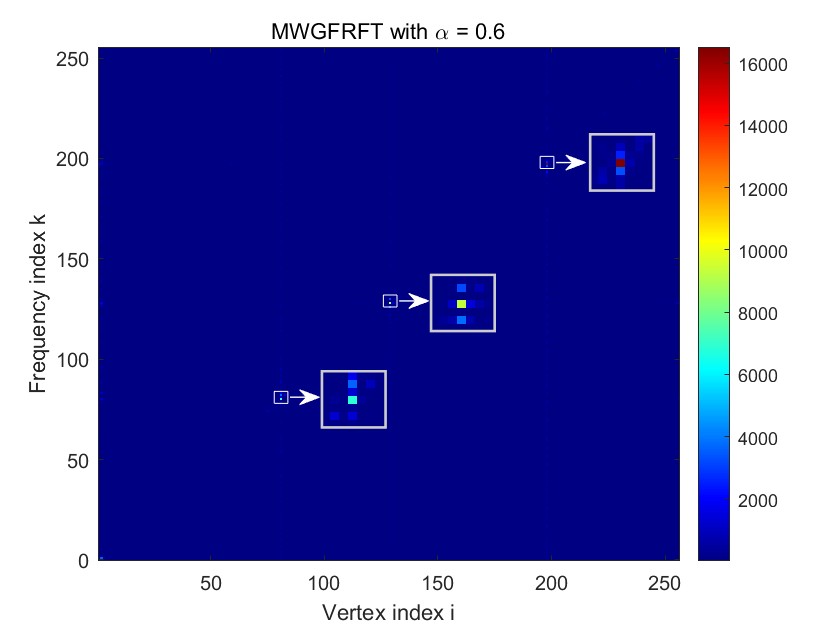}}
	\caption{Vertex-frequency representations of signals $f_1$, $f_2$, and $f_3$ by WGFT, MWGFT, WGFRFT and MWGFRFT, with fractional order $\alpha = 0.6$ 
		(Observe the six spectrograms (c-d), (g-h), and (k-l), where each larger rectangular box, marked by an arrow, presents a magnified view of the corresponding smaller rectangular box within the spectrogram).}
	\label{figvf1}
\end{figure*}

\subsection{  Algorithm evaluation of FMWGFRFT}

To evaluate the performance of FMWGFRFT algorithm, we conducted experiments to demonstrate its effectiveness and robustness.

(i) effectiveness

To clearly demonstrate the temporal advantage of the fast algorithm, we compare the time cost of the fast algorithm (FMWGFRFT) with the original algorithm (MWGFRFT).
Setting the fractional order $\alpha = 0.8$, we contemplate a random ring graph with different number of vertices $N$, where the signal is $f_5(n)$.
We choose the standard Gaussian function as the window function $\hat{g}_{\alpha}$,
applying Eq.\eqref{eq1} to $\hat{g}_{\alpha}$, and then obtain $L=10$ windows.
In Fig.~\ref{figtc3}, the x-axis represents the logarithm of the number of vertices and the y-axis represents the logarithm of computation time (in seconds).
For MWGFRFT, the red line corresponds to a linear fit, aligning with the theoretical computational complexity of \( O(LN^4) \). 
Similarly, for FMWGFRFT, the blue line corresponds to a linear fit, which matches the theoretical computational complexity of \( O(LN^3) \). 
The results indicate that FMWGFRFT is more efficient and requires less computation time compared to MWGFRFT.

\begin{figure}[!t]
	\centering
	\includegraphics[width=\columnwidth]{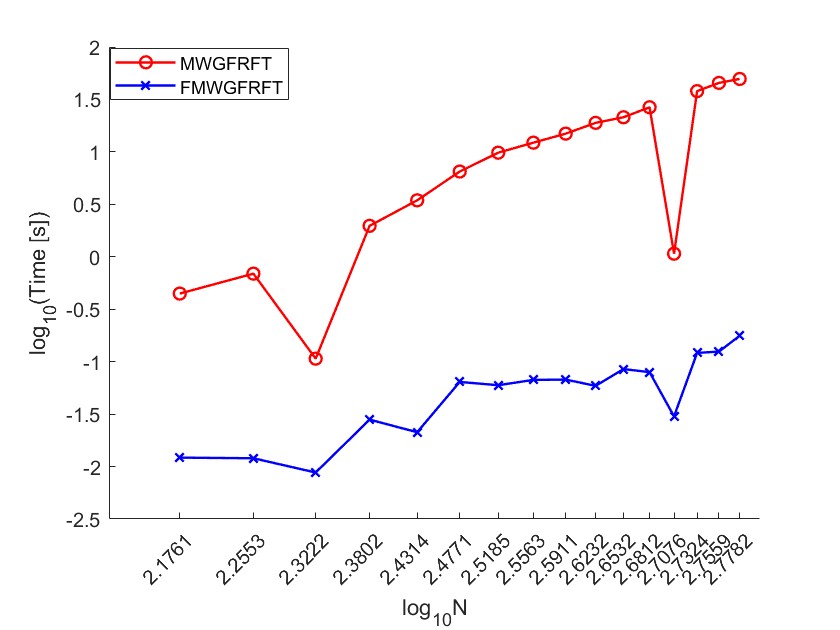}
	\caption{Time cost comparison of MWGFRFT and FMWGFRFT}
	\label{figtc3}
\end{figure}

(ii) Robustness

In the first experiment, we simulate with 3 graphs of 200 vertices: a ring graph, a random ring graph and a sphere graph. We take the real parts of the 3 graph signals as follows.
\begin{align*}
	f_4(n) &= Re(\gamma_{100}^{\text{r}}(n)), \\
	f_5(n) &= Re(\gamma_{50}^{\text{rr}}(n) + \gamma_{90}^{\text{rr}}(n) + \gamma_{150}^{\text{rr}}(n)), \\
	f_6(n) &= Re(\gamma_{64}^{\text{s}}(n) + \gamma_{128}^{\text{s}}(n)).
\end{align*}
For fixed $\alpha = 0.95$, we take $f_4(n),f_5(n)$ and $f_6(n)$ from the  fractional Laplacian matrix of a ring graph, a random ring graph and a sphere graph, respectively. 
The details of the original signals are shown in Fig.~\ref{figfm4}(a),(d),(g).
We choose the heat diffusion kernel as the window function $\hat{g}_{\alpha}$.
In Fig.~\ref{figfm4}, we note that the details of the spectrograms by MWGFRFT and FMWGFRFT are the same. 
Moreover, we can observe that the vertex-frequency representations of FMWGFRFT closely match that of MWGFRFT for the 3 signals;
moreover, FMWGFRFT effectively identified the correct frequencies, further validating the correctness and effectiveness of FMWGFRFT algorithm.

\begin{figure*}[!t]
	\centering
    \subfloat[a ring graph]{\includegraphics[width=0.32\textwidth]{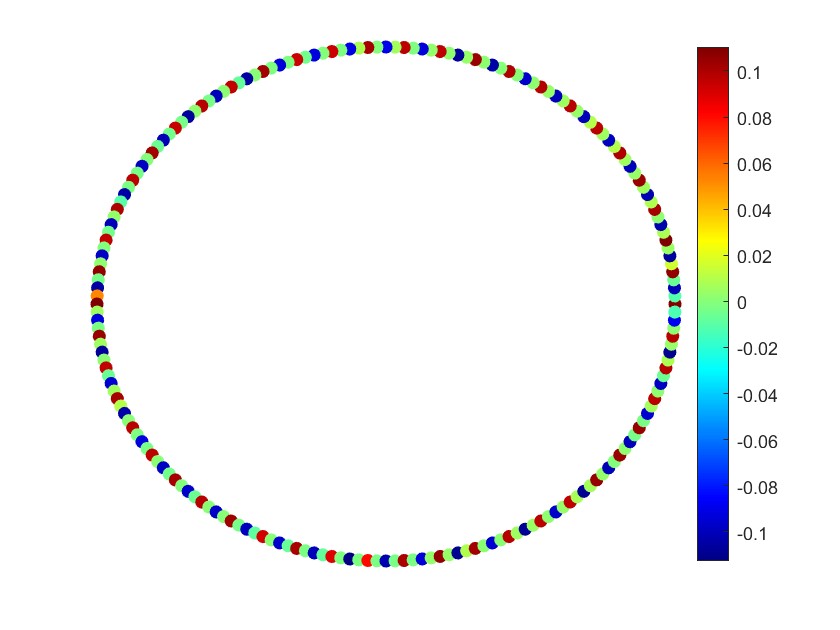}}\hfill
	\subfloat[MWGFRFT of $f_4$]{\includegraphics[width=0.32\textwidth]{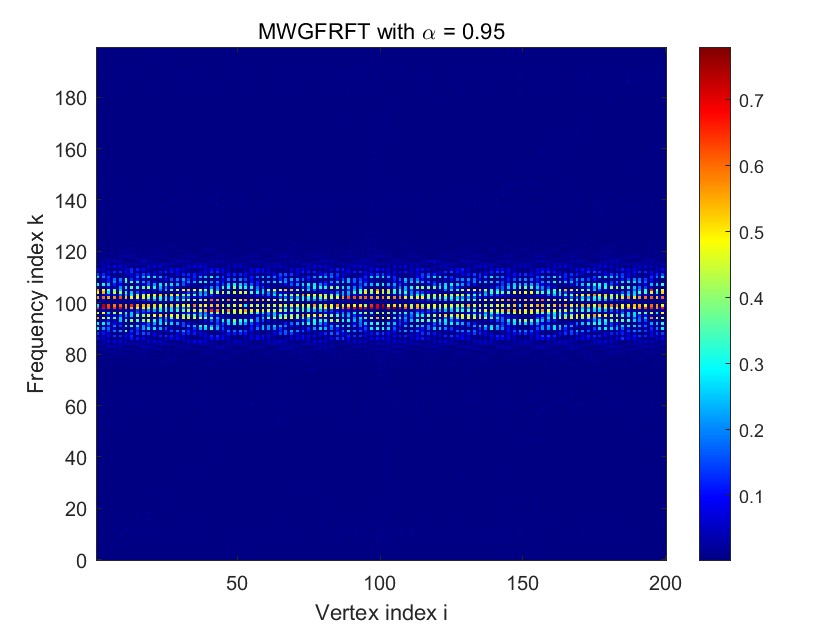}}\hfill
	\subfloat[FMWGFRFT of $f_4$]{\includegraphics[width=0.32\textwidth]{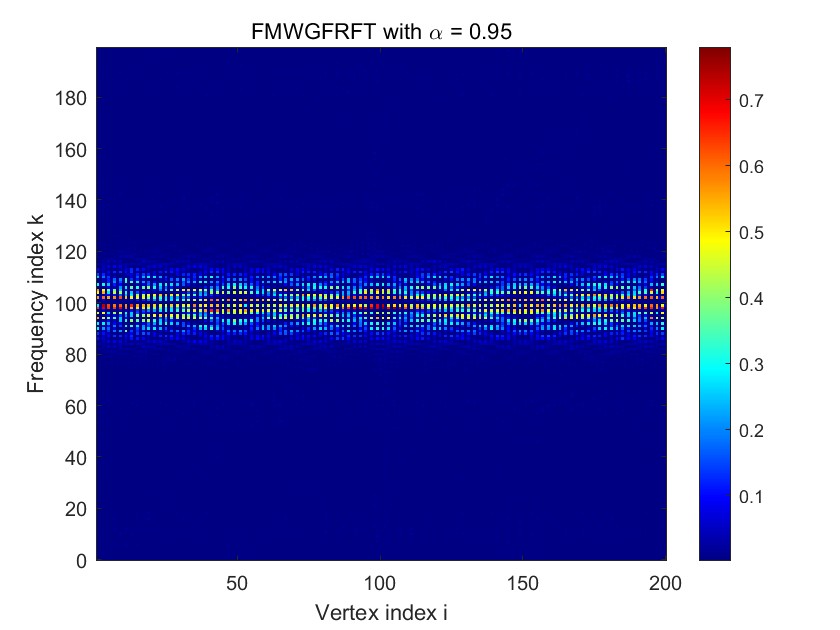}}   \\
	\subfloat[a random ring graph
	]{\includegraphics[width=0.32\textwidth]{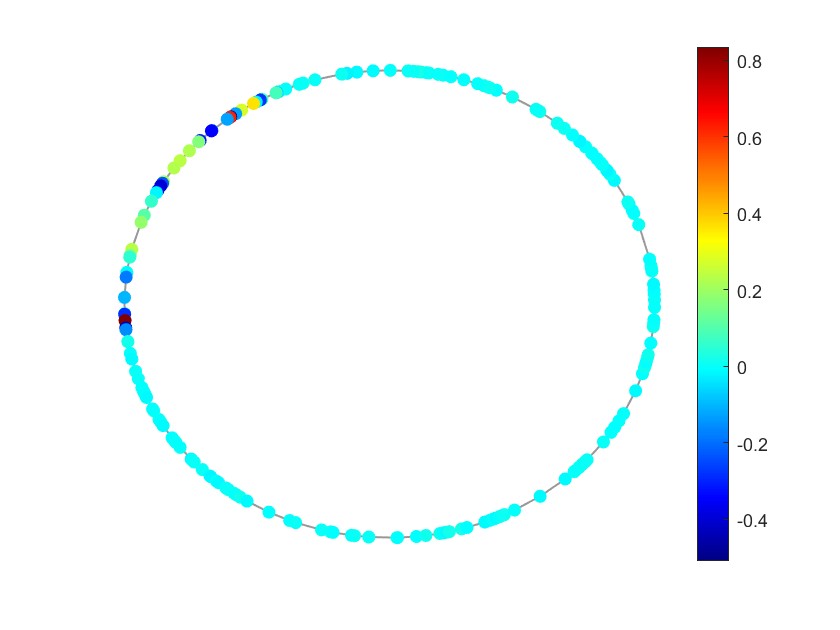}}\hfill     
	\subfloat[MWGFRFT of $f_5$]{\includegraphics[width=0.32\textwidth]{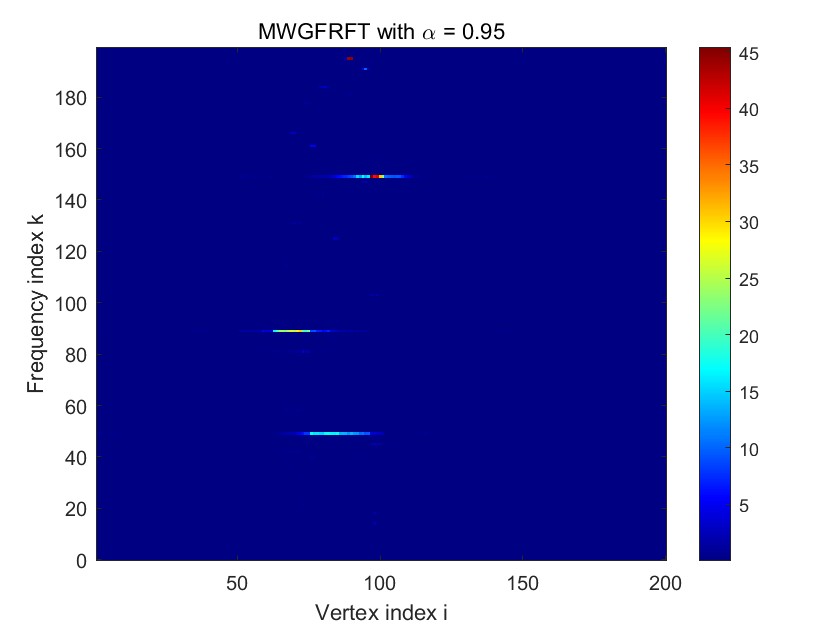}}\hfill
	\subfloat[FMWGFRFT of $f_5$]{\includegraphics[width=0.32\textwidth]{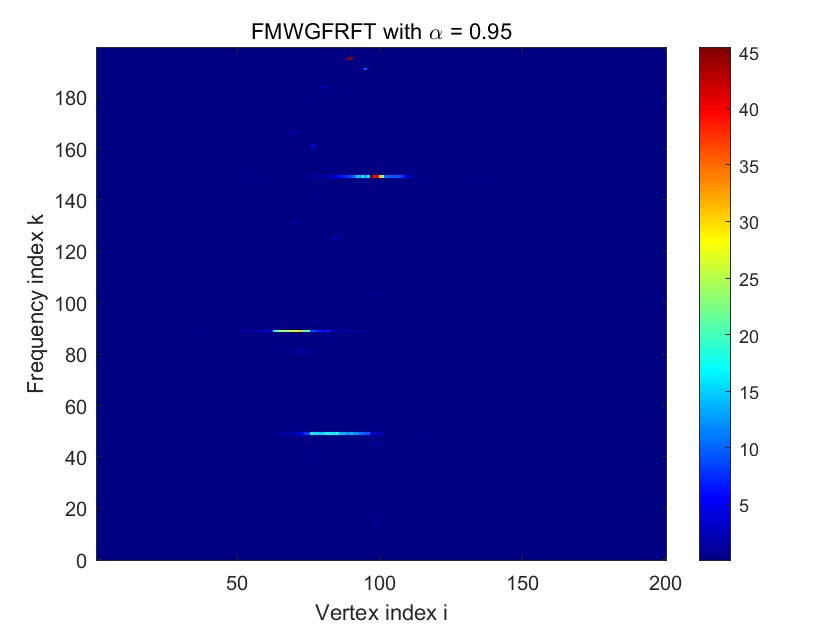}}  \hfill  \\
	\subfloat[a sphere graph
	]{\includegraphics[width=0.32\textwidth]{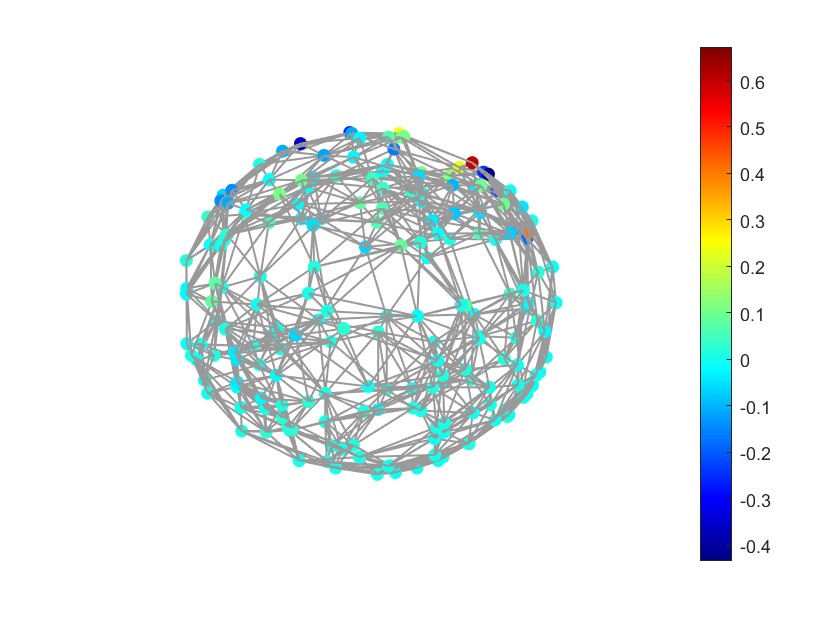}}\hfill
	\subfloat[MWGFRFT of $f_6$]{\includegraphics[width=0.32\textwidth]{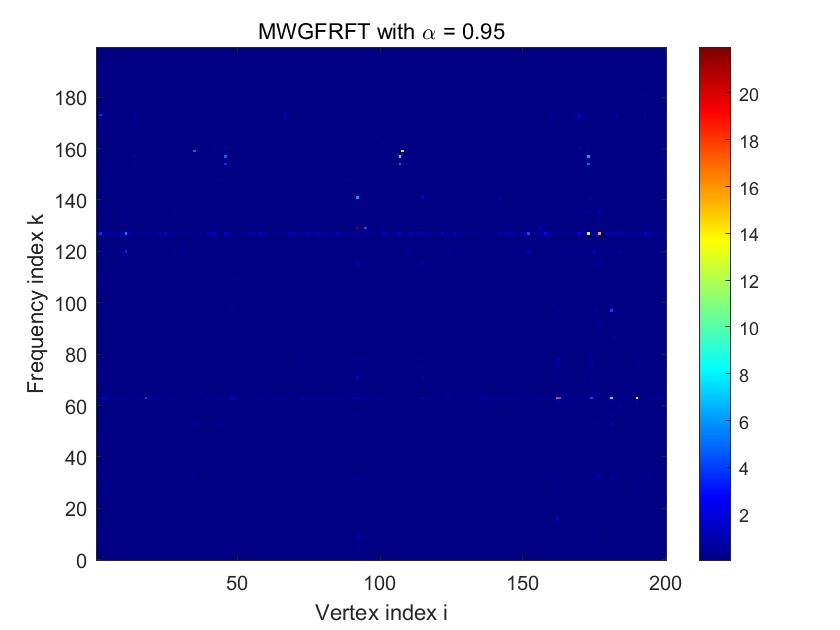}}\hfill
	\subfloat[FMWGFRFT of $f_6$]{\includegraphics[width=0.32\textwidth]{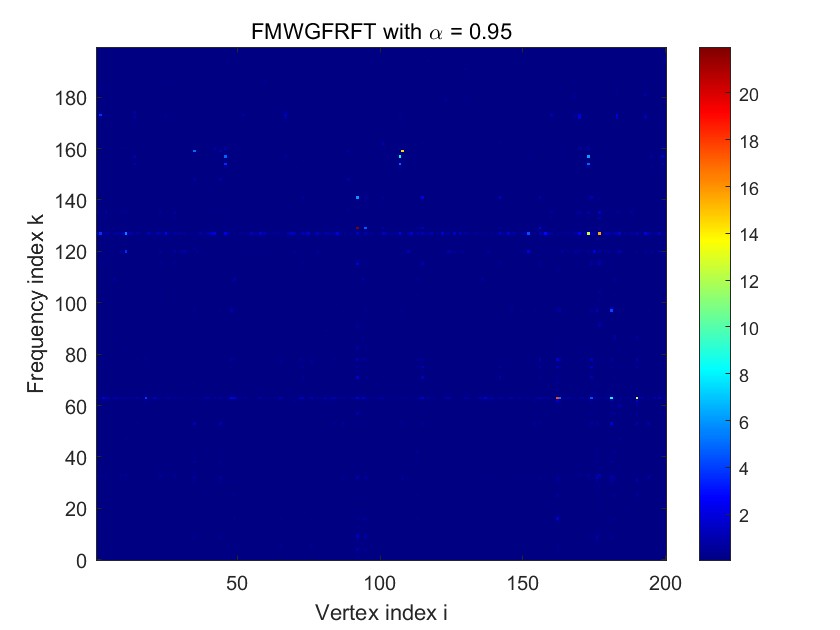}}\\
	\caption{Vertex-frequency representations of signals $f_4$, $f_5$, and $f_6$ by MWGFRFT and FMWGFRFT, with $\alpha = 0.95$.}
	\label{figfm4}
\end{figure*}

In the second experiment, we vary both the fractional order and the noise parameter. 
The normalized mean squared error (NMSE) is computed between the FMWGFRFT coefficients of 3 original signals and their corresponding contaminated signals, as illustrated in Fig.~\ref{figbar5}.
In Fig.~\ref{figbar6}, the WGFRFT results are presented by using a single window function, while all other conditions remain identical to those in Fig.~\ref{figbar5}.

We experiment with 3 graphs of $N = 300$ vertices:
for the sphere graph, we add Poisson noise with parameter $\lambda = 0.2$, $0.3$, and $0.5$ to signal $f_7$;
for the community graph, we add exponential noise with parameter $\lambda = 0.2$, $0.3$, and $0.5$ to signal $f_8$;
for the swiss roll graph, we add Gaussian noise with standard deviation $\sigma = 0.2$, $0.3$, and $0.5$ to signal $f_9$, respectively.
\begin{equation*}
	f_7(n) = \sin\left(110 \pi n/N\right),  \text{for } 1 \leq n \leq 300 
\end{equation*}
\begin{equation*}   
	f_8(n) = 
	\begin{cases} 
		\sin\left(160 \pi n/N\right), & \text{for } 1 \leq n \leq 90 \\
		\sin\left(70 \pi n/N\right), & \text{for } 91 \leq n \leq 170 \\
		\sin\left(200 \pi n/N\right), & \text{for } 171 \leq n \leq 300
	\end{cases}
\end{equation*}
\begin{equation*}
	f_9(n) = \sin\left(\left(30 n + \frac{1}{5} n^2\right) \pi/N\right), \quad 
	\text{for } 1 \leq n \leq 300
\end{equation*} 
For the original signals $f_7$,$f_8$ and $f_9$, the contaminated signals are denoted by $\bar{f}_7$,$\bar{f}_8$ and $\bar{f}_9$.
We choose the standard Gaussian function as the window function by setting $\hat{g}_{\alpha}(\lambda_\ell) = Ce^{-\tau\lambda_\ell^2}$ with $\tau = 0.5$ and choosing $C$ such that $\|\hat{g}_{\alpha}\|_2 = 1$. 
Applying Eq.~\eqref{eq1} to $\hat{g}_{\alpha}$, and then obtain $L=20$ windows.
We then calculate the normalized mean squared error (NMSE) between the transform coefficients of the original signal and the contaminated signal, which is given by: 
$$\mathrm{NMSE}=\frac{\mathrm{MSE}}{\|\mathrm{FWf}\|_2^2}~,$$
where
$\mathrm{MSE} = \frac{1}{N^2} \sum\limits_{i=1}^{N}\sum\limits_{k=0}^{N-1} (FWf - FW\bar{f})^2$,
$N$ is the number of vertices, $FWf$ and $FW\bar{f}$ are the FWGFRFT coefficients of the original and contaminated signals, respectively.

From Fig.~\ref{figbar5} and Fig.~\ref{figbar6}, we observe that, despite differences in graph structure and noise, a higher noise parameter corresponds to a larger value of NMSE.  
Compared to WGFT, MWGFT and WGFRFT, the FMWGFRFT algorithm consistently achieves the lower NMSE, highlighting its robustness and reliability across diverse scenarios.
To minimize the value of NMSE, an appropriate choice of fractional order $\alpha$ can be chosen depending on the specific case.

\begin{figure*}[!t]
	\centering
	\subfloat[NMSE of $FWf_7$ and $FW\bar{f}_7$ 
	]{\includegraphics[width=0.32\textwidth]{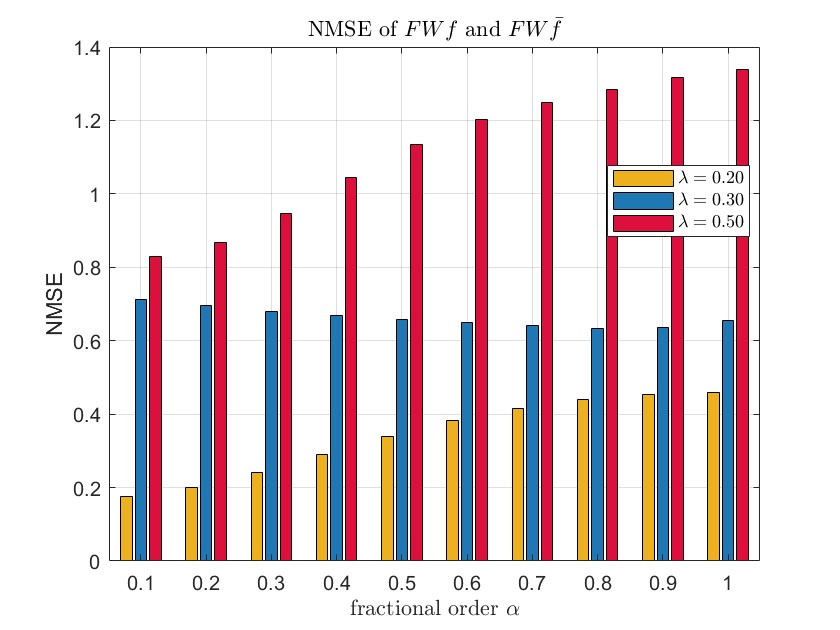}} \hfill
	\subfloat[NMSE of $FWf_8$ and $FW\bar{f}_8$ 
	]{\includegraphics[width=0.32\textwidth]{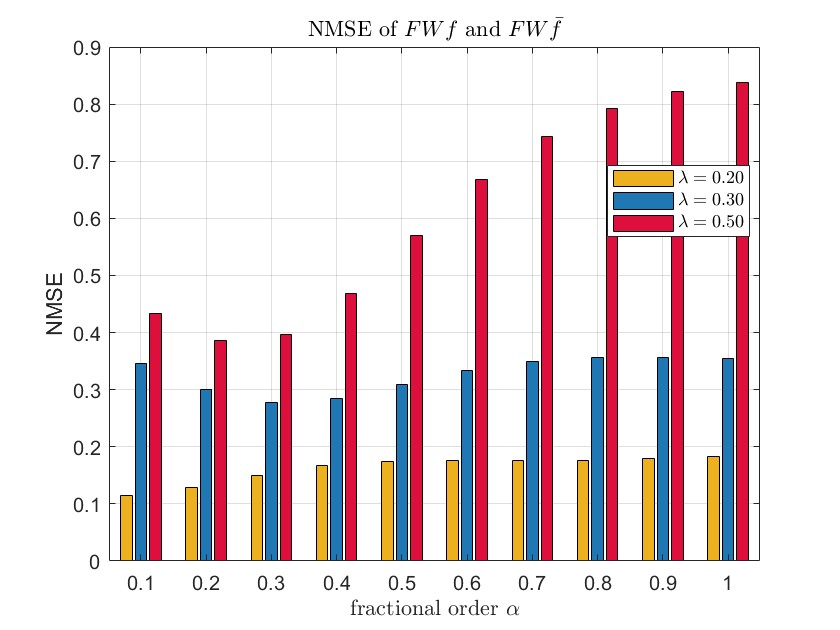}} \hfill
	\subfloat[NMSE of $FWf_9$ and $FW\bar{f}_9$ 
	]{\includegraphics[width=0.32\textwidth]{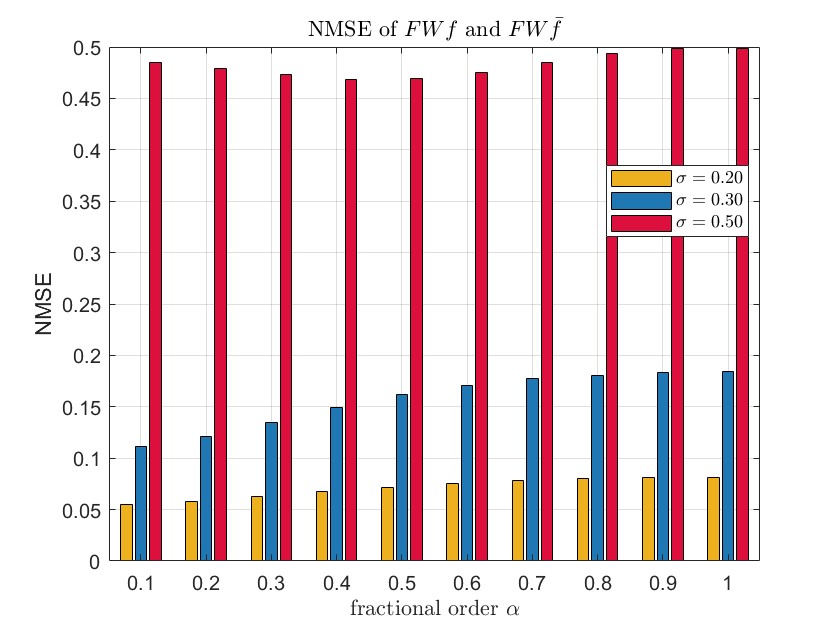}}
	\caption{ NMSE of $FWf$ and $FW\bar{f}$ for $f_7$, $f_8$, and $f_9$ by FMWGFRFT with different fractional order $\alpha$.( It is MWGFT\cite{Zheng2021cc} for $\alpha=1$)}
	\label{figbar5}
\end{figure*}

\begin{figure*}[!t]
	\centering
	\subfloat[NMSE of $Wf_7$ and $W\bar{f}_7$ 
	]{\includegraphics[width=0.32\textwidth]{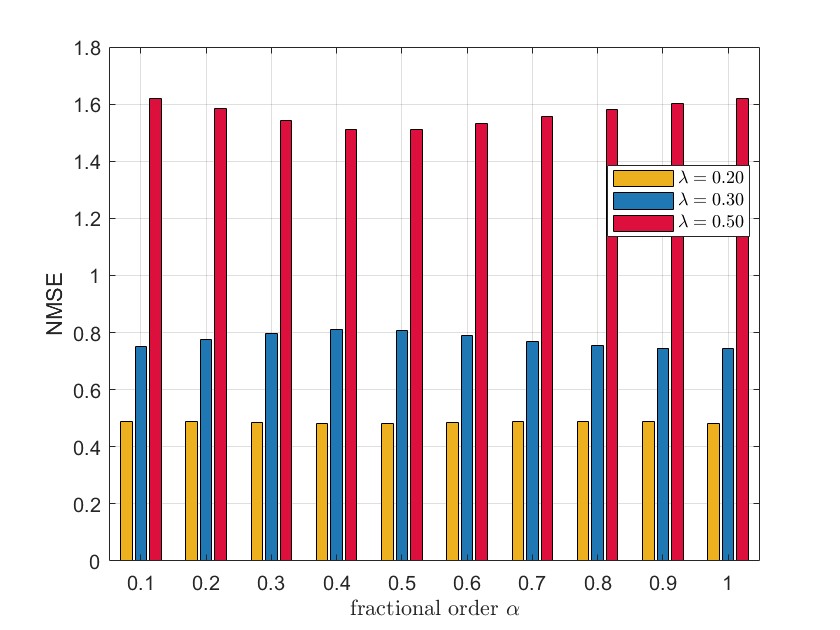}} \hfill
	\subfloat[NMSE of $Wf_8$ and $W\bar{f}_8$ 
	]{\includegraphics[width=0.32\textwidth]{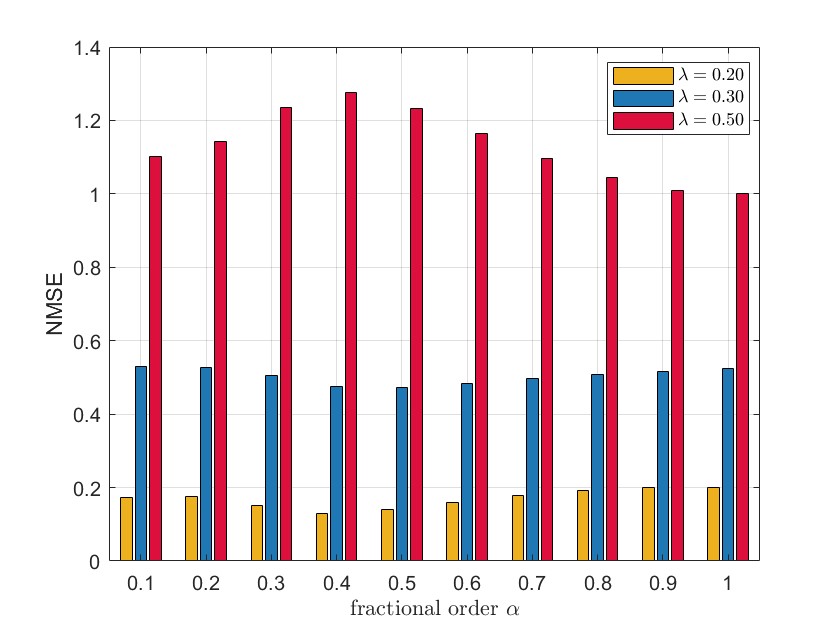}} \hfill
	\subfloat[NMSE of $Wf_9$ and $W\bar{f}_9$ 
	]{\includegraphics[width=0.32\textwidth]{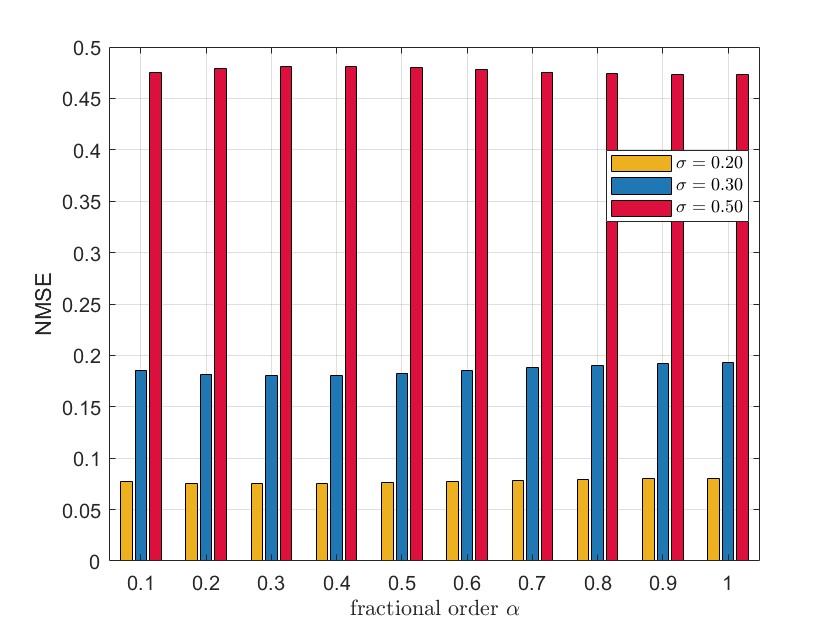}}
	\caption{ NMSE of $Wf$ and $W\bar{f}$ for $f_7$, $f_8$, and $f_9$ by WGFRFT \cite{Yan2021dsp} with different fractional order $\alpha$.( It is WGFT\cite{Shuman2016acha} for $\alpha=1$)}
	\label{figbar6}
\end{figure*}

\subsection{  Examples of SMWGFRFT and FMWGFRFT}

First, we analyze two distinct graph structures: a ring graph with 60 vertices paired with the signal $f_{10}$, and a Swiss roll graph with 60 vertices paired with the signal $f_{11}$.
We choose the heat diffusion kernel as the window function by setting $\hat{g}_{\alpha}(\lambda_\ell) = Ce^{-\tau\lambda_\ell}$ with $\tau = 60$ and choosing $C$ such that $\|\hat{g}_{\alpha}\|_2 = 1$. 
Applying Eq.\eqref{eq1} to $\hat{g}_{\alpha}$, and then obtain $L=10$ windows.
From Fig.~\ref{figssf7}, we know that SMWGFT fails to identify vertex or frequency information, overlooking key vertex-frequency features.
Comparing Fig.~\ref{figssf7}(c) and Fig.~\ref{figssf7}(d), we observe that FMWGFRFT effectively identifies vertex-frequency features. 
In contrast, while SMWGFRFT demonstrates proficiency in capturing vertex-frequency features, it misidentifies certain vertices.
By comparing Fig.~\ref{figssf7}(g) with Fig.~\ref{figssf7}(h), we observe that FMWGFRFT identifies only the prominent vertex-frequency features, disregarding those with less distinct features. 
In contrast, SMWGFRFT captures all vertex-frequency features but introduces some errors in identifying certain points.
Therefore, SMWGFRFT and FMWGFRFT can complement each other in identifying vertex-frequency features and are best utilized in combination.

\begin{figure*}[!t]
	\centering
	\subfloat[$f_{10}$ 
	]{\includegraphics[width=0.23\linewidth]{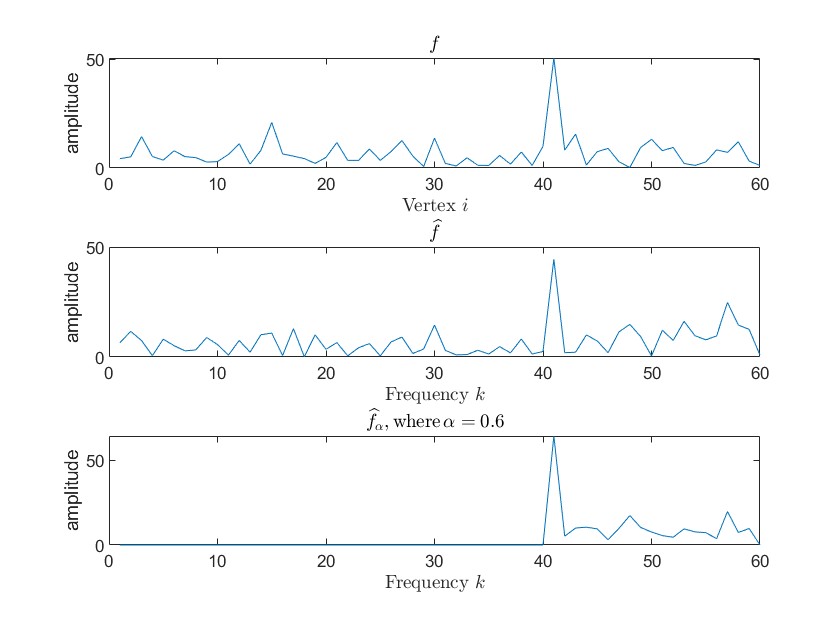}} \hfill
	\subfloat[SMWGFT of $f_{10}$ ]{\includegraphics[width=0.23\linewidth]{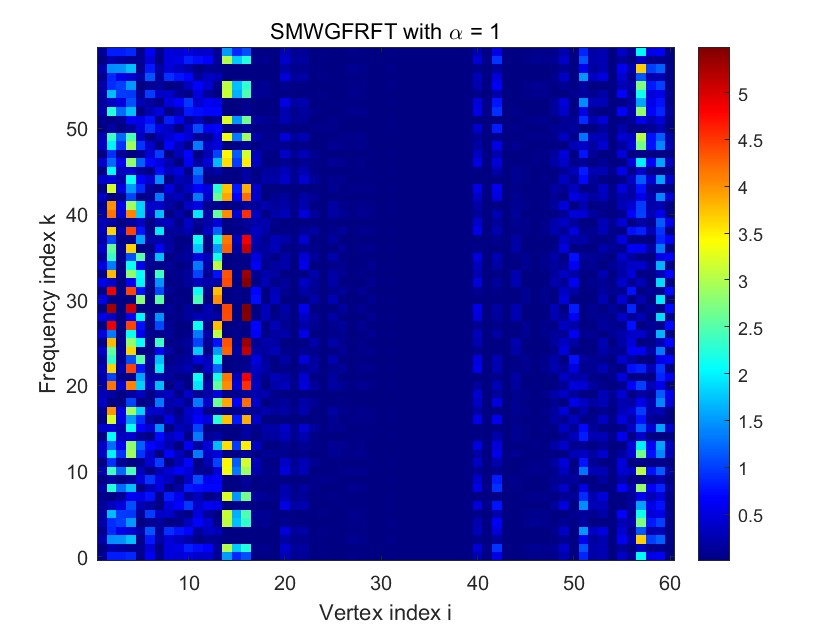}} \hfill
	\subfloat[SMWGFRFT of $f_{10}$
	]{\includegraphics[width=0.23\linewidth]{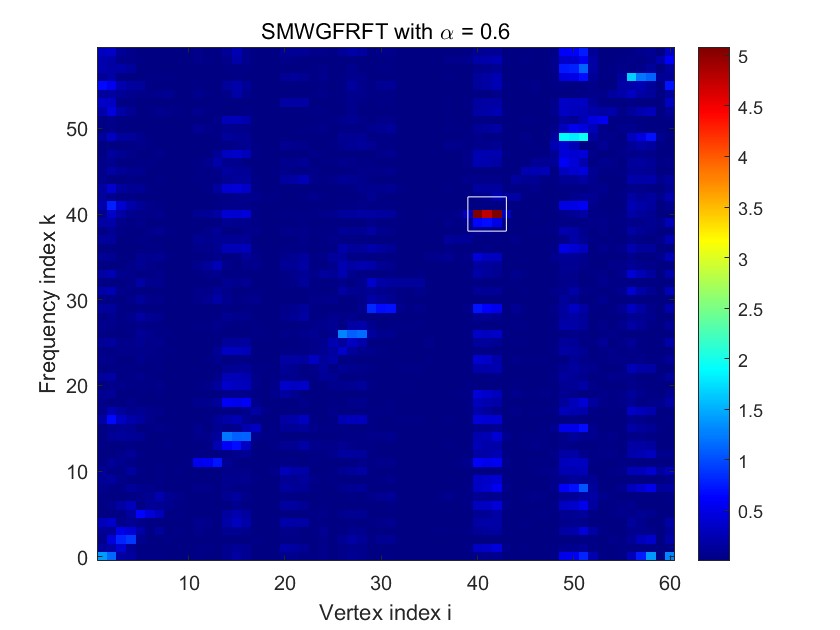}} \hfill
	\subfloat[FMWGFRFT of $f_{10}$
	]{\includegraphics[width=0.23\linewidth]{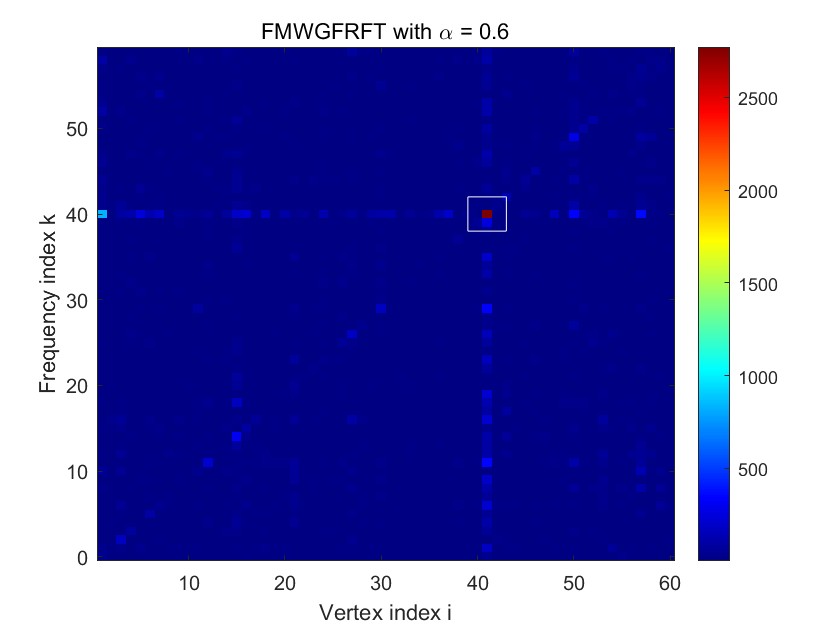}} \\
	\vspace{0.2cm}
	\subfloat[$f_{11}$
	]{\includegraphics[width=0.23\linewidth]{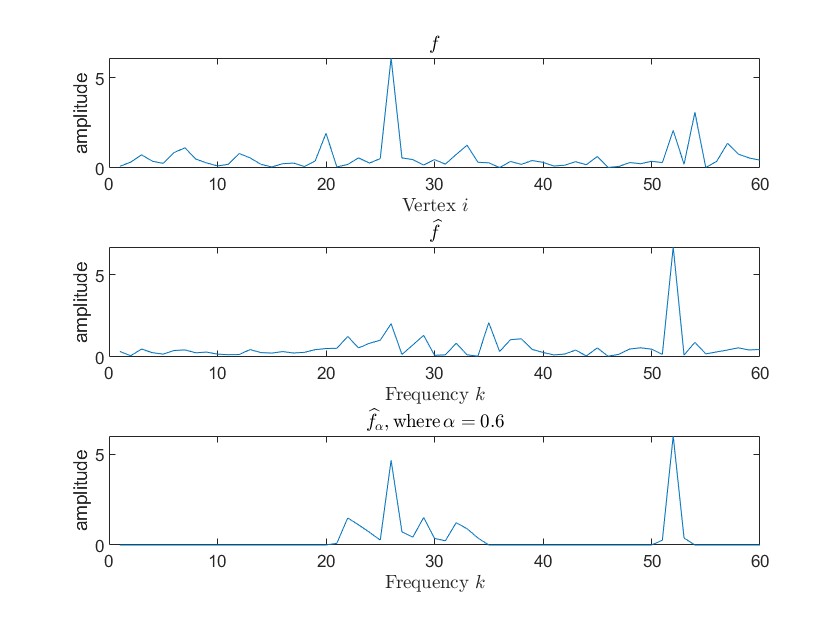}} \hfill
	\subfloat[SMWGFT of $f_{11}$
	]{\includegraphics[width=0.23\linewidth]{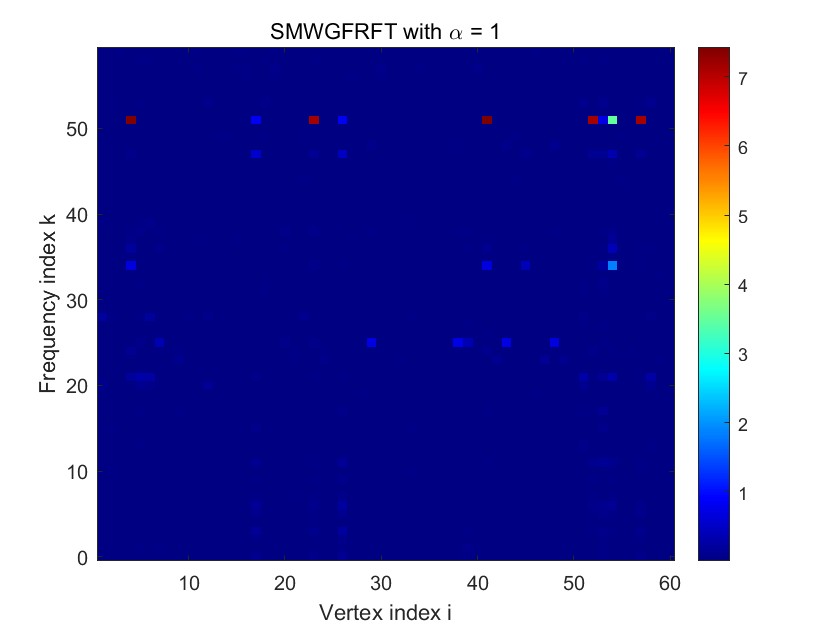}} \hfill
	\subfloat[SMWGFRFT of $f_{11}$
	]{\includegraphics[width=0.23\linewidth]{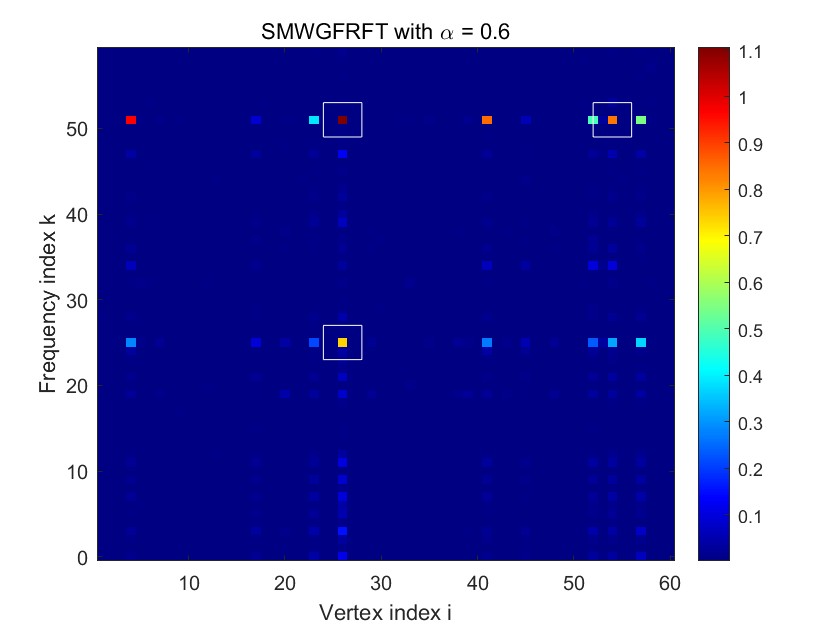}} \hfill
	\subfloat[FMWGFRFT of $f_{11}$
	]{\includegraphics[width=0.23\linewidth]{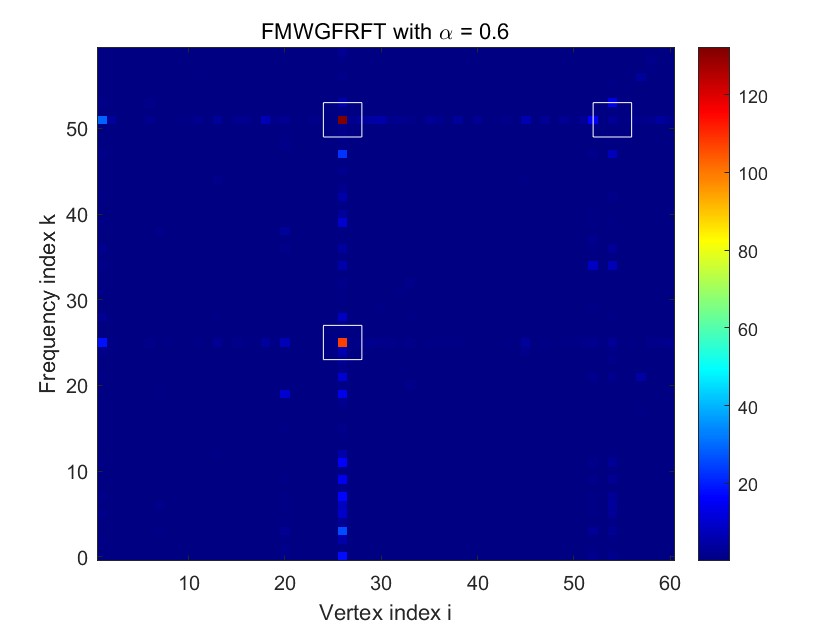}} \\
	\caption{Vertex-frequency representations of signals $f_{10}$ and $f_{11}$ by SMWGFT, SMWGFRFT and FMWGFRFT with fractional order $\alpha = 0.6$.}
	\label{figssf7}
\end{figure*}

Second, we analyze a community graph with 80 vertices and the signal $f_{12}$.
We choose the dual heat diffusion kernel as the window function by setting  $\hat{\tilde{g}}_{\alpha}(\lambda_{l})=\frac{\mu}{e^{-\tau\lambda_{l}}}$ with $\tau = 60$ and $\mu>0$,
choosing $\mu$ such that $\|\hat{\tilde{g}}_{\alpha}\|_2 = 1$. 
Applying Eq.\eqref{eq1} to $\hat{\tilde{g}}_{\alpha}$, and then obtain $L$ windows.
In Fig.~\ref{figs8}, the spectrograms are generated by using dual FMWGFRFT and dual SMWGFRFT with $\alpha=0.6$.
Fig.~\ref{figs8} demonstrates that both dual FMWGFRFT and dual SMWGFRFT are capable of extracting vertex-frequency features.

\begin{figure*}[!t]
	\centering
	\subfloat[ a graph signal $f_{12}$ ]{\includegraphics[width=0.3\textwidth]{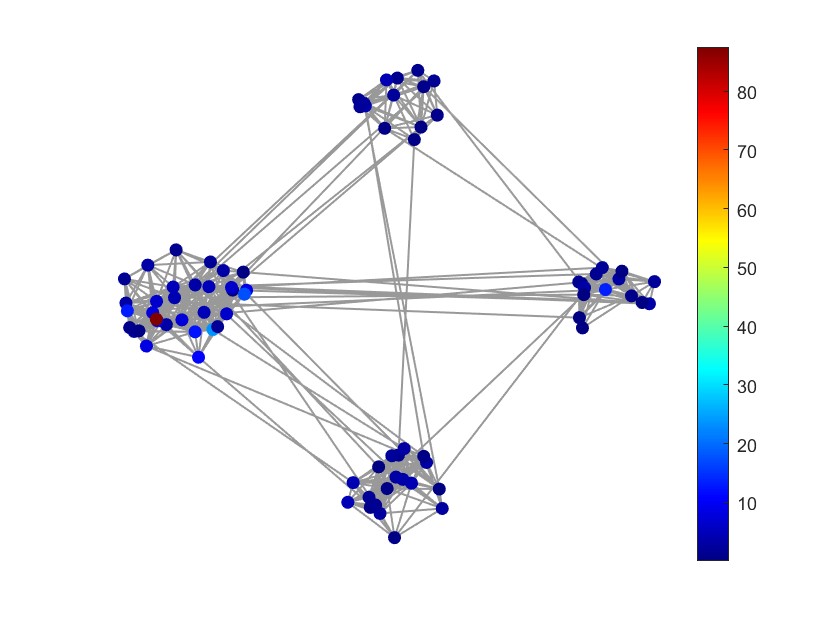}} \hfill
	\subfloat[Dual FMWGFRFT]{\includegraphics[width=0.3\textwidth]{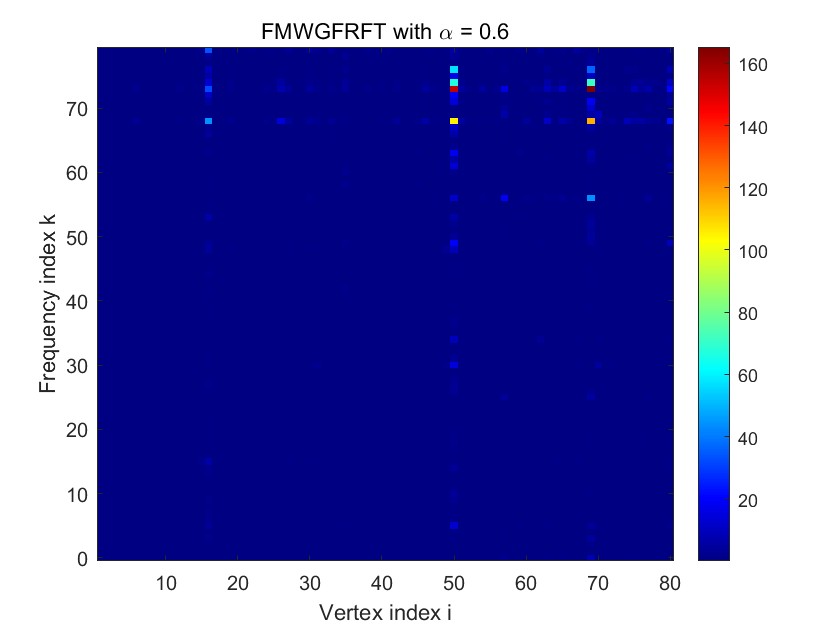}} \hfill
	\subfloat[Dual SMWGFRFT ]{\includegraphics[width=0.3\textwidth]{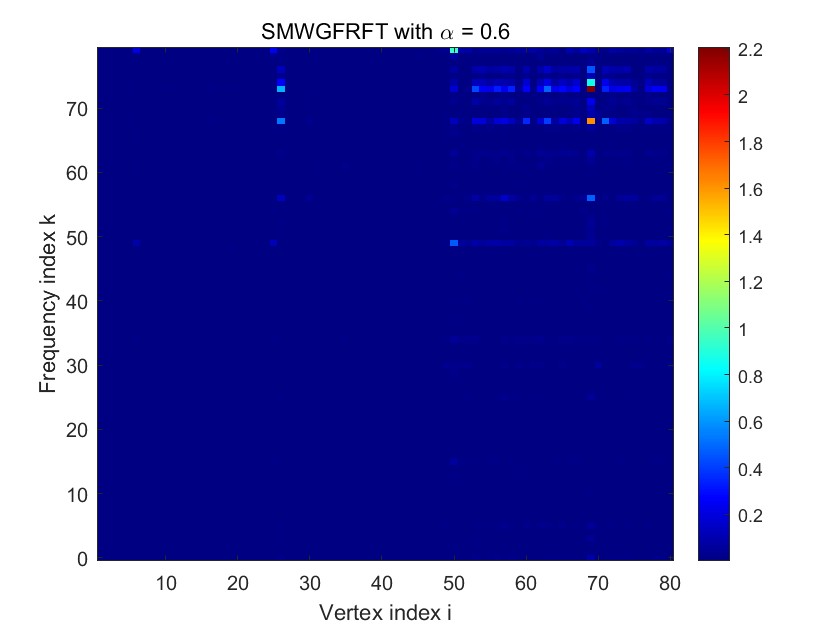}} \hfill
	\caption{The spectrograms generated by dual FMWGFRFT and dual SMWGFRFT with $\alpha=0.6$.}
	\label{figs8}
\end{figure*}

Thirdly, we consider a random ring graph with 50 vertices and the signal $f_{13}$ in Fig.~\ref{figs9}.
The FMWGFRFF window functions is derived from B-spline $\mathbf{N}_{2}$ by Section \ref{section7A}.
The SMWGFRFF window functions is given by the Householder matrix.
By Section \ref{section7B}(ii), the Householder matrix $\mathbf{H}$ is obtained, where
$$\left.\mathbf{v}(n)=\left\{\begin{array}{ll}e^{-0.5(n-1)}&\quad1\leq n\leq10,\\0&\quad n\geq11.\end{array}\right.\right.$$
Fig.~\ref{figs9} demonstrates that both tight FMWGFRFT and Householder tight SMWGFRFT are capable of extracting vertex-frequency features.

\begin{figure*}[!t]
	\centering
	\subfloat[ a graph signal $f_{13}$
	]{\includegraphics[width=0.32\textwidth]{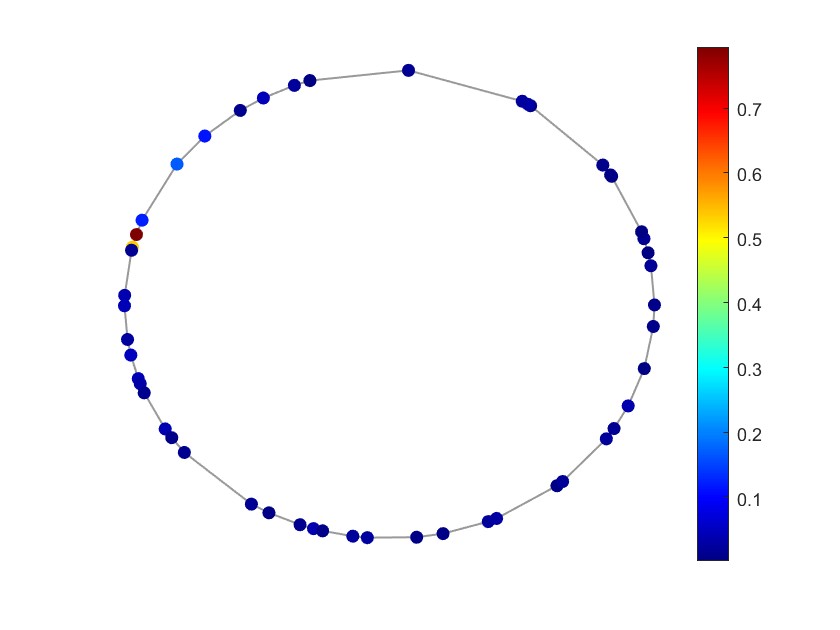}} \hfill
	\subfloat[TFMWGFRFT 
	]{\includegraphics[width=0.32\textwidth]{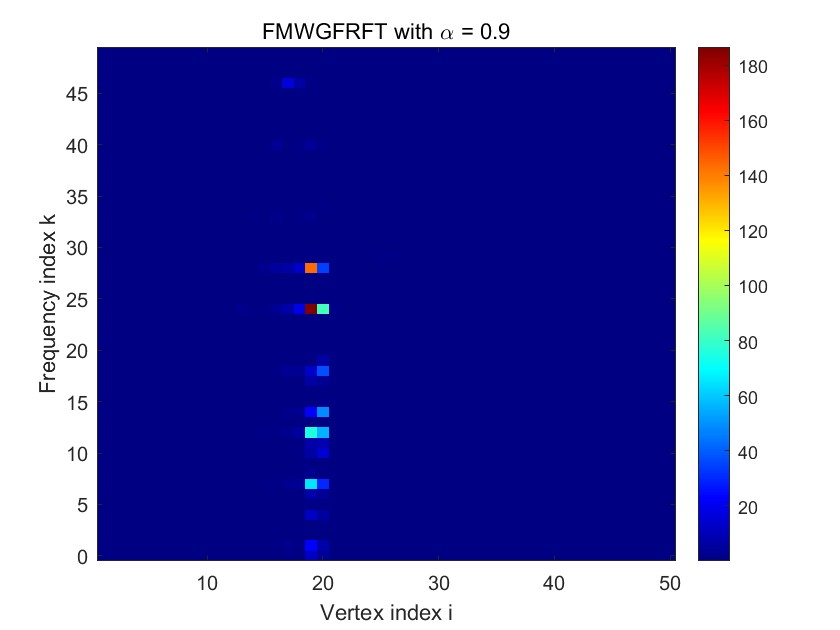}} \hfill
	\subfloat[ Householder TSMWGFRFT
	]{\includegraphics[width=0.32\textwidth]{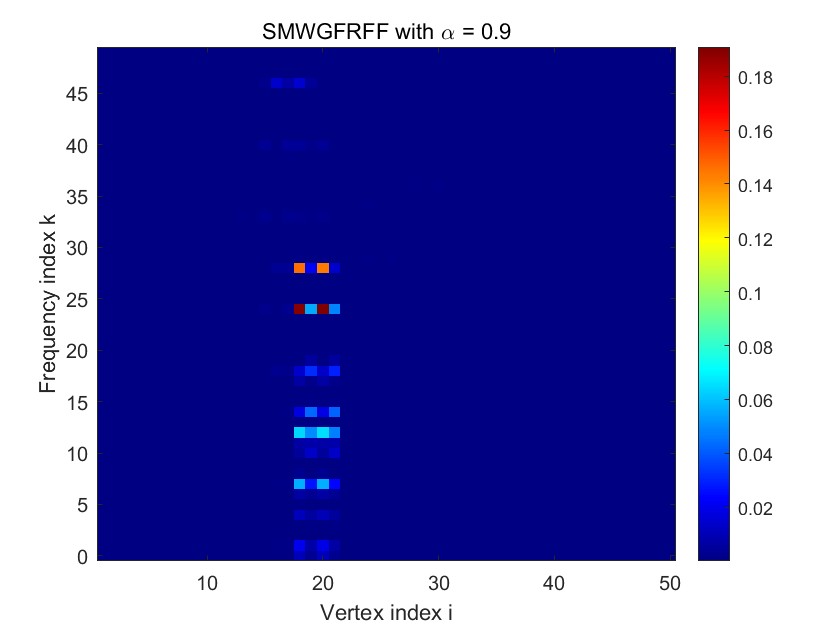}}
	\caption{The spectrograms generated by tight FMWGFRFT and Householder tight SMWGFRFT with $\alpha=0.9$.}
	\label{figs9}
\end{figure*}

\section{ Applications: anomaly detection}\label{section9}

In graph fractional Fourier domain, WGFT \cite{Shuman2016acha}, MWGFT \cite{Zheng2021cc}, WGFRFT \cite{Yan2021dsp}, SMWGFT, SMWGFRFT and FMWGFRFT are applied to detect anomaly signals.
The test involves an anomaly signal $f_{14}$ in Fig.\ref{figsav10}(a), characterized by two anomalous deep red vertices, on a community graph with 20 vertices.
The spectrograms are computed by using WGFT \cite{Shuman2016acha}, MWGFT \cite{Zheng2021cc}, WGFRFT \cite{Yan2021dsp}, SMWGFT, SMWGFRFT and FMWGFRFT.
Taking maximum of $\mathbf{S}$ with respect to the vertices, the spectrogram coefficient threshold is defined as $\delta =  \frac{1}{2}\max(\mathbf{S})$, where $\mathbf{S}$ represents the spectrogram coefficient.
Vertices with maximum spectrogram coefficients exceeding $\delta$ are highlighted in red.
In Fig.\ref{figsav10}(b), the WGFT spectrogram detects all vertices as anomalies.
In Fig.\ref{figsav10}(c), the MWGFT spectrogram identifies all anomalous vertices but incorrectly classifies one additional vertex as anomalous.
In Fig.~\ref{figsav10}(d), the WGFRFT spectrogram fails to detect the anomalous vertices.
In Fig.\ref{figsav10}(e), the SMWGFT spectrogram also fails to detect the anomalous vertices.
In Fig.\ref{figsav10}(f), the SMWGFRFT spectrogram identifies one anomaly vertex but incorrectly classifies too many vertices as anomalous.
In Fig.~\ref{figsav10}(g), the FMWGFRFT spectrogram successfully detects all anomalous vertices.
This example highlights the effectiveness of FMWGFRFT in accurately identifying the vertices associated with the anomaly signals.

\begin{figure*}[!t]
	\centering
	\subfloat[ ]{\includegraphics[width=0.23\linewidth]{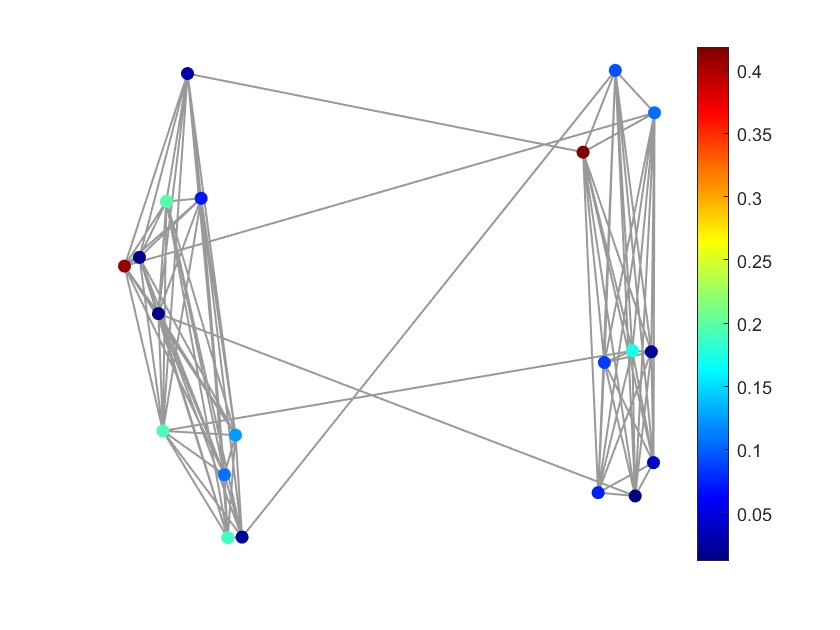}} \hfill
	\subfloat[ ]{\includegraphics[width=0.23\linewidth]{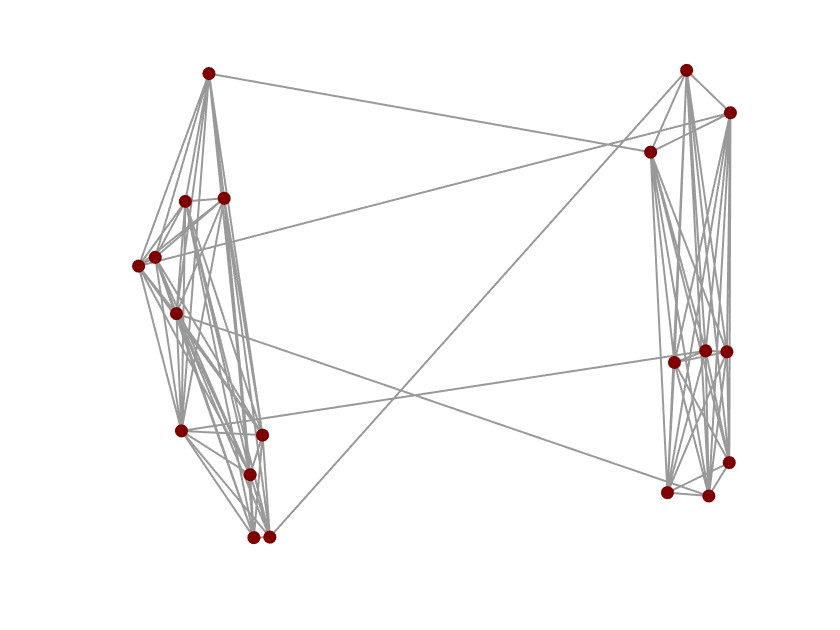}} \hfill
	\subfloat[ ]{\includegraphics[width=0.23\linewidth]{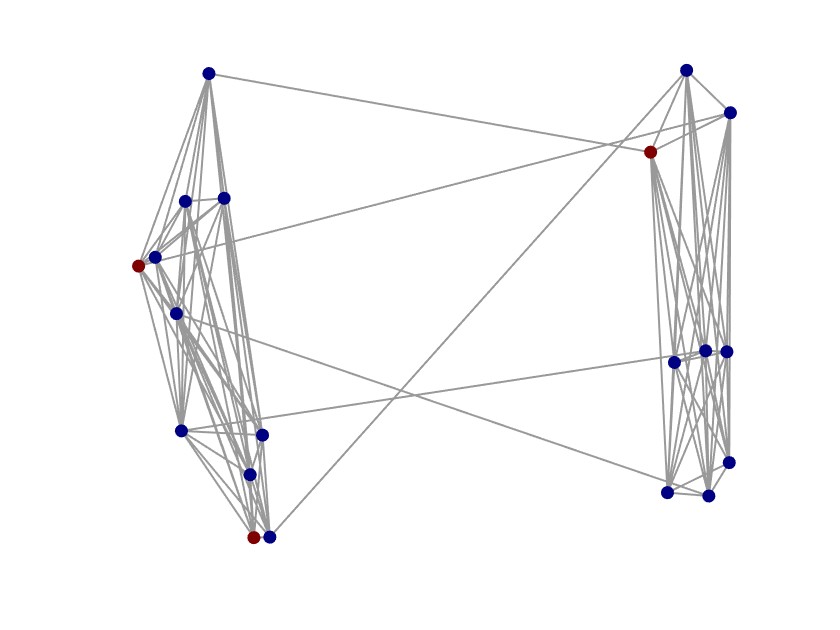}} \hfill
	\subfloat[ ]{\includegraphics[width=0.23\linewidth]{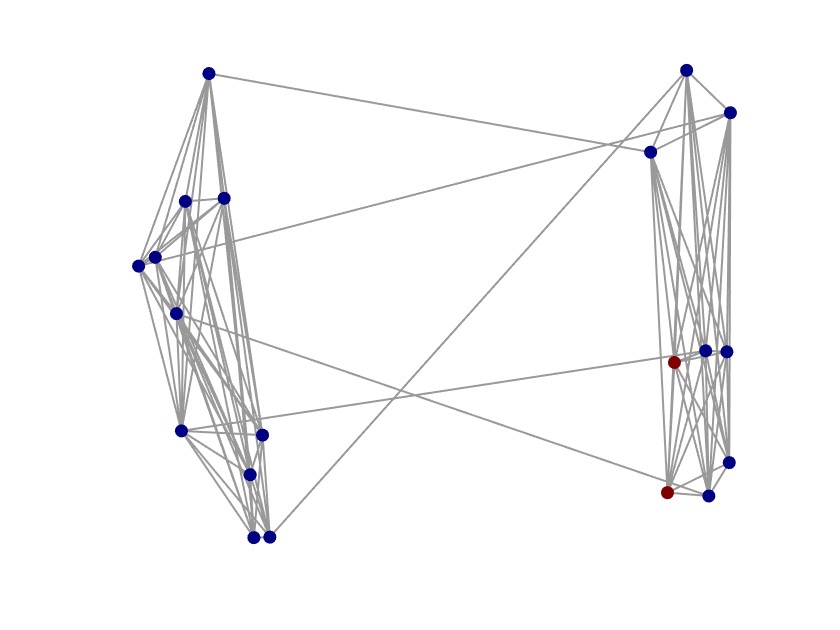}} \\
	\vspace{0.2cm}
	\hspace{0.4cm}\hspace{0.23\linewidth}
	\subfloat[ ]{\includegraphics[width=0.23\linewidth]{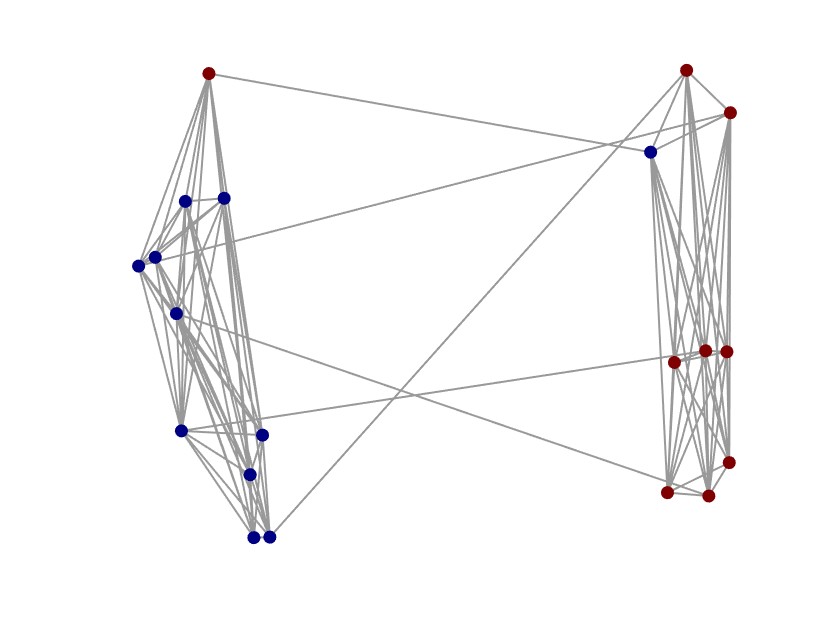}} \hfill
	\subfloat[ ]{\includegraphics[width=0.23\linewidth]{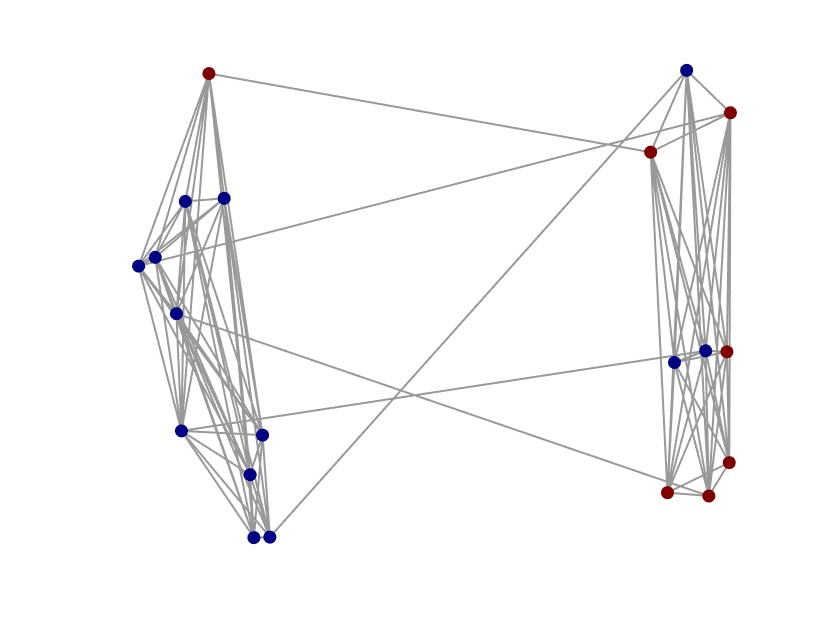}} \hfill
	\subfloat[ ]{\includegraphics[width=0.23\linewidth]{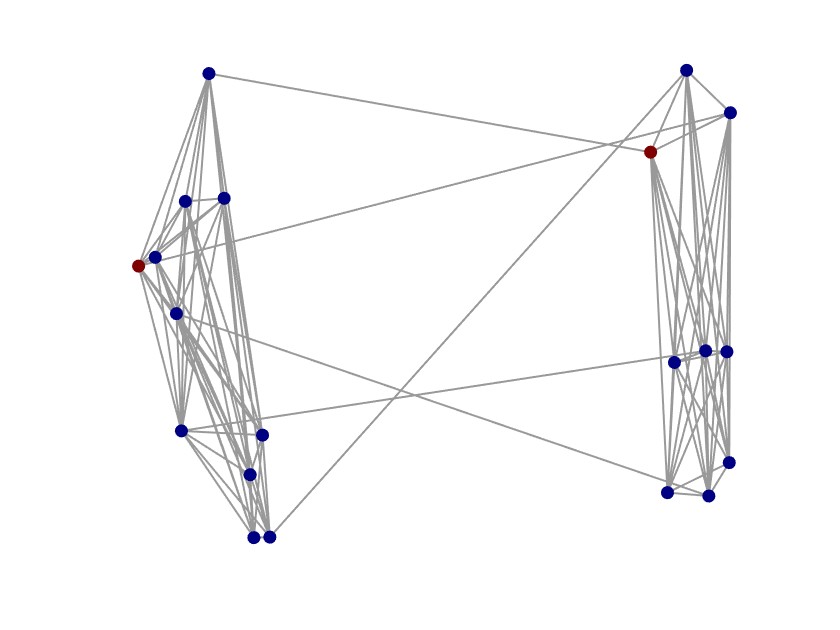}} \hfill \\
	\caption{ 
	(a) An anomaly signal $f_{14}$; 
	(b) Anomaly vertices detected by the WGFT spectrogram on $f_{14}$;
	(c) Anomaly vertices detected by the MWGFT spectrogram on $f_{14}$;
	(d) Anomaly vertices detected by the WGFRFT spectrogram on $f_{14}$;
	(e) Anomaly vertices detected by the SMWGFT spectrogram on $f_{14}$;
	(f) Anomaly vertices detected by the SMWGFRFT spectrogram on $f_{14}$;
	(g) Anomaly vertices detected by the FMWGFRFT spectrogram on $f_{14}$
	(fractional order $\alpha = 0.6$).
	}
	\label{figsav10}
\end{figure*}

\section{Conclusion}\label{section10}

By performing a comparative analysis of vertex-frequency feature extraction among WGFT \cite{Shuman2016acha}, WGFRFT \cite{Yan2021dsp}, MWGFT \cite{Zheng2021cc}, and MWGFRFT, it becomes evident that MWGFRFT offers significant advantages in vertex-frequency representation in the graph fractional Fourier domain for given window functions.
The performance of MWGFRFT and FMWGFRFT is compared with regard to vertex-frequency representation, normalized mean squared error, and time cost. The experimental results demonstrate the robustness and effectiveness of the FMWGFRFF algorithm.
Experiments with SMWGFRFT and FMWGFRFT on different graphs demonstrate that SMWGFRFT and FMWGFRFT effectively extract vertex-frequency features.
Furthermore, SMWGFRFT and FMWGFRFT are mutually complementary in identifying vertex-frequency features and are most effective when utilized in tandem.
Additionally, dual and tight frame window functions have demonstrated effectiveness in extracting vertex-frequency features.
In the graph fractional domain, FMWGFRFT demonstrates effectiveness in accurately detecting anomaly signals.

%\balance
\appendices

%\newpage
\section{Proof of Theorem \ref{theorem1}}\label{atheorem1}
For any $\mathbf{f}\in\mathbb{C}^{N}$,
\begin{align}  \label{eqthp1}
	&\sum_{l=1}^{L}\sum_{i=1}^{N}\sum_{k=0}^{N-1}|\langle\mathbf{f},\mathbf{g}_{i,k}^{(l\alpha)}\rangle|^{2}\nonumber\\
	&=N^\alpha\sum_{l=1}^{L}\sum_{i=1}^{N}\sum_{k=0}^{N-1}|\langle\mathbf{f}\circ(T_i^\alpha \mathbf{g}_l)^{*},\gamma_{k}\rangle|^{2}\nonumber\\
	&=N^\alpha\sum_{l=1}^{L}\sum_{i=1}^{N}\sum_{n=1}^{N}|\mathbf{f}(n)|^{2}|(T_{n}^\alpha\mathbf{g}_{l})(i)|^{2}\nonumber\\
	&=N^\alpha\sum_{n=1}^{N}|\mathbf{f}(n)|^{2}\sum_{l=1}^{L}\|T_{n}^\alpha\mathbf{g}_{l}\|_{2}^{2},
\end{align}
where \eqref{eqthp1} follows from Parseval relation, the symmetry of $\mathcal{L}_\alpha$ and the definition of
operator $T_{i}^\alpha$. 
In addition, if $\sum\limits_{l=1}^{L}\left|\hat{\mathbf{g}}_{l\alpha}(0)\right|^{2}\neq0$, we have
$$\begin{aligned}\sum_{l=1}^{L}\|T_{n}^\alpha\mathbf{g}_{l}\|_{2}^{2}
	&=N^\alpha\sum_{p=0}^{N-1}\sum_{l=1}^{L}|\hat{\mathbf{g}}_{l\alpha}(r_p)|^{2}|\gamma_{p}(n)|^{2}\\
	&\geq\sum_{l=1}^{L}\left|\hat{\mathbf{g}}_{l\alpha}(0)\right|^{2}>0,
\end{aligned}$$
taking the minimum and maximum of 
$\sum_{l=1}^{L}\|T_{n}^\alpha\mathbf{g}_{l}\|_{2}^{2}$, 
then we have the lower frame bound $A>0.$ 
Thus, $\mathcal{ G}_L^{w}$ is a frame with lower and upper frame bounds defined in \eqref{lfb1} and \eqref{lfb2}.

\section{Proof of Corollary \ref{cor32}}  \label{acor32}
From \eqref{eqsfc1}, the frame operator of $\mathcal{G}_{l\alpha}^{w}$ can be written as the diagonal matrix
$S=\mathbf{D}_{c}=diag(\mathbf{c}),$
with $c_n=N^\alpha\sum_{l=1}^{L}\|T_{n}^\alpha\mathbf{g}_{l}\|_{2}^{2}=C$ for $n= 1, 2,\ldots,N$.
Note that the optimal lower and upper frame bounds of a frame are the smallest and largest eigenvalues of the frame operator, respectively \cite{Casazza2012Springer},
the frame bounds of $\mathcal{G}_{l\alpha}^w$ are given by the smallest and largest entries in $\mathbf{c}$. 
Therefore, $\mathcal{G}_{l\alpha}^w$ is a tight frame if and only if $\mathbf{c}$ is a constant vector; that is, there exists a constant $C$ such that $c_n = C$ for $n = 1, 2, \dots, N$.

\section{Proof of Corollary \ref{cor33}}  \label{acor33}
In Appendix \ref{atheorem1}, for any $\mathbf{f}\in\mathbb{C}^N$,
\begin{align}\label{cor33pf}
	&\sum_{l=1}^{L}\sum_{i=1}^{N}\sum_{k=0}^{N-1}|\langle\mathbf{f},\mathbf{g}_{i,k}^{(l\alpha)}\rangle|^{2}=N^\alpha\sum_{n=1}^{N}|\mathbf{f}(n)|^{2}\sum_{l=1}^{L}\|T_{n}^\alpha\mathbf{g}_{l}\|_{2}^{2}\nonumber\\
	&=N^{2\alpha}\sum_{n=1}^{N}|\mathbf{f}(n)|^{2}\sum_{p=0}^{N-1}\sum_{l=1}^{L}|\hat{\mathbf{g}}_{l\alpha}(r_p)|^{2}|\gamma_{p}(n)|^{2}.
\end{align}
Since the eigen-matrix $\gamma$ of $\mathcal{L}_\alpha$ is orthogonal, 
we have $\sum_{p=0}^{N-1}|\gamma_p(n)|^2=1$, for $n=1,2,...,N$.
If $\sum_{l=1}^{L}\left|\hat{\mathbf{g}}_{l\alpha}(r_{p})\right|^{2}=C$
for $p= 0, 1, \ldots , N- 1$, 
from \eqref{cor33pf}, we have
$$\sum_{l=1}^{L}\sum_{i=1}^{N}\sum_{k=0}^{N-1}|\langle\mathbf{f},\mathbf{g}_{i,k}^{(l\alpha)}\rangle|^{2}=N^{2\alpha}C\|\mathbf{f}\|_{2}^{2}.$$
Thus, $\mathcal{G}_{l\alpha}^{w}$ is a tight frame with frame bounds $A= B=N^{2\alpha}C.$ 

\section{Proof of Proposition \ref{ledsvf1}}\label{aledsvf1}
The proofs of FMWGFRFT, MWGFRFT, MWGFT\cite{Zheng2021cc}, WGFRFT\cite{Yan2021dsp} and WGFT\cite{Shuman2016acha} are analogous. 
For the sake of convenience, we will only provide the proof for FMWGFRFT.
Note that the FMWGFRFT matrix
$$FWf= N^\alpha\sum_{l=1}^L \gamma*\left(
(\Psi^{l\alpha})^*\circ \left((F\circ\gamma^H)*\gamma\right)\right)^{.T}.$$
As we all know, when $\alpha$ tends to 0, the graph fractional Fourier basis matrix $\gamma$ tends to the identity matrix, i.e., $\lim\limits_{{\alpha \to 0}} \gamma = I$.
Then we have
\begin{align}
	&\lim\limits_{{\alpha \to 0}}FWf   
	=\lim\limits_{{\alpha \to 0}}   \begin{pmatrix}
		\sum\limits_{l=1}^L\mathbf{f}(1)\hat{\mathbf{g}}_{l\alpha}(r_0)&0\\
		~~~~~~~~~~~~~~~~~~\ddots\\
		0&\sum\limits_{l=1}^L\mathbf{f}(N)\hat{\mathbf{g}}_{l\alpha}(r_{N-1})
	\end{pmatrix}      \nonumber  
\end{align}
We know that as $\alpha$ tends to 0, the vertex-frequency representation of FMWGFRFT is clustered on the main diagonal, i.e., all values outside the main diagonal equal to 0.

\section{Proof of Theorem \ref{theorem2}}  \label{atheorem2}
Suppose that
$\sum_{l=1}^{L}\mathbf{\hat{g}}_{l\alpha}^{*}(r_{p})\mathbf{\hat{\tilde{g}}}_{l\alpha}(r_{p})=\mu$ for $p=0,1,\ldots,N-1$. 
Then we have
$$\begin{aligned}
	&\sum_{l=1}^{L}\sum_{i=1}^{N}\sum_{k=0}^{N-1}\langle \mathbf{f},{\mathbf{g}}_{i,k}^{(l\alpha)}\rangle\tilde{\mathbf{g}}_{i,k}^{(l\alpha)}\\
	&=N^{2\alpha}\sum_{m=1}^{N}\mathbf{f}(m)\sum_{l=1}^{L}\sum_{p=0}^{N-1}\sum_{p^{\prime}=0}^{N-1}\mathbf{\hat{g}}_{l\alpha}^{*}(r_{p})\mathbf{\hat{\tilde{g}}}_{l\alpha}(r_{p^{\prime}})\gamma_{p}^{*}(m)\gamma_{p^{\prime}}(n)\cdot \\
	&\sum_{i=1}^{N}\gamma_{p}(i)\gamma_{p^{\prime}}^{*}(i)\sum_{k=0}^{N-1}\gamma_{k}^{*}(m)\gamma_{k}(n)\\
\end{aligned}$$
$$\begin{aligned}
	&=N^{2\alpha}\sum_{m=1}^{N}\mathbf{f}(m)\sum_{l=1}^{L}\sum_{p=0}^{N-1}\sum_{p^{\prime}=0}^{N-1}\mathbf{\hat{g}}_{l\alpha}^{*}(r_{p})\mathbf{\hat{\tilde{g}}}_{l\alpha}(r_{p^{\prime}})\gamma_{p}^{*}(m)\gamma_{p^{\prime}}(n)\cdot \\
	&\delta_{pp^{\prime}}\delta_{mn} \\
	&=N^{2\alpha}\mu\mathbf{f}(n). 
\end{aligned}$$
Thus, $\tilde{\mathcal{G}}_{l\alpha}^{w}$ is a dual of ${\mathcal{G}}_{l\alpha}^{w}$ with $C=\frac{1}{ N^{2\alpha}\mu}$ in \eqref{eq51}.

\section{Proof of Corollary \ref{cor42}}  \label{acor42}
If $\sum_{l=1}^{L}\mathbf{\hat{g}}_{l\alpha}^{*}(r_{p})\mathbf{\hat{\tilde{g}}}_{l\alpha}(r_{p})=\mu$ for $p=0,1,\ldots,N-1$,
by the Cauchy-Schwartz inequality, we have
$$\mu^{2}
=|\sum_{l=1}^{L}\mathbf{\hat{g}}_{l\alpha}^{*}(r_{p})\mathbf{\hat{\tilde{g}}}_{l\alpha}(r_{p})|^{2}
\leq\sum_{l=1}^{L}|\mathbf{\hat{g}}_{l\alpha}^{*}(r_{p})|^{2}\sum_{l=1}^{L}|\mathbf{\hat{\tilde{g}}}_{l\alpha}(r_{p})|^{2}.$$
Then $\sum_{l=1}^{L}|\mathbf{\hat{g}}_{l\alpha}^{*}(r_{p})|^{2} \neq 0$
and $\sum_{l=1}^{L}|\mathbf{\hat{\tilde{g}}}_{l\alpha}(r_{p})|^{2}\neq 0$ for $p=0,1,\ldots,N-1$. 
By Theorem \ref{theorem1},we have ${\mathcal{\tilde{G}}}_{l\alpha}^{w}$ is also a multi-windowed graph fractional Fourier frame.

\section{Proof of Corollary \ref{cor43}}  \label{acor43}
Suppose that $S$ is the frame operator of ${\mathcal{G}}_{l\alpha}^{w}$. 
In the proof of Corollary \ref{cor32}, we have that
$S=\mathbf{D}_{c}=\mathrm{diag}(\mathbf{c}).$
According to the definitions of frame and operator, the canonical frame of ${\mathcal{\tilde{G}}}_{l\alpha}^{w}$ is then given by
$$S^{-1}\mathbf{g}_{i,k}^{(l\alpha)}
=\mathbf{D}_{c}^{-1}\mathbf{g}_{i,k}^{(l\alpha)}
=\mathbf{D}_{d}\mathbf{g}_{i,k}^{(l\alpha)}
=\mathbf{d}\circ\mathbf{g}_{i,k}^{(l\alpha)}.$$

\section{Proof of Theorem \ref{theorem3}}   \label{atheorem3}
We have $$\mathbf{g}_{i,l}=\mathbf{D}_{i}S^\alpha\mathbf{g}_{l},$$ 
where $\mathbf{D}_i=diag(\gamma_{1,i},\gamma_{2,i},\cdots,\gamma_{N,i})$.
Let
$T_l=(\mathbf{D}_1S^\alpha\mathbf{g}_l,\mathbf{D}_2S^\alpha\mathbf{g}_l\cdots,\mathbf{D}_NS^\alpha\mathbf{g}_l)$,
the corresponding synthesis operator of $\mathcal{G}_{l\alpha}^{s}$ can be expressed as
$T=(T_1,T_2\cdots,T_L)$.
We know that $\mathcal{G}_{l\alpha}^{s}$ forms a frame if and only if its frame operator is positive definite. 
In fact, the frame operator of $\mathcal{G}_{l\alpha}^{s}$ can be expressed as
$$	S
	=\sum_{l=1}^L\sum_{i=1}^N[S^\alpha \mathbf{g}_l\mathbf{g}_l^*S^{\alpha*}]\circ(\gamma_i\gamma_i^*)
	=\sum_{l=1}^{M}\bigl[\bigl(S^\alpha \mathbf{g}_l\bigr)\bigl(S^\alpha \mathbf{g}_l\bigr)^{*}\bigr]\circ\mathbf{I}_{N}. $$
Here $\mathbf{I}_N$ is an identity matrix.
Let $\mathbf{G}= \sum _{i= 1}^{L}\mathbf{g} _{i}\mathbf{g} _{i}^{* }$,
we have
$$\sum_{l=1}^L(S^\alpha \mathbf{g}_l\bigr)\bigl(S^\alpha \mathbf{g}_l\bigr)^{*}
=S^\alpha\mathbf{G}S^{\alpha*}.$$

Let $\tilde{\mathbf{s}}_k$ is the $k$th row of matrix $S^\alpha$, 
the $k$th diagonal entry of $S^\alpha\mathbf{G}S^{\alpha*}$ can be written as $c_k:=\tilde{\mathbf{s}}_k\mathbf{G}\tilde{\mathbf{s}}_k^*$.
Therefore, $\mathcal{G}_{l\alpha}^{s}$ is a frame if and only if $c_k>0$ for $k=1,\ldots,N$.
The frame bounds of $\mathcal{G}_{l\alpha}^{s}$ are given by \eqref{eqth32} which are the smallest and largest elements of $\mathbf{c}$. 

\section{Proof of Corollary \ref{cor53}}  \label{acor53}
By Theorem \ref{theorem3}, the frame bounds of $\mathcal{G}_{l\alpha}^{s}$ are given by the smallest and largest elements in $\mathbf{c}$, with $c_k=\tilde{\mathbf{s}}_k\mathbf{G}\tilde{\mathbf{s}}_k^*$.
Hence, $\mathcal{G}_{l\alpha}^{s}$ is a tight frame is then equivalent to the condition that $\tilde{\mathbf{s}}_k\mathbf{G}\tilde{\mathbf{s}}_k^*=C$ for
$k= 1, \ldots, N$.

\balance

\bibliographystyle{IEEEtran}
\bibliography{main}

\end{document}